\newcommand{\BS}{\widehat}
\newcommand{\cQ}{\overline{Q}}
\newcommand{\cE}{\overline{E}}
\newcommand{\rE}{\widetilde{E}}
\newcommand{\hE}{\widehat{E}}
\newcommand{\cTt}{\overline{\cal T}}
\newcommand{\hTt}{\widehat{\cal T}}
\newcommand{\rTt}{\widetilde{\cal T}}
\newcommand{\rGg}{\widetilde{\Gamma}}
\newcommand{\hGg}{\widehat{\Gamma}}
\newcommand{\cGg}{\overline{\Gamma}}
\newcommand{\hSigmaMax}{\widehat{\Sigma}_\mMAX}
\newcommand{\hSigmaMin}{\widehat{\Sigma}_\mMIN}
\newcommand{\cSigmaMax}{\overline{\Sigma}_\mMAX}
\newcommand{\cSigmaMin}{\overline{\Sigma}_\mMIN}
\newcommand{\rSigmaMax}{\widetilde{\Sigma}_\mMAX}
\newcommand{\rSigmaMin}{\widetilde{\Sigma}_\mMIN}
\newcommand{\SigmaMax}{\Sigma_\mMAX}
\newcommand{\SigmaMin}{\Sigma_\mMIN}
\newcommand{\simpleMax}{\Xi_\mMAX}
\newcommand{\simpleMin}{\Xi_\mMIN}
\newcommand{\RegA}{\textsf{RA}}
\newcommand{\CorA}{\textsf{CP}}
\newcommand{\set}[1]{\{ #1 \}}
\newcommand{\Set}[1]{\big\{ #1 \big\}}
\newcommand{\eset}[1]{\{ \: #1 \: \}}
\newcommand{\seq}[1]{\langle #1 \rangle}
\newcommand{\Rplus}{{\mathbb R}_{\oplus}} 
\newcommand{\Npos}{\mathbb N_{+}}
\newcommand{\Nat}{\mathbb N}
\newcommand{\Real}{\mathbb R}
\newcommand{\Int}{\mathbb{Z}}
\newcommand{\aA}{\mathbb{A}}
\newcommand{\Aa}{{\cal A}}
\newcommand{\Gg}{{\cal G}}
\newcommand{\Mm}{{\cal M}}
\newcommand{\Rr}{{\cal R}}
\newcommand{\Tt}{{\cal T}}
\newcommand{\Vv}{{\cal V}}
\newcommand{\CC}{\textsf{SCC}}
\newcommand{\RESET}{\textsf{reset}}
\newcommand{\FRUNS}{\text{Runs}_{\text{fin}}}
\newcommand{\RUNS}{\text{Runs}}
\newcommand{\LAST}{\textsf{last}}
\newcommand{\TYPES}{\text{Types}}
\newcommand{\FTYPES}{\text{Types}_{\text{fin}}}
\newcommand{\CLOSOP}{\textsf{clos}}
\newcommand{\LENGTH}{\textsf{length}}
\newcommand{\VAL}{\textsf{val}}
\newcommand{\UVAL}{\overline{\textsf{val}}}
\newcommand{\LVAL}{\underline{\textsf{val}}}
\newcommand{\SUCC}{\textsf{succ}}
\newcommand{\RUN}{\textsf{run}}
\newcommand{\FLOOR}[1]{\lfloor #1 \rfloor}
\newcommand{\TIME}{\textsf{time}}
\newcommand{\THIN}{\text{Thin}}
\newcommand{\THICK}{\text{Thick}}
\newcommand{\mMIN}{\text{Min}}
\newcommand{\mMAX}{\text{Max}}
\newcommand{\NATS}[1]{\llparenthesis #1 \rrparenthesis_\Nat}
\newcommand{\REALS}[1]{\llparenthesis #1 \rrparenthesis_\Real}
\newcommand{\TYPE}[1]{\llbracket #1 \rrbracket_\Rr}
\newcommand{\FRAC}[1]{\lbag #1 \rbag}
\newcommand{\BOUNDOP}{\textsf{bd}}
\newcommand{\pQ}{\Omega}
\newcommand{\cS}{\overline{S}}
\newcommand{\hS}{\widehat{S}}
\newcommand{\rS}{\widetilde{S}}
\newcommand{\PRERUNS}{\textsf{PreRuns}}
\newcommand{\FPRERUNS}{\textsf{PreRuns}_{\text{fin}}}
\newcommand{\pSigmaMax}{\Sigma^{\text{pre}}_\mMAX}
\newcommand{\pSigmaMin}{\Sigma^{\text{pre}}_\mMIN}
\newcommand{\AVERAGETIME}{\Aa}
\newcommand{\cchi}{X}
\newcommand{\mmu}{M}
\newcommand{\tpq}[2]{#1^{\downarrow #2}}
\newcommand{\tpqs}[3]{#1^{(#2, #3)}}
\newcommand{\EcMin}[2]{\rSigmaMin^{(#1, #2)}}
\newcommand{\EcMax}[2]{\rSigmaMax^{(#1, #2)}}
\newcommand{\Ff}{\mathcal{F}}
\title{Average-Time Games on Timed Automata}
\date{\today}
\author{Marcin Jurdzi{\'n}ski~\inst{1},
  and Ashutosh Trivedi~\inst{2}}
\institute{Department of Computer Science,University of Warwick, UK \\
\and Computing Laboratory, University of Oxford, UK.}
\begin{document}

\maketitle

\begin{abstract}
  An average-time game is played on the infinite graph of
  configurations of a finite timed automaton.
  The two players, Min and Max, construct an infinite run of the
  automaton by taking turns to perform a timed transition.
  Player Min wants to minimise the average time per transition and
  player Max wants to maximise it.
  A solution of average-time games is presented using a reduction to
  average-price game on a finite graph.
  A direct consequence is an elementary proof of determinacy for 
  average-time games.
  This complements our results for reachability-time games and 
  partially solves a problem posed by Bouyer et al., to design an
  algorithm for solving average-price games on priced timed
  automata. 
  The paper also establishes the exact computational complexity of
  solving average-time games: the problem is EXPTIME-complete for
  timed automata with at least two clocks.
\end{abstract}

\section{Introduction}

\emph{Real-time open systems} are computational systems that interact
with environment and whose correctness depends critically on the time
at which they perform some of their actions. 
The problem of design and verification of such systems can be
formulated as \emph{two-player zero-sum games}. 
A heart pacemaker is an example of a real-time open system as it
interacts with the environment (heart, body movements, and breathing)
and its correctness depends critically on the time at which it
performs some of its actions (sending pace signals to the heart in
real time).  
Other examples of safety-critical real-time open systems include
nuclear reactor protective systems, industrial process controllers, 
aircraft-landing scheduling systems, satellite-launching systems,
etc. 
Designing correct real-time systems is of paramount importance.  
Timed automata~\cite{AD94} are a popular and well-established
formalism for modelling real-time systems, and games on timed automata 
can be used to model real-time open systems.
In this paper, we introduce \emph{average-time games} which
model the interaction between the real-time open system and the
environment; and we are interested in finding a strategy of the
system which results in minimum average-time per transition,
assuming adversarial environment.

\noindent{\bf Related Work.}
Games with quantitative payoffs can be studied as a model for
optimal-controller synthesis~\cite{AM99,ABM04,BCFL04}.  
Among various quantitative payoffs the average-price
payoff~\cite{Gil57,FV97} is the most well-studied in game theory, 
Markov decision processes, and planning literature~\cite{FV97,Put94},
and it has numerous appealing interpretations in applications.  
Most algorithms for solving Markov decision processes~\cite{Put94} or
games with average-price payoff work for finite graphs
only~\cite{ZP96,FV97}. 
Asarin and Maler~\cite{AM99} presented the first algorithm for games
on timed automata (timed games) with a quantitative payoff:
reachability-time payoff.  
Their work was later generalised by Alur et al.~\cite{ABM04} and
Bouyer et al.~\cite{BCFL04} to give partial decidability results for
reachability-price games on linearly-priced timed automata.
The exact computational complexity of deciding the value in timed
games with reachability-time payoff was shown to be EXPTIME
in~\cite{JT07,BHPR07}.     
Bouyer et al.~\cite{BBL04} also studied the more difficult
average-price payoffs, but only in the context of scheduling, which in  
game-theoretic terminology corresponds to 1-player games.
They left open the problem of proving decidability of 2-player 
average-reward games on linearly-priced timed automata.
We have recently extended the results of Bouyer et al.\ to solve 
1-player games on more general concavely-priced timed
automata~\cite{JT08}. 
In this paper we address the important and non-trivial special case of
average-time games (i.e., all locations have unit costs), which was
also left open by Bouyer et al.

\noindent{\bf Our Contributions.}
Average-time games on timed automata are introduced. 
This paper gives an elementary proof of determinacy for these games. 
A new type of region~\cite{AD94} based abstraction---boundary
region graph---is defined, which generalises the corner-point
abstraction of Bouyer et al.~\cite{BBL04}.
Our solution allows computing the value of average-time games for an
arbitrary starting state (i.e., including non-corner states).
Finally, we establish the exact complexity of solving average-time 
games: the problem is EXPTIME-complete for timed automata with at
least two clocks.

\noindent{\bf Organisation of the Paper.} 
In Section~\ref{section:average-payoff-games} we discuss
average-price games (also known as mean-payoff games) on finite
graphs and cite some important results for these games.  
In Section~\ref{section:average-time-games} we introduce average-time
games on timed automata.
In Section~\ref{section:some-useful-abstractions} we introduce some
region-based abstractions of timed automata, including the closed
region graph, and its subgraphs: the boundary region graph, and the
region graph.
While the region graph is semantically equivalent to the corresponding 
timed automaton, the boundary region graph has the property that for 
every starting state, the reachable state space is finite. 
We introduce average-time games on these graphs and in
Section~\ref{section:atg-on-region-graphs} we show that if we
have the solution of the average-time game for any of these graphs,
then we get the solution of the average-time game for the
corresponding timed automaton.
Finally, in Section~\ref{section:complexity} we discuss the
computational complexity of solving average-time games. 

\noindent{\bf Notations.}
We assume that, wherever appropriate, sets $\Int$ of integers, 
$\Nat$ of non-negative integers and $\Real$ of reals contain a maximum
element $\infty$, and we write $\Npos$ for the set of positive
integers and $\Rplus$ for the set of non-negative
reals.   
For $n \in \Nat$, we write $\NATS{n}$ for the set 
$\set{0, 1, \dots, n}$, and $\REALS{n}$ for the set 
$\set{r \in \Real \: : \: 0 \leq r \leq n}$ of non-negative reals 
bounded by~$n$.
For a real number $r \in \Real$, we write $|r|$ for its absolute
value, we write $\FLOOR{r}$ for its integer
part, i.e., the largest integer $n \in \Nat$, such that $n \leq r$,
and we write $\FRAC{r}$ for its fractional part, i.e., we have
$\FRAC{r} = r - \FLOOR{r}$.

\section{Average-Price Games}
\label{section:average-payoff-games}

A (perfect-information) two-player 
\emph{average-price game}~\cite{ZP96,FV97} (also known as mean-payofff
game) $\Gamma = (V, E, V_\mMAX, V_\mMIN, p)$ consists of a
finite directed graph $(V, E)$, a partition $V = V_\mMAX \cup V_\mMIN$
of vertices, and a \emph{price function} $\pi : E \to \Int$.  
A play starts at a vertex $v_0 \in V$.
If $v_0 \in V_p$, for $p \in \eset{\mMAX, \mMIN}$, then player~$p$
chooses a successor of the current vertex $v_0$, i.e., a vertex $v_1$,
such that $(v_0, v_1) \in E$, and $v_1$ becomes the new current
vertex.  
When this happens then we say that player~$p$ has made a move from the
current vertex.
Players keep making moves in this way indefinitely, thus forming
an infinite path $r = (v_0, v_1, v_2, \dots)$ in the game graph.  
The goal of player Min is to minimise 
$\Aa_\mMIN(r) = 
\limsup_{n \to \infty} (1/n)\cdot \sum_{i=1}^{n} \pi(v_{i-1}, v_i)$
and the goal of player Max is to maximise 
$\Aa_\mMAX(r) = 
\liminf_{n \to \infty} (1/n)\cdot \sum_{i=1}^{n} \pi(v_{i-1}, v_i)$.

Strategies for players are defined as usual~\cite{ZP96,FV97}. 
We write $\Sigma_\mMIN$ ($\Sigma_\mMAX$) for the set of strategies of
player~Min (Max) and $\Pi_\mMIN$ ($\Pi_\mMAX$)  for the set of
positional strategies of player~Min (Max).
For strategies $\mu \in \Sigma_\mMIN$ and $\chi \in \Sigma_\mMAX$, and
for an initial vertex $v \in V$, we write $\RUN(v, \mu, \chi)$ for the 
unique path formed if players start in the vertex $v$ and then they
follow strategies $\mu$ and $\chi$, respectively.
For brevity, we write $\Aa_\mMIN(v, \mu, \chi)$ for 
$\Aa_\mMIN(\RUN(v, \mu, \chi))$ and we write $\Aa_\mMAX(v, \mu, \chi)$
for $\Aa_\mMAX(\RUN(v, \mu, \chi))$. 

For a vertex $v \in V$, we define the \emph{upper value} as
\[
\UVAL(v) = \inf_{\mu \in \Sigma_\mMIN} \sup_{\chi \in \Sigma_\mMAX} 
\Aa_\mMIN(v, \mu, \chi),
\]
and the \emph{lower value} as
\[
\LVAL(v) = \sup_{\chi \in \Sigma_\mMAX} \inf_{\mu \in \Sigma_\mMIN}
\Aa_\mMAX(v, \mu, \chi).
\]
Note that the inequality $\LVAL(v) \leq \UVAL(v)$ always holds. 
A game is determined if 
for every~$v \in V$, we have $\LVAL(v) = \UVAL(v)$.  
We then write $\VAL(v)$ for this number and we call it the
\emph{value} of the average-price game at the vertex~$v$.  
 
We say that the strategies $\mu^* \in \Sigma_\mMIN$ and 
$\chi^* \in \Sigma_\mMAX$ are \emph{optimal} for the respective
players, if for every vertex $v \in V$, we have that 
$\sup_{\chi \in \Sigma_\mMAX} \Aa_\mMIN(v, \mu^*, \chi) = \UVAL(v)$  
and 
$\inf_{\mu \in \Sigma_\mMIN} \Aa_\mMIN(v, \mu^*, \chi) = \LVAL(v)$.  
Liggett and Lippman~\cite{LL69} show that all perfect-information
(stochastic) average-price games are positionally determined. 

\begin{theorem}
  \cite{LL69}
  \label{theorem:apg-determined}
  Every average-price game is determined, and optimal positional 
  strategies exist for both players, i.e., for all $v \in V$, we have: 
  \[
  \inf_{\mu \in \Pi_\mMIN} \sup_{\chi \in \Sigma_\mMAX}
  \Aa_\mMIN(v, \mu, \chi) 
  = 
  \sup_{\chi \in \Pi_\mMAX} \inf_{\mu \in \Sigma_\mMIN}
  \Aa_\mMAX(v, \mu, \chi) .
  \]
\end{theorem}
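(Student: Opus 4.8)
Here is a plan for proving Theorem~\ref{theorem:apg-determined} (the argument I would try before consulting~\cite{LL69}).

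The plan is to route the proof through \emph{discounted} games and then let the discount factor tend to~$1$. For $\lambda\in(0,1)$ I would introduce the game on the same arena whose payoff of a run $r=(v_0,v_1,\dots)$ is $\Aa_\lambda(r)=(1-\lambda)\sum_{i\geq 0}\lambda^i\pi(v_i,v_{i+1})$, and let $\VAL_\lambda(v)$ denote its value. The first step is to show that discounted games are positionally determined. For this one checks that the operator $T_\lambda\colon\Real^V\to\Real^V$ given by $(T_\lambda f)(v)=\opt_{(v,w)\in E}\bigl((1-\lambda)\,\pi(v,w)+\lambda f(w)\bigr)$, with $\opt$ equal to $\min$ on $V_\mMIN$ and $\max$ on $V_\mMAX$, is a $\lambda$-contraction of the complete metric space $(\Real^V,\|\cdot\|_\infty)$; its unique fixed point is $\VAL_\lambda$, and the positional strategies that pick an optimizing successor in every vertex are optimal for both players. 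This step is standard (Banach fixed point plus a routine verification argument).

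The second step will record two elementary facts about the undiscounted game. First, when \emph{both} players use positional strategies $\mu,\chi$, the run $\RUN(v,\mu,\chi)$ is a ``lasso'': a simple path followed by a forever-repeated simple cycle $C$. Along such a run $\Aa_\mMIN$ and $\Aa_\mMAX$ both equal the mean weight of $C$, whereas $\Aa_\lambda(\RUN(v,\mu,\chi))$ is an explicit rational function of $\lambda$ that converges to that mean weight as $\lambda\to 1^-$ (direct computation on the lasso, or an Abelian theorem). Second, once one player's strategy is frozen to a positional $\sigma$, the residual one-player game on a finite graph admits, for either the $\limsup$-average or the $\liminf$-average objective, an optimal positional response, whose payoff equals the best mean weight over cycles reachable under $\sigma$; moreover \emph{no} strategy of the free player, positional or not, can beat this cycle bound. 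The last claim is the heart of the matter and I would prove it by the cycle-decomposition argument: any finite prefix $v_0\dots v_n$ of a run splits into a simple path of length below $|V|$ plus a multiset of cycles, so $\sum_{i=1}^{n}\pi(v_{i-1},v_i)$ differs from (best reachable cycle mean)$\,\cdot\,n$ by an $O(|V|\cdot W)$ term that washes out after division by~$n$.

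The third step assembles everything. Since there are only finitely many positional strategies, the pigeonhole principle gives a positional $\mu^*$ of Min that is optimal in the $\lambda$-discounted game for infinitely many $\lambda_n\to 1^-$, and likewise a positional $\chi^*$ of Max; intersecting the two sequences we may use the same $\lambda_n\to 1^-$ for both. Fix a start vertex $v$, let $m$ be the largest mean weight over cycles reachable under $\mu^*$ and $m'$ the smallest over cycles reachable under $\chi^*$. By the second step $\sup_{\chi\in\Sigma_\mMIN}\Aa_\mMIN(v,\mu^*,\chi)=m$, witnessed by a positional $\chi_1$ whose run against $\mu^*$ is a lasso on a cycle of mean $m$; since $\mu^*$ is $\lambda_n$-optimal, $\VAL_{\lambda_n}(v)=\sup_\chi\Aa_{\lambda_n}(v,\mu^*,\chi)\geq\Aa_{\lambda_n}(v,\mu^*,\chi_1)\to m$, hence $\liminf_n\VAL_{\lambda_n}(v)\geq m$. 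Symmetrically $\inf_{\mu\in\Sigma_\mMIN}\Aa_\mMAX(v,\mu,\chi^*)=m'$ and $\limsup_n\VAL_{\lambda_n}(v)\leq m'$, so $m\leq\liminf_n\VAL_{\lambda_n}(v)\leq\limsup_n\VAL_{\lambda_n}(v)\leq m'$. On the other hand $\UVAL(v)\leq\sup_\chi\Aa_\mMIN(v,\mu^*,\chi)=m$ and $\LVAL(v)\geq\inf_\mu\Aa_\mMAX(v,\mu,\chi^*)=m'$, while $\LVAL(v)\leq\UVAL(v)$ always; chaining these yields $m'\leq\LVAL(v)\leq\UVAL(v)\leq m\leq m'$. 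Thus all are equal, the game is determined with value $m=m'$, and $\mu^*,\chi^*$ are optimal positional strategies, which gives the displayed identity (restricting Min, resp.\ Max, to positional strategies can only move $\inf_\mu\sup_\chi$, resp.\ $\sup_\chi\inf_\mu$, between $\UVAL(v)$, resp.\ $\LVAL(v)$, and $m=m'$). I expect the main obstacle to be exactly the ``no strategy beats the best reachable cycle'' claim of the second step, made delicate by the asymmetry of the payoffs (Min minimises a $\limsup$, Max maximises a $\liminf$): the cycle-decomposition bookkeeping must be carried out uniformly over arbitrary, possibly infinite-memory, opponent strategies. The discounted-lasso convergence and the pigeonhole argument are routine by comparison.
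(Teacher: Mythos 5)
The paper gives no proof of this theorem: it is imported verbatim from Liggett and Lippman~\cite{LL69}, so there is nothing internal to compare your argument against. That said, your vanishing-discount plan is sound and is essentially the classical route to this result: positional determinacy of the $\lambda$-discounted game via the Banach fixed point of $T_\lambda$, a pigeonhole extraction of positional strategies $\mu^*,\chi^*$ optimal along a common sequence $\lambda_n\to 1^-$, and the sandwich $m'\leq\LVAL(v)\leq\UVAL(v)\leq m\leq\liminf_n\VAL_{\lambda_n}(v)\leq\limsup_n\VAL_{\lambda_n}(v)\leq m'$ all check out, including the step you rightly flag as the crux --- that against a frozen positional strategy no opponent strategy, however complex, beats the best reachable cycle mean. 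The cycle-decomposition bookkeeping there does go through uniformly for both the $\limsup$ and the $\liminf$ payoff, since the prefix sums satisfy the two-sided bound $\bigl|\sum_{i=1}^{n}\pi(v_{i-1},v_i)-mn\bigr|=O(|V|\cdot W)$ independently of $n$ and of the opponent's memory. The only point I would insist you write out in a full version is the verification that the greedy positional strategies extracted from the fixed point of $T_\lambda$ are optimal against \emph{arbitrary} (not just positional) counter-strategies, which is the standard unfolding argument.
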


The decision problem for average-price games is in 
NP $\cap$ co-NP;
no polynomial-time algorithm is currently known for the problem.

\section{Average-Time Games}
\label{section:average-time-games}

\subsection{Timed Automata}
Before we present the syntax of the timed automata, we need to
introduce some concepts.
Fix a constant $k \in \Nat$ for the rest of this paper.
Let $C$ be a finite set of \emph{clocks}.
Clocks in timed automata are usually allowed to take arbitrary
non-negative real values. 
For the sake of simplicity and w.l.o.g~\cite{BBBR07}, we restrict
them to be bounded by some constant $k$, i.e., we consider only
\emph{bounded} timed automata models.    
A ($k$-bounded) \emph{clock valuation} is a function 
$\nu : C \to \REALS{k}$;
we write $\Vv$ for the set $[C \to \REALS{k}]$ of clock valuations. 
If~$\nu \in \Vv$ and $t \in \Rplus$ then we write $\nu + t$ for the
clock valuation defined by $(\nu + t)(c) = \nu(c) + t$, for all 
$c \in C$. 
For a set $C' \subseteq C$ of clocks and a clock valuation 
$\nu : C \to \Rplus$, we define $\RESET(\nu, C')(c) = 0$ if 
$c \in C'$, and $\RESET(\nu, C')(c) = \nu(c)$ if $c \not\in C'$.
A \emph{corner} is an integer clock valuation, i.e., $\alpha$ is a
corner if $\alpha(c) \in \NATS{k}$, for every clock $c \in C$.

The set of \emph{clock constraints} over the set of clocks $C$ is the 
set of conjunctions of \emph{simple clock constraints}, which are
constraints of the form $c \bowtie i$ or $c - c' \bowtie i$, where 
$c, c' \in C$, $i \in \NATS{k}$, and 
${\bowtie} \in \eset{<, >, =, \leq, \geq}$. 
There are finitely many simple clock constraints.
For every clock valuation $\nu \in \Vv$, let $\CC(\nu)$ be the set of 
simple clock constraints which hold in~$\nu \in \Vv$.  
A \emph{clock region} is a maximal set $P \subseteq \Vv$, such that for 
all $\nu, \nu' \in P$, $\CC(\nu) = \CC(\nu')$.
In other words, every clock region is an equivalence class of the
indistinguishability-by-clock-constraints relation, and vice versa.  
Note that $\nu$ and~$\nu'$ are in the same clock region iff all
clocks have the same integer parts in $\nu$ and~$\nu'$, and if the
partial orders of the clocks, determined by their fractional parts in 
$\nu$ and $\nu'$, are the same.  
For all $\nu \in \Vv$, we write $[\nu]$ for the clock region of
$\nu$. 
A \emph{clock zone} is a convex set of clock valuations, which  
is a union of a set of clock regions. 
Note that a set of clock valuations is a zone iff it is definable by a
clock constraint.
For $W \subseteq \Vv$, we write $\CLOSOP(W)$ for the smallest closed set
in~$\Vv$ which contains~$W$.  
Observe that for every clock zone~$W$, the set $\CLOSOP(W)$ is also a
clock zone.  

Let $L$ be a finite set of \emph{locations}.
A \emph{configuration} is a pair $(\ell, \nu)$, where $\ell \in L$ is
a location and $\nu \in \Vv$ is a clock valuation; 
we write~$Q$ for the set of configurations. 
If $s = (\ell, \nu) \in Q$ and $c \in C$, then we write $s(c)$ for
$\nu(c)$. 
A~\emph{region} is a pair $(\ell, P)$, where $\ell$ is a
location and $P$ is a clock region.  
If $s = (\ell, \nu)$ is a configuration then we write $[s]$ for the
region $(\ell, [\nu])$.
We write~$\Rr$ for the set of regions.
A set $Z \subseteq Q$ is a \emph{zone} if for every $\ell \in L$,
there is a clock zone $W_\ell$ (possibly empty), such that  
$Z = \set{(\ell, \nu) \: : \: 
  \ell \in L \text{ and } \nu \in W_\ell}$.  
For a region $R = (\ell, P) \in \Rr$, we write $\CLOSOP(R)$ for the zone
$\set{(\ell, \nu) \: : \: \nu \in \CLOSOP(P)}$. 

A \emph{timed automaton} $\Tt = (L, C, S, A, E, \delta, \varrho)$ 
consists of 
  a finite set of locations~$L$,
  a finite set of clocks~$C$, 
  a set of \emph{states} $S \subseteq Q$, 
  a finite set of \emph{actions} $A$, 
  an \emph{action enabledness function} $E : A \to 2^S$,  
  a \emph{transition function} $\delta : L \times A \to L$, 
  and a \emph{clock reset function} $\varrho : A \to 2^C$.
We require that $S$, 
and $E(a)$ for all $a \in A$, are zones.
  
Clock zones, from which zones $S$, 
and $E(a)$, for all $a \in A$,
are built, are typically specified by clock constraints. 
Therefore, when we consider a timed automaton as an input of an 
algorithm, its size should be understood as the sum of sizes of
encodings of $L$, $C$, $A$, $\delta$, and $\varrho$, and the sizes of
encodings of clock constraints defining zones $S$, 
and $E(a)$, 
for all $a \in A$.  
Our definition of a timed automaton may appear to differ from the
usual ones~\cite{AD94,BBBR07}, but the differences are superficial. 

For a configuration $s = (\ell, \nu) \in Q$ and $t \in \Rplus$, we
define $s + t$ to be the configuration $s' = (\ell, \nu + t)$ if
$\nu+t \in \Vv$, and we then write $s \xrightharpoonup{}_t s'$. 
We write $s \xrightarrow{}_t s'$ if $s \xrightharpoonup{}_t s'$ and 
for all $t' \in [0, t]$, we have $(\ell, \nu + t') \in S$. 
For an action $a \in A$, we define $\SUCC(s, a)$ to be the
configuration $s' = (\ell', \nu')$, where $\ell' = \delta(\ell, a)$
and $\nu' = \RESET(\nu, \varrho(a))$, and we then write 
$s \xrightharpoonup{a} s'$.
We write $s \xrightarrow{a} s'$ if $s \xrightharpoonup{a} s'$; 
$s, s' \in S$; and $s \in E(a)$.
For technical convenience, and without loss of generality, we will  
assume throughout that for every $s \in S$, there exists $a \in A$,
such that $s \xrightarrow{a} s'$.
For $s, s' \in S$, we say that $s'$ is in the future of $s$, 
or equivalently, that $s$ is in the past of $s'$, 
if there is $t \in \Rplus$, such that $s \xrightarrow{}_t s'$; 
we then write $s \xrightarrow{}_* s'$. 

For $R, R' \in \Rr$, we say that $R'$ is in the future of $R$, 
or that $R$ is in the past of $R'$, if for all $s \in R$, there is 
$s' \in R'$, such that $s'$ is in the future of $s$; 
we then write $R \xrightarrow{}_* R'$.
Similarly, for $R, R' \in \Rr$, we write $R \xrightarrow{a} R'$ if
there is 
$s \in R$, and there is $s' \in R'$, such that $s \xrightarrow{a} s'$.

A \emph{timed action} is a pair $\tau = (t, a) \in \Rplus \times A$.  
For $s \in Q$, we define $\SUCC(s, \tau) = \SUCC(s, (t, a))$ to be the 
configuration $s' = \SUCC(s + t, a)$, i.e., such that 
$s \xrightharpoonup{}_t s'' \xrightharpoonup{a} s'$, and we then write
$s \xrightharpoonup{a}_t s'$.  
We write $s \xrightarrow{a}_t s'$ if 
$s \xrightarrow{}_t s'' \xrightarrow{a} s'$, and we then say that 
$(s, (t, a), s')$ is a \emph{transition} of the timed automaton. 
If $\tau = (t, a)$ then we write $s \xrightharpoonup{\tau} s'$ instead 
of $s \xrightharpoonup{a}_t s'$, and $s \xrightarrow{\tau} s'$ instead
of $s \xrightarrow{a}_t s'$. 

An infinite run of a timed automaton is a sequence 
$r = \seq{s_0, \tau_1, s_1, \tau_2, \dots}$, such that 
for all $i \geq 1$, we have $s_{i-1} \xrightarrow{\tau_i} s_i$. 
A finite run of a timed automaton is a finite sequence 
$\seq{s_0, \tau_1, s_1, \tau_2, \dots, \tau_{n}, s_n} \in S \times ((A
\times \Rplus) \times S)^*$, such that for all $i$, $1 \leq i \leq n$,
we have $s_{i-1} \xrightarrow{\tau_i} s_i$. 
For a finite run 
$r = \seq{s_0, \tau_1, s_1, \tau_2, \dots, \tau_{n}, s_n}$, 
we define $\LENGTH(r) = n$, and we define $\LAST(r) = s_n$ to be the
state in which the run ends. 
For a finite 
run $r = \seq{s_0, \tau_1, s_1, \tau_2, \dots, s_n}$, we define time
of the run as $\TIME(r) = \sum_{i=1}^{n} t_i$. 
We write $\FRUNS$ for the set of finite runs.

\subsection{Strategies}

An average-time game $\Gamma$ is a triple $(\Tt, L_\mMIN, L_\mMAX)$,
where $\Tt = (L, C, S, A, E, \delta, \varrho)$ is a timed automaton and
$(L_\mMIN, L_\mMAX)$ is a partition of $L$. 
We define $Q_\mMIN = \set{ (\ell, \nu) \in Q \::\: \ell \in L_\mMIN}$,
$Q_\mMAX = Q \setminus Q_\mMIN$, 
$S_\mMIN = S \cap Q_\mMIN$, $S_\mMAX = S \setminus S_\mMIN$,
$\Rr_\mMIN = \set{[s] \: : \: s \in Q_\mMIN}$, and 
$\Rr_\mMAX = \Rr \setminus \Rr_\mMIN$. 

A \emph{strategy} for Min is a function 
$\mu : \FRUNS \to A \times \Rplus$, such that if 
$\LAST(r) = s \in S_\mMIN$ and $\mu(r) = \tau$ then 
$s \xrightarrow{\tau} s'$, where $s' = \SUCC(s, \tau)$.
Similarly, a strategy for player Max is a function 
$\chi : \FRUNS \to A \times \Rplus$, such that if 
$\LAST(r) = s \in S_\mMAX$ and $\chi(r) = \tau$ then 
$s \xrightarrow{\tau} s'$, where $s' = \SUCC(s, \tau)$.
We write $\Sigma_\mMIN$ for the set of strategies for player Min, and
we write $\Sigma_\mMAX$ for the set of strategies for player Max.
If players Min and Max use strategies $\mu$ and $\chi$,
resp., then the $(\mu, \chi)$-run from a state $s$ is the
unique run 
$\RUN(s, \mu, \chi) = \seq{s_0, \tau_1, s_1, \tau_2, \ldots}$,  
such that $s_0 = s$, and for every $i \geq 1$, if $s_i \in S_\mMIN$,
or $s_i \in S_\mMAX$, then 
$\mu(\RUN_i(s, \mu, \chi)) = \tau_{i+1}$, 
or $\chi(\RUN_i(s, \mu, \chi)) = \tau_{i+1}$, resp., where 
$\RUN_i(s, \mu, \chi) = 
  \seq{s_0, \tau_1, s_1, \ldots, s_{i-1}, \tau_i, s_i}$.

We say that a strategy $\mu$ for Min is \emph{positional} if
for all finite runs $r, r' \in \FRUNS$, we have that $\LAST(r) =
\LAST(r')$ implies $\mu(r) = \mu(r')$.
A positional strategy for player Min can be then represented as a
function $\mu : S_\mMIN \to A \times \Rplus$, which uniquely
determines the strategy $\mu^\infty \in \Sigma_\mMIN$ as follows:
$\mu^\infty(r) = \mu(\LAST(r))$, for all finite runs  
$r \in \FRUNS$. 
Positional strategies for player Max are defined and represented in
the analogous way. 
We write $\Pi_\mMIN$ and $\Pi_\mMAX$ for the sets of positional
strategies for player Min and for player Max, respectively. 

\subsection{Value of Average-Time Game}
If player Min uses the strategy $\mu \in \SigmaMin$ and player Max
uses the strategy $\chi \in \SigmaMax$ then player Min loses the
value 
\[
\Aa_\mMIN(s, \mu, \chi) = 
\limsup_{n \to \infty} \frac{1}{n} \cdot \TIME(\RUN_n(s, \mu, \chi)),
\]
and player Max wins the value 
\[
\Aa_\mMAX(s, \mu, \chi) = 
\liminf_{n \to \infty} \frac{1}{n}\cdot \TIME(\RUN_n(s, \mu, \chi)).
\] 
In an average-time game player Min is interested in minimising the
value she loses and player Max is interested in maximising the value 
he wins.
For every state $s \in S$ of a timed automaton, we define its 
\emph{upper value} by 
\[
\UVAL^{\Tt}(s) = \inf_{\mu \in \Sigma_\mMIN} \sup_{\chi \in \Sigma_\mMAX}
\Aa_\mMIN(s, \mu, \chi),
\]
and its lower value 
\[
\LVAL^{\Tt}(s) = \sup_{\chi \in \Sigma_\mMAX} \inf_{\mu \in \Sigma_\mMIN}
\Aa_\mMAX(s, \mu, \chi).
\]

The inequality $\LVAL^{\Tt}(s) \leq \UVAL^{\Tt}(s)$ always holds.
An average-time game is \emph{determined} if for every
state $s \in S$, its lower and upper values are equal to each other;  
then we say that the \emph{value} $\VAL^{\Tt}(s)$ exists and 
$\VAL^{\Tt}(s) = \LVAL^{\Tt}(s) = \UVAL^{\Tt}(s)$.
We give an elementary proof for the determinacy of the
average-time games without recourse to general results like
Martin's determinacy theorem~\cite{Mar75,Mar98}. 
\begin{theorem}[Determinacy]
  \label{theorem:atg-are-determined}
  Average-time games are determined.
\end{theorem}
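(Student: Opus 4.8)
The plan is to prove determinacy of average-time games by reduction to average-price games on a finite graph, for which Theorem~\ref{theorem:apg-determined} already supplies determinacy and positional optimal strategies. The bridge is the boundary region graph, announced in the introduction: a region-based finite abstraction of the timed automaton whose reachable state space from any starting state is finite, and which is sandwiched between the timed automaton and the (infinite) closed region graph. So the skeleton of the argument is: (i)~define the boundary region graph $\hat{\Tt}$ and equip it with an average-time objective, turning it into a finite average-price game $\hat{\Gamma}$ via the price function assigning to each transition its elapsed time (suitably discretised at the corner/boundary valuations); (ii)~invoke Theorem~\ref{theorem:apg-determined} to get $\LVAL^{\hat\Tt}(s) = \UVAL^{\hat\Tt}(s) = \VAL^{\hat\Tt}(s)$ with positional optimal strategies for both players; (iii)~prove that the value is preserved by the abstraction, i.e. $\VAL^{\hat\Tt}(s) = \LVAL^{\Tt}(s) = \UVAL^{\Tt}(s)$ for every state $s$, which is exactly the content promised for Section~\ref{section:atg-on-region-graphs}. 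Determinacy of $\Gamma$ is then immediate.

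For step~(iii), I would argue the two inequalities separately, and each direction by transferring strategies across the abstraction. For $\UVAL^{\Tt}(s) \le \VAL^{\hat\Tt}(s)$: take an optimal (positional) strategy $\hat\mu$ for Min in $\hat\Gamma$ and show Min can simulate it in the timed automaton, choosing concrete timed actions whose regions and elapsed times track those prescribed by $\hat\mu$, so that against any Max strategy in $\Tt$ the resulting run projects to a run in $\hat\Tt$ consistent with $\hat\mu$; since $\TIME$ of the concrete run equals (up to the discretisation, controlled by continuity/density of time within a region) the price accumulated in $\hat\Gamma$, the limsup average is bounded by $\VAL^{\hat\Tt}(s)$. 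The symmetric direction uses an optimal positional Max strategy in $\hat\Gamma$ and the fact that Min's concrete responses can always be abstracted to legal moves in $\hat\Tt$. The key technical point that makes this clean is that average-time payoff is a long-run average, hence insensitive to the small ($<1$) perturbations incurred when a concrete valuation is ``snapped'' to a boundary/corner valuation in the abstraction — the per-transition error is bounded, and $(1/n)\sum$ of bounded errors vanishes.

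The main obstacle I expect is establishing the value-transfer correspondence in full rigour, specifically the ``Min (resp.\ Max) can mimic the abstract strategy'' lemmas — these require a careful definition of the boundary region graph (what the boundary transitions $R \to R'$ are, and how a concrete state maps to an abstract one together with a bounded ``offset''), and a simulation invariant maintained along parallel runs in $\Tt$ and $\hat\Tt$. A subtlety is that Max, when maximising, may want to delay by amounts arbitrarily close to but not equal to an integer boundary, so the abstraction must record enough (the target region plus which boundary is approached) to let the concrete player realise a time arbitrarily close to the abstract price; the limsup/liminf formulation absorbs the residual $\varepsilon$. Once these simulation lemmas are in place, determinacy follows with no further work, because $\LVAL^{\Tt}(s) \le \UVAL^{\Tt}(s)$ always holds and the two chains of inequalities through $\VAL^{\hat\Tt}(s)$ pin them together. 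I would also remark, as the introduction does, that this yields an elementary proof avoiding Martin's theorem, since finite average-price game determinacy (Theorem~\ref{theorem:apg-determined}) is itself elementary.
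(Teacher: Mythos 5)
Your proposal follows essentially the same route as the paper: determinacy of the finite boundary region graph via Theorem~\ref{theorem:apg-determined} (using Proposition~\ref{proposition:brg-is-finite}), followed by value-transfer back to the timed automaton through strategy simulation with $\varepsilon$-close concrete strategies, the residual $\varepsilon$ being absorbed because it is arbitrary. The paper merely organises the transfer step through an intermediate closed region graph and type-preserving boundary strategies (Theorems~\ref{theorem:optimal-xi-in-bar} and~\ref{theorem:region-graph-is-determined}), which is exactly the ``simulation lemma'' machinery you correctly identify as the main technical burden.
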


For strategy $\mu \in \Sigma_\mMIN$ of player Min and 
$\chi \in \Sigma_\mMAX$ of player Max, we
define $\VAL^\mu(s) = 
\sup_{\chi \in \Sigma_\mMIN} \Aa_\mMIN(s, \mu, \chi)$, and
$\VAL^\chi(s) = \inf_{\mu \in \Sigma_\mMIN} \Aa_\mMAX(s, \mu, \chi)$.
For an $\varepsilon > 0$, we say that a strategy 
$\mu \in \Sigma_\mMIN$ or $\chi \in \Sigma_\mMAX$ is 
\emph{$\varepsilon$-optimal} if for every $s \in S$ we have that
$\VAL^\mu(s) \leq \VAL^{\Tt}(s) + \varepsilon$ or 
$\VAL^\chi(s) \geq \VAL^{\Tt}(s) - \varepsilon$, respectively.
Note that if a game is determined then for every $\varepsilon > 0$,
both players have $\varepsilon$-optimal strategies.

We say that a strategy $\chi \in \Sigma_\mMAX$ of player Max is a best 
response to a strategy $\mu \in \Sigma_\mMIN$ of player Min if for all 
$s \in S$ we have that 
$\Aa_\mMIN(s, \mu, \chi) = 
\sup_{\chi' \in \Sigma_\mMAX} \Aa_\mMIN(s, \mu, \chi')$.
Similarly we say that a strategy $\mu \in \Sigma_\mMIN$ of player Min
is a best response to a strategy $\chi \in \Sigma_\mMAX$ of player Max 
if for all $s \in S$ we have that 
$\Aa_\mMAX(s, \mu, \chi) =  
\inf_{\mu' \in \Sigma_\mMIN} \Aa_\mMAX(s, \mu', \chi)$.

In the next section we introduce some region-based abstractions of
timed automata, including the closed region graph, and its subgraphs:
the boundary region graph, and the region graph.
While the region graph is semantically equivalent to the corresponding 
timed automaton, the boundary region graph has the property that for 
every starting state, the reachable state space is finite. 
In Section~\ref{section:atg-on-region-graphs} we introduce
average-time games on these graphs and show that if we have the
solution of the average-time game for any of these graphs, then we get
the solution of the average-time game for the corresponding timed
automaton. 
The key Theorem~\ref{theorem:atg-are-determined} follows
immediately from Theorem~\ref{theorem:atg-on-graphs-are-determined}.

\section{Abstractions of Timed Automata}
\label{section:some-useful-abstractions}

The region automaton, originally proposed by Alur and Dill~\cite{AD94}, is
a useful abstraction of a timed automaton as it preserves the validity
of qualitative reachability, safety, and $\omega$-regular properties.  
The \emph{region automaton}~\cite{AD94} $\RegA(\Tt) = (\Rr, \Mm)$ of a
timed automaton $\Tt$ consists of:
\begin{itemize}
\item 
  the set $\Rr$ of regions of~$\Tt$, and 
\item 
  $\Mm \subseteq \Rr \times (\Rr \times A) \times \Rr$, such that 
  for all $a \in A$, and for all $R, R', R'' \in \Rr$, we have that 
  $(R, R'', a, R') \in \Mm$ iff
  $R \xrightarrow{}_{*} R'' \xrightarrow{a} R'$.
\end{itemize}

The region automaton, however, is not sufficient for solving
average-time games as it abstracts away the timing information.  
Corner-point abstraction, introduced by Bouyer et al.~\cite{BBL04}, is a
refinement of region automaton which preserves some timing
information. 
Formally, the corner-point abstraction $\CorA(\Tt)$  of a timed
automaton~$\Tt$ is a finite graph $(V, E)$ such that: 
\begin{itemize}
\item $V \subseteq Q \times \Rr$ such that $(s, R) \in V$ iff 
  $s = (\ell, \nu) \in \CLOSOP(R)$ and $\nu$ is a corner. 
  Since timed automata  we consider are bounded, there are finitely
  many regions, and every region has a finite number of corners.  
  Hence the set of vertices finite.
\item $E \subseteq V \times (\Rplus \times \Rr \times A) \times V$
  such that for $(s, R), (s', R') \in V$ and 
  $(t, R'', a) \in  \Rplus \times \Rr \times A$, we have 
  $((s, R), (t, R'', a), (s', R')) \in E$ 
  iff $R \xrightarrow{}_{*} R'' \xrightarrow{a} R'$
  and $(s+t) \xrightharpoonup{a} s'$.
  Notice that such a $t$ is always a natural number.
\end{itemize}
Bouyer et al.~\cite{BBL04} showed that the corner-point abstraction is 
sufficient for deciding one-player average-price problem if the
initial state is a corner-state, i.e., a state whose clock valuation
is a corner. 

In this section we introduce the \emph{boundary region graph}, which is a
generalisation of the corner-point abstraction. 
We prove that the value of the average-time game on a timed
automaton is equal to the value of the average-time game on the 
corresponding boundary region graph, for all starting states, not just 
for corner states.
In the process, we introduce two other refinements of the region
automaton, 
which we call the \emph{closed region graph} and the 
\emph{region graph}.
We collectively refer to these three graphs as \emph{region graphs}.
The analysis of average-time games on those objects allows us to
establish equivalence of average-time games on the original timed
automaton and the boundary region graph.
We also show (Lemma~\ref{lemma:atg-regionally-constant}) that the
value of an average-time game is constant over a region. 
A side-effect of this result is that the corner-point
abstractions can be used to solve average-time games on timed
automata for arbitrary starting states. 

\subsection{Region Graphs}
A \emph{configuration} in region
graphs is a is a pair $(s, R)$, where 
$s \in Q$ is a configuration of the timed automaton and $R \in \Rr$ is
a region;
We write~$\pQ$ for the set of configurations of the region graphs. 
For a set $X \subseteq \pQ$ and a region $R_0 \in \Rr$, we
define the set $X$ restricted to the region $R_0$ as the set
$\set{(s, R) \in X \::\: R = R_0}$, and we denote this set by $X(R_0)$. 
For a configuration $q = (s, R) \in \pQ$ we write write $[q]$ for its
region~$R$.  

\begin{definition}[Closed Region Graph]
  The \emph{closed region graph}
  $\cTt = (\cS, \cE)$ of a
  timed automaton $\Tt$  is a labelled transition system, where:  
  \begin{itemize}
  \item $\cS$ is the set of \emph{states} defined as 
    \[
    \cS = \set{(s, R) \in  \pQ \::\: s \in \CLOSOP(R)}~\text{
      and}
    \]
  \item $\cE$ is the \emph{labelled transition relation} defined as 
    \begin{multline*}
      \cE = \set{((s, R), (t, R'', a), (s' , R')) \in 
        \cS \times (\Rplus \times \Rr \times A) \times \cS\\
        \::\:  R \xrightarrow{}_{*} R'' \xrightarrow{a} R' \text{ and }
        s' = \SUCC(s, (t, a)) \text{ and }
        s + t \in \CLOSOP(R'')
      }.
    \end{multline*}  
  \end{itemize} 
\end{definition}
\begin{definition}[Boundary Region Graph]
  The \emph{boundary region graph} $\hTt = (\hS, \hE)$ of a
  timed automaton $\Tt$ is a labelled transition system, where:  
  \begin{itemize}
  \item $\hS$ is the set of \emph{states} defined as 
    \[
    \hS = \set{(s, R) \in  \pQ \::\: s \in \CLOSOP(R)}~\text{
      and }
    \]
  \item $\hE$ is the \emph{labelled transition relation} defined as 
    \begin{multline*}
      \hE = \set{((s, R), (t, R'', a), (s' , R')) \in 
        \hS \times (\Rplus \times \Rr \times A) \times \hS\\
        \::\:  R \xrightarrow{}_{*} R'' \xrightarrow{a} R' \text{ and }
        s' = \SUCC(s, (t, a)) \text{ and }
        s + t \in \BOUNDOP(R'')
      }.
    \end{multline*}  
  \end{itemize} 
\end{definition}

Boundary region graphs have the following remarkable property.
\begin{proposition}[\cite{Tri09}]
  \label{proposition:brg-is-finite}
  For every configuration in a boundary region graph the set of 
  reachable configurations is finite.
\end{proposition}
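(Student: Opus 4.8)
The plan is to show that in the boundary region graph, the crucial feature of the transition relation is that the target configuration $s'$ is forced to have a very restricted clock valuation, so that only finitely many configurations can ever be generated. The key observation is that in $\hE$, the intermediate point $s+t$ is required to lie in $\BOUNDOP(R'')$, the boundary of the region~$R''$. First I would make precise what $\BOUNDOP(R'')$ means here: it is the set of valuations in $\CLOSOP(R'')$ at which some clock takes an integer value or some pair of clocks has equal fractional parts — equivalently, $\CLOSOP(R'') \setminus R''$ together with certain degenerate faces — so that a valuation $\nu$ on the boundary satisfies at least one additional "thin" constraint beyond those defining $R''$. The effect is that the valuation $s+t$ is constrained to an affine subspace of dimension strictly smaller than $|C|$, and in fact, by iterating, one eventually lands on valuations pinned down by integer constraints.

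The heart of the argument is a dimension/structure analysis of the successor operation. Fix a starting configuration $(s_0, R_0)$ with $s_0 = (\ell_0,\nu_0) \in \CLOSOP(R_0)$. I would argue that after a transition $((s,R),(t,R'',a),(s',R')) \in \hE$, the new valuation $\nu'$ (the clock component of $s'$) is determined by: (i) the starting valuation $\nu$, (ii) the requirement that $\nu + t \in \BOUNDOP(R'')$, and (iii) the reset set $\varrho(a)$. Because $\nu+t$ lies on the boundary of $R''$, the value $t$ is not free to range over an interval but is instead one of finitely many values: each boundary face of $R''$ is hit, if at all, at a single value of $t$ determined by $\nu$ (this is the standard observation behind the corner-point abstraction — "such a $t$ is always a natural number" in the corner case generalises to "$t$ makes $\nu+t$ hit a boundary facet, of which there are finitely many"). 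Consequently, from any fixed $(s,R)$ there are only finitely many successors, and moreover each successor valuation $\nu'$ lies in a set of the form $\{\,\FRAC{\nu(c)} : c \in C\,\} \cup \{0\}$ for its fractional parts — more precisely, the fractional parts of $\nu'$ form a subset of the fractional parts appearing in $\nu$, possibly with some reset to~$0$. This monotone-shrinking property of the set of fractional parts is what I would isolate as the main lemma: define $\mathrm{frac}(\nu) = \{\FRAC{\nu(c)} : c\in C\}$; then for any $\hE$-successor, $\mathrm{frac}(\nu') \subseteq \mathrm{frac}(\nu) \cup \{0\}$, and hence along any run $\mathrm{frac}(\nu_i) \subseteq \mathrm{frac}(\nu_0)\cup\{0\}$.

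Once that invariant is in hand, finiteness follows quickly: the integer parts of all clocks are bounded by~$k$ (the automaton is $k$-bounded), so there are finitely many choices of integer parts; the fractional parts all come from the fixed finite set $\mathrm{frac}(\nu_0)\cup\{0\}$, so there are finitely many choices of fractional parts and of the ordering among them; the location ranges over the finite set~$L$; and the region component $R$ is determined by $(\ell,\nu)$ up to the finitely many regions. Hence the set of reachable configurations $(s,R)$ is finite. I would then note that the bound is in fact singly exponential in the size of the automaton (each clock's value is one of $O(k\cdot|C|)$ rationals, times $|L|$ locations, times the number of regions), which is what is needed later for the EXPTIME upper bound, though for this proposition only finiteness is claimed.

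The main obstacle I anticipate is getting the definition of $\BOUNDOP$ exactly right and then proving the $\mathrm{frac}$-shrinking invariant cleanly, because the boundary of a region is a union of several lower-dimensional faces of different shapes (integer-value faces $c = n$, diagonal faces $\FRAC{c} = \FRAC{c'}$, and the face $\FRAC{c} = 0$), and one must check in each case that letting time elapse to reach that face, then applying the reset $\varrho(a)$, cannot introduce a genuinely new fractional value. The diagonal faces are the delicate case: hitting $\FRAC{\nu(c)+t} = \FRAC{\nu(c')+t}$ occurs at a value of $t$ with fractional part equal to $1 - \FRAC{\nu(c)}$ (or similar), so the new fractional parts are $\FRAC{\nu(c'') + t}$, and one must verify these all lie in $\mathrm{frac}(\nu)\cup\{0\}$ — this is a short but careful modular-arithmetic computation. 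Everything else (boundedness of integer parts, finiteness of regions and locations) is routine. I would also remark that this proposition is attributed to \cite{Tri09}, so in the paper itself one might simply cite it; the sketch above is how one would reprove it from scratch.
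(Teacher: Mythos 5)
The paper does not actually prove this proposition---it is imported from \cite{Tri09}---so there is no in-paper argument to compare against; I can only assess your reconstruction on its own terms. Your overall strategy (show finite branching, then show every reachable clock value lies in a fixed finite set determined by the initial valuation) is the right one, and the finite-branching part and the final counting are fine. But the ``main lemma'' you isolate, namely that $\{\FRAC{\nu'(c)} : c \in C\} \subseteq \{\FRAC{\nu(c)} : c \in C\} \cup \{0\}$ for every one-step successor, is false, and it fails already for the simplest (non-diagonal) boundary faces. Take two clocks with $\nu(c_1) = 0.3$ and $\nu(c_2) = 0.5$, and take the boundary move that waits until $c_1$ reaches the integer $1$ (i.e.\ $t = 0.7$, a legitimate transition of $\hTt$, and indeed exactly the move a Min boundary strategy makes by Equation~(\ref{e:brg-strat-inf})). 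The resulting valuation is $(1.0,\,1.2)$, whose fractional part $0.2 = \FRAC{0.5 - 0.3}$ does not belong to $\{0,\,0.3,\,0.5\}$. So the set of fractional parts does not shrink monotonically, and the counting step that concludes your proof has no basis.

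The correct invariant is one level more complicated: every reachable clock value is congruent modulo $1$ to an element of the fixed finite set $\{0\} \cup \{\FRAC{\pm\nu_0(x)} : x \in C\} \cup \{\FRAC{\nu_0(x) - \nu_0(y)} : x,y \in C\}$, which together with $k$-boundedness of values yields finiteness. Note, however, that this difference set is \emph{not} closed under taking further pairwise differences, so you cannot establish it by a purely local per-clock argument either; you need the rigidity of the dynamics. Concretely: every transition of $\hTt$ waits $t = b - \nu(c)$ for some integer $b$ and anchor clock $c$ (Proposition~\ref{prop:boundary-strategies-timed-moves}), so all clocks not reset since step $j$ carry the \emph{same} offset, namely the total time elapsed since step $j$; an easy induction shows that the total elapsed time after any prefix is congruent modulo $1$ to $0$ or to $-\nu_0(x)$ for some clock $x$, and every clock value is either $\nu_0(c)$ plus such a quantity or a difference of two such quantities. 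This is the missing idea; without it (or an equivalent global bookkeeping of elapsed time), the step-by-step fractional-part argument cannot be repaired.
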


We say that a configuration $q = (s = (\ell, \nu), R)$ is \emph{corner
  configuration} if $\nu$ is a corner. 

\begin{proposition} 
  The reachable sub-graph of the a boundary region graph $\hTt$ from a
  corner configuration is same as the corner-point abstraction
  $\CorA(\Tt)$.  
\end{proposition}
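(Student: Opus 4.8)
The plan is to prove the proposition by showing that the two labelled transition systems have the same states and the same transitions when we restrict to what is reachable from a corner configuration. First I would observe that both $\CorA(\Tt)$ and $\hTt$ share essentially the same state format, namely pairs $(s,R)$ with $s\in\CLOSOP(R)$; the only apparent discrepancy is that $\CorA(\Tt)$ restricts its vertices to corner configurations (those with $\nu$ a corner of $R$), whereas $\hTt$ allows arbitrary $s\in\CLOSOP(R)$. So the heart of the argument is a reachability claim: starting from a corner configuration, every $\hTt$-successor is again a corner configuration. I would prove this by examining the definition of $\hE$: a transition $((s,R),(t,R'',a),(s',R'))\in\hE$ requires $s+t\in\BOUNDOP(R'')$ and $s'=\SUCC(s,(t,a))$. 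The plan is to argue that if $\nu$ (the valuation of $s$) is integral, then the constraint $s+t\in\BOUNDOP(R'')$ forces $t$ to be a natural number — intuitively, the boundary of a region consists of faces on which some clock takes an integer value or some two clocks have equal fractional parts, and if one starts at an integral point, reaching such a boundary point requires an integral delay; hence $\nu+t$ is integral, and resetting a subset of clocks to $0$ (the effect of $\SUCC(\cdot,a)$) keeps it integral, so $s'$ is a corner configuration. By induction, the entire reachable subgraph of $\hTt$ from a corner configuration consists of corner configurations, matching the vertex set of $\CorA(\Tt)$.

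Next I would check that the edge relations coincide on these states. Comparing the definitions: $\CorA(\Tt)$ has an edge $((s,R),(t,R'',a),(s',R'))$ iff $R\xrightarrow{}_{*}R''\xrightarrow{a}R'$ and $(s+t)\xrightharpoonup{a}s'$; the boundary region graph has the edge iff $R\xrightarrow{}_{*}R''\xrightarrow{a}R'$, $s'=\SUCC(s,(t,a))$, and $s+t\in\BOUNDOP(R'')$. The condition $s'=\SUCC(s,(t,a))$ together with $s+t\in\CLOSOP(R'')$ (which follows from $s+t\in\BOUNDOP(R'')$) is exactly what $(s+t)\xrightharpoonup{a}s'$ encodes in the corner-point setting, given that $s+t$ being a corner in $\CLOSOP(R'')$ is equivalent to $s+t\in\BOUNDOP(R'')$ when the starting point is already a corner. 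So I would show the two edge conditions are equivalent under the standing assumption that $s$ is a corner configuration, using the remark already noted in the paper that in $\CorA(\Tt)$ the delay $t$ is always a natural number. This gives a bijection between the reachable part of $\hTt$ and $\CorA(\Tt)$ that is an isomorphism of labelled transition systems.

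The main obstacle I expect is the precise handling of the $\BOUNDOP$ operator: the excerpt uses $\BOUNDOP(R'')$ (via the macro \BOUNDOP) without, in this excerpt, spelling out its definition, so the delicate step is to pin down exactly which valuations lie in $\BOUNDOP(R'')$ and to verify the claim that, from a corner, a transition into $\BOUNDOP(R'')$ forces an integer delay and lands on a corner of $R''$. This requires a careful case analysis on the geometry of regions — faces where a clock hits an integer bound versus faces where fractional-part orderings degenerate — and on how the time-elapse $s\xrightarrow{}_t$ through the intermediate region $R''$ interacts with the region ordering $R\xrightarrow{}_{*}R''$. I would therefore structure the proof around one clean lemma: \emph{if $\nu$ is a corner and $\nu+t\in\BOUNDOP(R'')$ for some region $R''$ reachable from $[\,(\ell,\nu)\,]$ by time elapse, then $t\in\Nat$ and $\nu+t$ is a corner of $R''$}; everything else is bookkeeping matching the two definitions term by term. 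Once that lemma is in place, the proposition follows by a straightforward induction on the length of runs in $\hTt$ from the given corner configuration.
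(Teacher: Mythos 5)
The paper gives no proof of this proposition at all: it is stated as a direct observation on the two definitions. Your two-stage plan (an inductive argument that every state reachable in $\hTt$ from a corner configuration is again a corner configuration, followed by a term-by-term comparison of the two edge relations) therefore supplies an argument the authors omit, and its overall shape is the right one; you are also right that the entire weight of the proof rests on pinning down $\BOUNDOP(R'')$, which the paper never defines.

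One concrete warning about the key lemma you isolate. You justify ``corner plus boundary implies integer delay'' by appealing to the geometry of region faces, including faces on which two clocks have equal fractional parts. If $\BOUNDOP(R'')$ were the \emph{topological} boundary of $\CLOSOP(R'')$, the lemma would be false: for a thin region such as $\set{0 < x = y < 1}$ every point of its closure is a boundary point, and from the corner $(0,0)$ \emph{every} delay $t \in [0,1]$ lands in it, so no integrality is forced. The reading that makes both the lemma and the proposition true --- and the one consistent with the paper's definition of boundary strategies via Equations~(\ref{e:brg-strat-inf}) and~(\ref{e:brg-strat-sup}), and with Proposition~\ref{prop:boundary-strategies-timed-moves}, which shows that every boundary move has the form $t = b - s(c)$ for an integer $b$ and a clock $c$ --- is that $\BOUNDOP(R'')$ restricts the delay to the infimum or supremum of $\set{t : s+t \in \CLOSOP(R'')}$, i.e.\ to the nearest or farthest point of the closure along the time-elapse line. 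With that definition the integrality of $t$ from a corner is immediate ($t = b - s(c)$ with $b$ and $s(c)$ both integers), no case analysis on faces is needed, and the rest of your bookkeeping for the edge relations goes through. You should also note that the paper's edge condition for $\CorA(\Tt)$ does not explicitly require $s+t \in \CLOSOP(R'')$, so to get a literal isomorphism you must read that requirement into the clause ``$R \xrightarrow{}_{*} R'' \xrightarrow{a} R'$ and $(s+t)\xrightharpoonup{a}s'$'', as the authors' remark that ``such a $t$ is always a natural number'' implicitly does.
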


\begin{definition}[Region Graph]
  A \emph{region graph} of a timed automaton $\Tt$ is a labelled
  transition system $\rTt = (\rS, \rE)$, where:
  \begin{itemize}
    
  \item $\rS$ is the set of \emph{states} defined as 
    \[  
    \rS = \set{(s, R) \in \pQ \::\: s \in R}~\text{ and }
    \]
    
  \item $\rE$ is the \emph{labelled transition relation} defined as
    \begin{multline*}
      \rE = \set{((s, R), (t, R'', a), (s' , R')) 
        \in  \rS \times (\Rplus \times \Rr \times A) \times  \rS \\
        \::\: R \xrightarrow{}_{*} R'' \xrightarrow{a} R' \text{ and }
        s' = \SUCC(s, (t, a)) \text{ and }
        s + t \in R''}.
    \end{multline*}
  \end{itemize} 
\end{definition}

For configuration $q = (s, R) \in \pQ$, 
real number $t \in \Rplus$, region $R'' \in \Rr$, and action 
$a \in A$, we write $\SUCC(q, (t, R'', a))$ for the configuration 
$\big(\SUCC(s, (t, a)), R'\big)$ where $R'' \xrightarrow{a} R'$. 

\subsection{Region Game Graphs}
For $\Gamma = (\Tt, L_\mMIN, L_\mMAX)$ we
define the sets $\pQ_\mMIN = \set{(s, R) \in \pQ \::\: R \in \Rr_\mMIN}$
and $\pQ_\mMAX = \pQ \setminus \pQ_\mMIN$.
Similarly we define sets $\cS_\mMIN$, $\cS_\mMAX$, $\hS_\mMIN$,
$\hS_\mMAX$, $\rS_\mMIN$, and $\rS_\mMAX$.
The timed game automaton $\Gamma$  naturally gives rise to
the closed region game graph 
$\cGg = (\cTt, \cS_\mMIN, \cS_\mMAX)$, 
the boundary region game graph 
$\hGg = (\hTt, \hS_\mMIN, \hS_\mMAX)$, and  
the region game graph $\rGg = (\rTt, \rS_\mMIN, \rS_\mMAX)$. 
When it is clear from context, we use the terms region graphs and
region game graphs interchangeably.
Also, sometimes, we write $\Tt$, $\cTt$, $\hTt$, and
$\rTt$ for $\Gamma$, $\cGg$, $\hGg$, and $\rGg$, respectively.

\subsection{Runs of Region Graphs} 
An infinite run of the closed region graph $\cTt = (\cS, \cE)$ is an
infinite sequence 
\[
\seq{q_0, \tau_1, q_1, \tau_1, \ldots} \in \cS \times \big((\Rplus \times
\Rr \times A) \times \cS \big)^\omega,
\] 
such that for every positive integer $i$ we have 
$(q_{i-1}, \tau_i, q_i) \in \cE$. 
A finite run of the closed region graph $\cTt$ is a finite sequence  
\[
\seq{q_0, \tau_1, q_1, \tau_1, \ldots, q_n} \in 
\cS \times \big((\Rplus \times \Rr \times A) \times \cS\big)^*,
\] 
such that for every positive integer $i \leq n$ we have 
$(q_{i-1}, \tau_i, q_i) \in \cE$. 
Runs of the boundary region graph and the region graph are defined
analogously.  

For a graph $\Gg \in \set{\cTt, \hTt, \rTt}$ we write $\RUNS^{\Gg}$ 
for the set of its runs and $\RUNS^{\Gg}(q)$ for the set
of its runs from a state $q \in \cQ$.  
We write $\FRUNS^{\Gg}$ for the set of finite runs and
$\FRUNS^{\Gg}(q)$  for the set of finite runs starting from $q \in \cS$.

\subsection{Pre-Runs and Run Types}
Pre-runs~\cite{JT08} generalise runs of $\cTt, \rTt$, and $\hTt$, and
allow us to compare the runs in $\cTt, \rTt,$ and $\hTt$ in a uniform
manner.  
On the other hand, the concept of the type~\cite{JT08} of a run allows
us to compare pre-runs passing through the same sequence of regions. 

A \emph{pre-run} is a sequence 
$\seq{(s_0, R_0), (t_1, R'_1, a_1), (s_1, R_1), \dots}
\in \pQ \times ((\Rplus \times \Rr \times A) \times \pQ)^\omega$,
such that $s_{i+1} = \SUCC(s_i, (t_{i+1}, a_{i+1}))$ and 
$R_i \xrightarrow{}_{*} R'_{i+1} \xrightarrow{a_{i+1}} R_{i+1}$ for every 
$i \in \Nat$.
We write $\PRERUNS$ for the set of pre-runs and $\PRERUNS(s, R)$ for
the set of pre-runs starting from $(s, R) \in \pQ$.
The relation between various sets of runs is as follows:~for all $q
\in \cQ$ we have
\begin{eqnarray*}
  \RUNS^{\hTt}(q) \subseteq \RUNS^{\cTt}(q) \subseteq \PRERUNS(q)
  ~\text{ and } \\ 
  \RUNS^{\rTt}(q) \subseteq \RUNS^{\cTt}(q) \subseteq \PRERUNS(q).
\end{eqnarray*}

A finite pre-run is a finite sequence
$\seq{(s_0, R_0), (t_1, R'_1, a_1), \dots, (s_n, R_n)}
\in (Q \times \Rr) \times ((\Rplus \times \Rr \times A) \times 
(Q \times \Rr))^*$
such that for every nonnegative integer $i < n$ we have that
$s_{i+1} = \SUCC(s_i, (t_{i+1}, a_{i+1}))$ and 
$R_i \xrightarrow{}_{*} R'_i \xrightarrow{a_{i+1}} R_i$.
We write $\FPRERUNS$ for the set of finite pre-runs and 
$\FPRERUNS(s, R)$ for the set of finite pre-runs starting from 
$(s, R) \in \pQ$. 
For finite run 
$r = \seq{q_0, (t_1, R_1, a_1), q_1, \ldots, q_n} \in
\FPRERUNS$ we define its total time as $\TIME(r) = \sum_{i=1}^n t_i$,
and we denote the last state of the run by $\LAST(r) = q_n$.  

A \emph{run type} is a sequence 
$\seq{R_0, (R'_1, a_1), R_1, (R'_2, a_2), \dots} \in 
\Rr \times ((\Rr \times A) \times \Rr)^\omega$
such that for every $ i \in \Nat$ we have that
$R_i \xrightarrow{}_{*} R'_{i+1} \xrightarrow{a} R_{i+1}$.
We say that a pre-run 
$r = \seq{(s_0, R_0), (t_1, R'_1, a_1), (s_1, R_1), 
  (t_1, R'_2, a_2), \dots}$ is of the type 
$\seq{R_0, (R'_1, a_1), R_1, (R'_2, a_2), \dots}$.
We say that a run $r = \seq{s_0, (t_1, a_1), s_1, (t_2, a_2), \dots}$
of a timed automaton~$\Tt$ is of the type 
$\seq{R_0, (R'_1, a_1), R_1, (R'_2, a_2), \dots}$, 
where $R_i = [s_i]$ and 
$R'_{i+1} = [s_i + t_{i+1}]$ for all $i \in \Nat$.
We also define the type of a finite runs analogously. 

For a (finite or infinite) run or pre-run $r$, we write $\TYPE{r}$ for
its type. 
We write $\TYPES$ for the set of run types, and we write $\TYPES(R)$  
for the set of run types starting from region $R \in \Rr$. 
Similarly we write $\FTYPES$ for the set of finite run types, and we write 
$\FTYPES(R)$  for the set of finite run types starting from region 
$R \in \Rr$. 

\section{Strategies in Region Graphs}
In this section we define strategies of players in region graphs 
$\cTt$, $\rTt$, and $\hTt$, and study some of their properties.
Strategies in $\rTt$ are called \emph{admissible strategies}, while
strategies in $\hTt$ are called \emph{boundary strategies}.
We also introduce so-called \emph{type-preserving boundary strategies}
which are a key tool in proving the correctness of game reduction from
timed automata to boundary region graph.
In Section~\ref{section:atg-on-region-graphs} we show that there are
optimal type-preserving boundary strategies in $\cTt$ and $\hTt$.

\subsection{Pre-strategies and Strategies in $\cTt, \hTt, \rTt$}

Pre-strategies generalise the concept of strategies in region graphs,
and allows us to discuss the strategies in $\cTt$, $\hTt$, and $\rTt$
in a uniform manner.
We first define pre-strategies for players in $\Tt$, and then using
that we define strategies for players in closed region graph,
boundary region graph, and region graph. 

\begin{definition}[Pre-strategies]
  A \emph{pre-strategy} of player Min $\mu$ is a (partial) function 
  $\mu: \FPRERUNS \to \Rplus \times \Rr \times A$, such that 
  for a run $r \in \FPRERUNS$, 
  if $\LAST(r) = (s, R) \in \pQ_\mMIN$  
  then $\mu(r) = (t, R'', a)$ is defined, and it is such that 
  $R \xrightarrow{}_{*} R'' \xrightarrow{a} R'$ for some 
  $R' \in \Rr$.
  Pre-strategies of player Max are defined analogously.  
  We write $\pSigmaMin$ and $\pSigmaMax$ for the set of pre-strategies
  of player Min and player Max, respectively.
\end{definition}

We say that a strategy of player Min $\mu \in \pSigmaMin$ is
positional if for all runs $r_1, r_2 \in \FPRERUNS$ we have that
$\LAST(r_1) = \LAST(r_2)$ implies $\mu(r_1) = \mu(r_2)$.
Similarly we define positional strategy of player Max.

We define the run starting from configuration $q \in \pQ$ where player
Min and player Max use the strategies $\mu \in \pSigmaMin$ and 
$\chi \in \pSigmaMax$, respectively, in a straightforward manner and we
write $\RUN(q, \mu, \chi)$ for this run.  
For every positive integer $n$ we write $\RUN_n(q, \mu, \chi)$ for the
prefix of the run $\RUN(q, \mu, \chi)$ of length $n$. 

Now we are in a position to introduce strategies in closed region
graph, region graph, and boundary region graph.

\begin{definition}[Strategies in Closed Region Graph]
  A pre-strategy of player Min ${\mu \in \pSigmaMin}$ is a strategy in a
  closed region graph $\cTt = (\cS, \cE)$ if for every run 
  $r \in \FPRERUNS$ such that $\mu(r) = (t, R', a)$, we have that 
  $(s+t) \in \CLOSOP(R')$ where $(s, R) = \LAST(r)$.
  Strategies of player Max in a closed region graph
  are defined analogously.   
  We write $\cSigmaMin$ and $\cSigmaMax$ for the set of strategies of
  player Min and player Max, respectively.
\end{definition}

\begin{definition}[Strategies in Region Graphs]
  A pre-strategy of player Min $\mu \in \pSigmaMin$ is a strategy in a
  region graph $\rTt = (\rS, \rE)$ if for every run 
  $r \in \FRUNS^{\rTt}$ such that $\mu(r) = (t, R'', a)$, we have that 
  $(s+t) \in R''$ where $(s, R) = \LAST(r)$.
  Strategies of player Max in a region 
  graph are defined analogously.
  We call such strategies \emph{admissible strategies}.
  We write $\rSigmaMin$ and $\rSigmaMax$ for the set of admissible
  strategies of player Min and player Max, respectively.
\end{definition}

\begin{definition}[Strategies in Boundary Region Graph]
  A pre-strategy of player Min $\mu \in \pSigmaMin$ is a strategy in a
  boundary region graph $\hTt = (\hS, \hE)$ if for every run 
  $r \in \FPRERUNS$ such that $\mu(r) = (t, R', a)$, we have that 
  \begin{equation}
    t = \inf \set{ t \::\: s + t \in \CLOSOP(R')},\label{e:brg-strat-inf} 
  \end{equation}
  where $(s, R) = \LAST(r)$.\\
  A pre-strategy of player Max $\chi\in \pSigmaMax$ is a 
  strategy in a
  boundary region graph $\hTt$ if for every run $r \in \FPRERUNS$ such
  that $\mu(r) = (t, R', a)$, we have that  
  \begin{equation}
    t = \sup \set{ t \::\: s + t \in \CLOSOP(R')},\label{e:brg-strat-sup} 
  \end{equation}
  where $(s, R) = \LAST(r)$.
  We call such strategies \emph{boundary strategies}.
  We write $\hSigmaMin$ and $\hSigmaMax$ for the set of boundary
  strategies of player Min and player Max, respectively.
\end{definition}

For notational convenience and w.l.o.g., in the definition of boundary
strategies, we do not consider those timed moves of player Min (Max)
which suggest waiting till the farther (nearer) boundary of a thick
region. 

\begin{remark}
  For every state $s \in S$ of timed automata $\Tt$ and every strategy
  $\mu \in \pSigmaMin$ and $\chi \in \pSigmaMax$ of respective
  players, we have that :
  \begin{itemize}
  \item $\RUN((s, [s]), \mu, \chi) \in \RUNS^{\cTt}(s, [s])$ if $\mu
    \in \cSigmaMin$ and $\chi \in \cSigmaMax$;
  \item $\RUN((s, [s]), \mu, \chi) \in \RUNS^{\hTt}(s, [s])$ if $\mu
    \in \hSigmaMin$ and $\chi \in \hSigmaMax$;
  \item$\RUN((s, [s]), \mu, \chi) \in \RUNS^{\rTt}(s, [s])$ if $\mu
    \in \rSigmaMin$ and $\chi \in \rSigmaMax$.
  \end{itemize}
\end{remark}

\subsubsection{Boundary Strategies and Boundary Timed Actions.} 
Define the finite set of \emph{boundary timed
  actions}
$\aA = \NATS{k} \times C \times A$.
For $s \in Q$ and $\alpha = (b, c, a) \in \aA$, we
define $t(s, \alpha) = b - s(c)$ if $s(c) \leq b$, and 
$t(s, \alpha) = 0$ if $s(c) > b$; 
and we define $\SUCC(s, \alpha)$ to be the state 
$s' = \SUCC(s, \tau(\alpha))$, 
where $\tau(\alpha) = (t(s, \alpha), a)$;
we then write $s \xrightharpoonup{\alpha} s'$.
We also write $s \xrightarrow{\alpha} s'$ if 
$s \xrightarrow{\tau(\alpha)} s'$. 
For configuration $q = (s, R) \in \pQ$, 
boundary timed action $\alpha = (b, c, a) \in \aA$, and region 
$R'' \in \Rr$ we write $\SUCC(q, (\alpha, R''))$ for
the configuration $\SUCC(q, (t(s, \alpha), R'', a))$.

Timed actions suggested by a boundary strategies are precisely
boundary timed actions.
The following proposition formalises this notion.
\begin{proposition}
  \label{prop:boundary-strategies-timed-moves}
  For every boundary strategy  $\sigma \in \hSigmaMin (\hSigmaMax)$ 
  of player Min (Max) and for every run 
  $r \in \FPRERUNS$, if $\sigma(r) = (t, R', a)$ then there exists
  a boundary timed action $\alpha = (b, c, a) \in \aA$ such that 
  $t(s, \alpha) = t$, where $(s, R) = \LAST(r)$. 
\end{proposition}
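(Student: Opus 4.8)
The plan is to reduce the claim to an elementary observation about how the diagonal half-line $\set{\nu + t' : t' \in \Rplus}$ meets the closure of a clock region. Consider first the case $\sigma \in \hSigmaMin$; the case $\sigma \in \hSigmaMax$ is entirely symmetric, with infima replaced by suprema (this mirrors the fact that, by the remark following the definition, a boundary move of Max always targets the farther boundary, i.e.\ the supremum). Write $(s, R) = \LAST(r)$ with $s = (\ell, \nu)$, and let $\sigma(r) = (t, R', a)$, so that by the definition of a boundary strategy $t = \inf I$, where $I = \set{ t' \in \Rplus : s + t' \in \CLOSOP(R') }$. Since $t$ is a well-defined non-negative real, $I \neq \emptyset$; and $I$ is bounded above by $k$ because every clock valuation in $\CLOSOP(R')$ assigns each clock a value in $\REALS{k}$.

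First I would record the key geometric fact. The zone $\CLOSOP(R')$ is cut out by a conjunction of simple clock constraints, each of the form $c \bowtie m$ or $c - c' \bowtie m$ with $m \in \NATS{k}$. For every $t'$ we have $(\nu + t')(c) - (\nu + t')(c') = \nu(c) - \nu(c')$, so a diagonal constraint $c - c' \bowtie m$ holds at $\nu + t'$ iff it holds at $\nu$; since $I \neq \emptyset$, all such constraints hold identically along the half-line. Consequently $\nu + t' \in \CLOSOP(R')$ iff $\nu + t'$ satisfies the remaining constraints $c \bowtie m$, which for each clock $c$ amount to $\ell_c \leq \nu(c) + t' \leq u_c$ for integers $0 \leq \ell_c \leq u_c \leq k$ with $u_c - \ell_c \in \set{0,1}$ (the integer endpoints of the interval to which $R'$ confines $c$, read off its closure). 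Hence
\[
I = \Big[\, \max\Big(0,\ \max_{c \in C}\big(\ell_c - \nu(c)\big)\Big),\ \ \min_{c \in C}\big(u_c - \nu(c)\big) \,\Big],
\]
a nonempty closed bounded interval, so $t = \inf I = \max\big(0, \max_{c \in C}(\ell_c - \nu(c))\big)$ is attained.

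Next I would produce the boundary timed action. If $t > 0$, pick a clock $c_0$ attaining the outer maximum; then $t = \ell_{c_0} - \nu(c_0)$ with $\ell_{c_0} \in \NATS{k}$ and $s(c_0) = \nu(c_0) < \ell_{c_0}$, so setting $\alpha = (\ell_{c_0}, c_0, a) \in \aA$ gives $t(s, \alpha) = \ell_{c_0} - s(c_0) = t$. If $t = 0$, take any clock $c_0 \in C$ and $\alpha = (0, c_0, a) \in \aA$; since $s(c_0) \geq 0$, both cases in the definition of $t(s,\alpha)$ yield $t(s,\alpha) = 0 = t$. In either situation $\alpha = (b, c_0, a)$ retains the action $a$ and satisfies $t(s,\alpha) = t$, as required. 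For $\sigma \in \hSigmaMax$ the argument is the mirror image: $t = \sup I = \min_{c \in C}(u_c - \nu(c))$ is attained at some $c_0$, and $\alpha = (u_{c_0}, c_0, a)$ works since $s(c_0) = \nu(c_0) \leq u_{c_0} \in \NATS{k}$.

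The only mildly delicate points — and where I would take most care — are (i) justifying that $I$ is a nonempty closed interval whose endpoints are realized, which uses closedness of $\CLOSOP(R')$ together with the boundedness of all clocks by $k$, and (ii) verifying that the integer constants $\ell_c, u_c$ extracted from $\CLOSOP(R')$ genuinely lie in $\NATS{k}$, so that the resulting $\alpha$ is a legitimate element of $\aA = \NATS{k} \times C \times A$. Both follow from the bounded-clocks convention fixed at the outset; the remaining content is the diagonal-invariance observation above.
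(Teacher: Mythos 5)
Your proof is correct, and it reaches the conclusion by a route that is genuinely different from the paper's. The paper argues by a case split on the kind of region targeted by the delay: if $R'$ is thin, some clock $c'$ is integral throughout $\CLOSOP(R')$ and one sets $b = (s+t)(c')$; if $R'$ is thick, one passes to the thin region immediately preceding it (whose closure contains the nearer boundary of $R'$) and uses the clock that is integral there, with the Max case handled symmetrically via the farther boundary. You instead work directly with the constraint representation of the zone $\CLOSOP(R')$: diagonal constraints are invariant under time elapse, so the set $I$ of admissible delays is a nonempty closed interval whose endpoints have the form $m - \nu(c)$ with $m \in \NATS{k}$, and the pair $(b,c)$ is read off from the binding constraint at the relevant endpoint. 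What your version buys is uniformity (no thin/thick dichotomy), an explicit verification that $\inf I$ and $\sup I$ are attained, and a clean treatment of the degenerate case $t = 0$; the paper's sketch glosses over these points (and its thick-region case is stated somewhat loosely, reusing $R'$ for two different regions). What the paper's version buys is brevity, by leaning on the standard structural facts about thin and thick regions that your argument in effect re-derives from the definition of clock regions as conjunctions of simple constraints. Both arguments are sound.
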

\begin{proof}
  Let run $r \in \FPRERUNS$ be such that $\LAST(r) = (s, R)$. 
  Let $\sigma \in \hSigmaMin$ be a boundary strategy of player Min
  such that $\sigma(r) = (t, R', a)$. 
  From the definition of the boundary strategies, we have that 
  $t = \inf \set{ t \::\: s + t   \in \CLOSOP(R')}$.
  To prove the proposition, all we need to show is that there exists
  an integer $b \in \Int$ and a clock $c \in C$, such
  that $b - s(c) = t$. 
  
  If $R' \in \Rr_\THIN$ then there exists a clock $c' \in C$ such that
  for all states ${s' \in \CLOSOP(R')}$ we have that $\FRAC{s'(c')} = 0$. 
  In this case the clock $c = c'$ and the integer $b = (s+t)(c)$. 

  If $R \in \Rr_\THICK$ and let $R' \xleftarrow{}_{+1} R$ be the thin
  region immediately before $R$.
  Let clock $c' \in C$ be such that for all states 
  $s' \in \CLOSOP(R')$ we have that $\FRAC{s'(c')} = 0$.
  Again, in this case the desired clock $c = c'$ and the integer 
  $b = (s+t)(c)$. 
  
  The case, where $\sigma$ is a strategy of Max is similar, and hence
  omitted. 
  \qed
\end{proof}

Sometimes, in our proofs we need to use boundary timed action
suggested by a boundary strategy. 
For this purpose we define the notation $\BS{\sigma}(r)$ that 
gives the boundary timed action and region pair that corresponds to
$\sigma(r)$. 
The definition of this function is formalised in the following
definition. 
\begin{definition}
  For a boundary strategy $\sigma \in \hSigmaMax (\hSigmaMax)$ of player
  Min (Max), we define the function 
  $\BS{\sigma} : \FPRERUNS \to (\aA \times \Rr)$ as follows:~if for a
  run $r \in \FPRERUNS$ we have $\sigma(r) = (t, R', a)$, then 
  $\BS{\sigma}(r) = ((b, c, a), R')$ such that $b - s(c) = t$, where 
  $(s, R) = \LAST(r)$.
\end{definition}

\subsection{Type-Preserving Boundary Strategies}
\label{subsection:type-preserving}
We now introduce an important class of boundary
strategies called \emph{type-preserving boundary strategies}.
Broadly speaking, these strategies suggest to players a unique
boundary timed action and region pair for all the finite runs of the 
same type. 
\begin{definition}[Type-Preserving Boundary Strategies]
  A boundary strategy $\sigma \in \hSigmaMin$ of player
  Min is \emph{type-preserving}
  if $\TYPE{r_1} = \TYPE{r_2}$ implies  
  $\BS{\sigma}(r_1) = \BS{\sigma}(r_2)$ for all $r_1, r_2
  \in \FPRERUNS$. 
  Type-preserving boundary strategies of player Max are defined
  analogously. 
  We write $\simpleMin$ and $\simpleMax$ for the sets of type-preserving
  boundary strategies of players Min and Max, respectively.
\end{definition}
The rationale behind the name \emph{type-preserving} is that  
if $\mu \in \simpleMin$ and
$\chi \in \simpleMax$, then for every $R \in \Rr$ and for  
$q, q' \in \pQ(R)$, the run types of the resulting runs from $q$ and
$q'$ are the same, i.e., 
$\TYPE{\RUN(q, \mu, \chi)} = \TYPE{\RUN(q', \mu, \chi)}$.

\subsubsection{Simple Functions.}
Let $X \subseteq \pQ$.
A function $F : \cQ \to \Real$ is \emph{simple}~\cite{AM99,JT07}
if either:~there is $e \in \Int$, such that for every $q = (s, R) \in X$, we
have  $F(q) = e$; or 
there are $e \in \Int$ and $c \in C$, such that for every 
$q = (s, R) \in X$, we have $F(q) = e - s(c)$.   
We say that a function $F: X \to \Real$ is \emph{regionally simple}
or  \emph{regionally constant}, respectively, 
if for every region $R \in \Rr$, the function $F$, over domain $X(R)$,
is simple or constant, respectively.

For regions $R, R', R'' \in \Rr$ and boundary timed action 
$\alpha = (b, c, a) \in \aA$, we write 
$R \xrightarrow{R''}_\alpha R'$ if one of the following holds:
\begin{itemize}
\item 
  $R \xrightarrow{}_{b, c} R'' \xrightarrow{a} R'$, or
\item
  there is region $R''' \in \Rr_\THIN$ 
  such that $R \xrightarrow{}_{b, c} R''' \xrightarrow{}_{+1}
  R'' \xrightarrow{a} R'$, or  
\item
  there is a region $R''' \in \Rr_\THIN$ 
  such that $R \xrightarrow{}_{b, c} R''' \xleftarrow{}_{+1} 
  R''\xrightarrow{a} R'$.
\end{itemize}

\subsubsection{Properties of Type-preserving Boundary Strategies.}
The next two proposition state that if
both players play with type-preserving boundary strategies then for
every $n \in \Nat$ the total time spent in $n$ transitions is
regionally simple
(Proposition~\ref{proposition:time-is-regionally-simple}), and 
the average time of the infinite run is regionally constant
(Proposition~\ref{prop:rcbs-are-regionally-constant}).  
\begin{proposition}[Type-preserving strategy pairs yield regionally
  simple time for finite runs]
  \label{proposition:time-is-regionally-simple}
  If $\mu \in \simpleMin$, $\chi \in \simpleMax$, and $n \in \Nat$,
  then $\TIME(\RUN_n(\cdot, \mu, \chi)): \cQ \to \Rplus$ is
  regionally simple.  
\end{proposition}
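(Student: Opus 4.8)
The plan is to prove the proposition by induction on $n$, the number of transitions, strengthening the claim slightly so that the inductive hypothesis can be carried through. For the base case $n = 0$ we have $\TIME(\RUN_0(\cdot, \mu, \chi)) \equiv 0$, which is constant, hence simple, on every region. For the inductive step I would fix a region $R \in \Rr$ and examine a configuration $q = (s, R) \in \pQ(R)$. Since $\mu \in \simpleMax$ and $\chi \in \simpleMax$ are type-preserving, the run type $\TYPE{\RUN(q, \mu, \chi)}$ depends only on $R$, not on the particular state $s \in \CLOSOP(R)$ (this is precisely the rationale stated after the definition of type-preserving strategies). So all configurations in $\pQ(R)$ follow the same sequence of regions $\seq{R_0, (R'_1, a_1), R_1, \dots}$, and in particular the first boundary timed action $\alpha_1 = (b_1, c_1, a_1) \in \aA$ and the region $R'_1$ are the same for every $q \in \pQ(R)$.

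The first transition therefore contributes time $t(s, \alpha_1)$, which by definition of boundary timed actions equals $b_1 - s(c_1)$ (using the normalisation convention that we never wait to the farther/nearer boundary, so the degenerate $t(s,\alpha)=0$ case for $s(c) > b$ does not arise here — or, if one prefers, one handles it as the constant $0$). This first-transition time is exactly a simple function of $q$ over $\pQ(R)$: it has the form $e - s(c)$ with $e = b_1 \in \Int$ and $c = c_1 \in C$. Next I would apply $\SUCC$: after the first transition every $q \in \pQ(R)$ moves to a configuration in $\pQ(R_1)$ for the fixed region $R_1$, and the remaining $n-1$ transitions follow the strategies $\mu, \chi$ from there. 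By the induction hypothesis, $\TIME(\RUN_{n-1}(\cdot, \mu, \chi))$ restricted to $\pQ(R_1)$ is simple, i.e.\ it is either a constant $e'$ or of the form $e' - s'(c')$ for the successor state $s'$. Hence $\TIME(\RUN_n(q,\mu,\chi)) = t(s,\alpha_1) + \TIME(\RUN_{n-1}(\SUCC(q,(\alpha_1,R'_1)), \mu, \chi))$, and I need to show this sum is simple over $\pQ(R)$.

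The main obstacle — the only real content of the argument — is showing that the composition of a simple function on $\pQ(R_1)$ with the $\SUCC$ map pulls back to a simple function on $\pQ(R)$, and that the sum of two simple functions on the same region $\pQ(R)$ is again simple. For the pullback: $\SUCC$ along $\alpha_1 = (b_1, c_1, a_1)$ sends $s$ to $s' = \RESET(s + t(s,\alpha_1), \varrho(a_1))$; for a clock $c'$ we have either $c' \in \varrho(a_1)$, giving $s'(c') = 0$, or $c' \notin \varrho(a_1)$, giving $s'(c') = s(c') + t(s,\alpha_1) = s(c') + b_1 - s(c_1)$. In the first subcase $e' - s'(c') = e'$ is constant; in the second $e' - s'(c') = (e' - b_1) - s(c') + s(c_1)$, which is not literally of the simple form because of the two clock terms — however, since all of $\pQ(R)$ lies in a single clock region, the quantity $s(c_1) - s(c')$ is either identically zero (if $c_1$ and $c'$ have equal fractional parts throughout $R$, after accounting for integer parts) or otherwise this case has to be ruled out by the structure of boundary timed actions: the clock $c_1$ driving a boundary action is precisely one whose fractional part is forced, so $s(c_1)$ itself is integer-valued-minus-a-single-clock, and one checks the algebra collapses to the simple form. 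I would verify this case analysis carefully, invoking the definition of $\xrightarrow{R''}_\alpha$ and the fact (Proposition~\ref{prop:boundary-strategies-timed-moves}) that the relevant clock $c_1$ has zero fractional part at the target. Once the pullback is shown simple, closure of simple functions under addition over a fixed region is immediate from the definition (constant $+$ constant, constant $+$ ($e - s(c)$), or $(e_1 - s(c)) + (e_2 - s(c)) = (e_1 + e_2) - 2s(c)$ — the last needs the observation that such a term still counts as simple only if the coefficient is $1$, so in fact one must argue the two summands cannot both be non-constant with the same clock unless the geometry forbids it, or absorb the factor by noting $2s(c)$ with $s(c)$ bounded is handled region-wise; the cleanest route is that at most one of the two summands is genuinely non-constant on $\pQ(R)$, which follows because after the first boundary transition the residual time is measured from a fresh region and its clock dependence, if any, is on a clock whose fractional part was just synchronised). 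I expect this bookkeeping about which clock appears and with what coefficient to be the delicate point; everything else is a routine unwinding of the definitions.
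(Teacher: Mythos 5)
Your overall skeleton --- induction on $n$, splitting the time of an $(n{+}1)$-step run as (first transition time) $+$ ($n$-step time from the successor), and using type-preservation to fix a common boundary timed action $\alpha_1=(b_1,c_1,a_1)$ and a common region sequence across all of $\pQ(R)$ --- is exactly the paper's proof. The paper, however, outsources the crux to a cited result (Proposition~\ref{proposition:atg-t-alpha-simple}, from~\cite{JT07,Tri09}): if $F$ is simple on $\pQ(R')$, then the \emph{combined} map $s \mapsto t(s,\alpha) + F(\SUCC(q,(\alpha,R'')))$ is simple on $\pQ(R)$. You instead try to establish this inline by first showing that the pullback $F\circ\SUCC$ alone is simple on $\pQ(R)$ and then invoking closure of simple functions under addition --- and this is where your argument genuinely breaks, because neither half of that decomposition is true.

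For a non-reset clock $c'$ the pullback is $e'-s'(c') = (e'-b_1) - s(c') + s(c_1)$, and the difference $s(c_1)-s(c')$ is in general \emph{not} constant on a clock region (on a thick region with $c_1\neq c'$ it varies freely); it is neither ``identically zero'' nor ``ruled out by the structure of boundary timed actions'' as you suggest, so the pullback is not simple. Likewise simple functions are not closed under addition, as your own computation $(e_1-s(c))+(e_2-s(c))=(e_1+e_2)-2s(c)$ shows, and your proposed rescue --- that at most one of the two summands is genuinely non-constant --- is also false in general. The correct observation, for which you have written down every ingredient but which you never execute, is that one must add \emph{before} simplifying: the elapsed time exactly cancels the drift of the non-reset clock,
\[
t(s,\alpha_1) + \bigl(e' - s'(c')\bigr)
= \bigl(b_1 - s(c_1)\bigr) + \Bigl(e' - s(c') - \bigl(b_1 - s(c_1)\bigr)\Bigr)
= e' - s(c'),
\]
which is simple with clock $c'$; in the reset case ($s'(c')=0$) and the constant case the sum is $(b_1+e')-s(c_1)$, simple with clock $c_1$. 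This one-line telescoping is the entire content of the lemma the paper cites, and without it your case analysis does not close. A secondary, minor point: in the inductive step the length-$n$ suffix is generated not by $\mu,\chi$ themselves but by their shifts past the first move (strategies are functions of histories), so you should note, as the paper does, that these shifted strategies are again type-preserving before applying the induction hypothesis.
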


\begin{proposition}[Type-preserving strategy pairs yield regionally
  constant average time]
  \label{prop:rcbs-are-regionally-constant}
  If $\mu \in \simpleMin$ and $\chi \in \simpleMax$ then
  $\AVERAGETIME_\mMIN(\cdot, \mu, \chi): \cQ \to \Rplus$ and 
  $\AVERAGETIME_\mMAX(\cdot, \mu, \chi): \cQ \to \Rplus$ are
  regionally constant.  
\end{proposition}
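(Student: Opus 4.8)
The plan is to derive Proposition~\ref{prop:rcbs-are-regionally-constant} from Proposition~\ref{proposition:time-is-regionally-simple} together with the ``type-preserving'' observation made just before that proposition: when $\mu \in \simpleMin$ and $\chi \in \simpleMax$, the run type $\TYPE{\RUN(q, \mu, \chi)}$ depends only on the region $[q]$, not on the concrete configuration $q \in \pQ([q])$. First I would fix a region $R \in \Rr$ and two configurations $q = (s, R)$ and $q' = (s', R)$ in $\pQ(R)$, and argue that $\AVERAGETIME_\mMIN(q, \mu, \chi) = \AVERAGETIME_\mMIN(q', \mu, \chi)$ (and likewise for $\mMAX$); by Definition this is $\limsup_{n\to\infty} \frac{1}{n}\TIME(\RUN_n(q, \mu, \chi))$, so it suffices to control the difference $\TIME(\RUN_n(q, \mu, \chi)) - \TIME(\RUN_n(q', \mu, \chi))$ uniformly in $n$.

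The key step is this uniform bound. By Proposition~\ref{proposition:time-is-regionally-simple}, for each fixed $n$ the map $\TIME(\RUN_n(\cdot, \mu, \chi))$ restricted to $\pQ(R)$ is simple, so it has one of the two forms $q \mapsto e_n$ or $q \mapsto e_n - s(c_n)$ for some integer $e_n$ and clock $c_n$ depending on $n$. In the first (constant) case the two times agree exactly. In the second case the difference is $(e_n - s(c_n)) - (e_n - s'(c_n)) = s'(c_n) - s(c_n)$, and since $s, s'$ lie in the same clock region their clock values differ by strictly less than $1$ in absolute value; in fact $|s'(c_n) - s(c_n)| < 1$. Hence $|\TIME(\RUN_n(q, \mu, \chi)) - \TIME(\RUN_n(q', \mu, \chi))| < 1$ for every $n$, so dividing by $n$ and letting $n \to \infty$ the $\limsup$s (and $\liminf$s) coincide. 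This proves $\AVERAGETIME_\mMIN(q,\mu,\chi)=\AVERAGETIME_\mMIN(q',\mu,\chi)$ and $\AVERAGETIME_\mMAX(q,\mu,\chi)=\AVERAGETIME_\mMAX(q',\mu,\chi)$, which is exactly regional constancy.

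The main obstacle I anticipate is not this limit argument, which is essentially a one-line observation once Proposition~\ref{proposition:time-is-regionally-simple} is in hand, but rather making sure Proposition~\ref{proposition:time-is-regionally-simple} genuinely applies here: it asserts simplicity over all of $\cQ$, hence in particular over each $\pQ(R)$, so that part is fine, but one should double-check that the ``simple function'' clause really does give a \emph{single} form valid on the whole of $\pQ(R)$ for each $n$ (it does, by the definition of simple with $X = \pQ(R)$). A secondary point worth stating cleanly is that $\AVERAGETIME_\mMIN$ and $\AVERAGETIME_\mMAX$ are regionally constant as functions $\cQ \to \Rplus$, i.e.\ the argument above must be run for every region $R$, with the integer $e_n$, the clock $c_n$, and the choice of form all allowed to depend on both $R$ and $n$ — which causes no trouble since the bound $|s'(c_n) - s(c_n)| < 1$ is uniform regardless of these choices.
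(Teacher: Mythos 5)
Your proposal is correct and follows essentially the same route as the paper's own proof: both derive the result from Proposition~\ref{proposition:time-is-regionally-simple} by observing that the $n$-step times at two configurations of the same region differ by at most a bounded amount (the difference of two clock values under a simple function), then divide by $n$ and let $n \to \infty$. If anything, your write-up is slightly more careful than the paper's, since you explicitly let the integer $e_n$ and clock $c_n$ of the simple form depend on $n$ and justify why the uniform bound still forces the $\limsup$s and $\liminf$s to coincide.
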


\subsubsection{Type-preserving Boundary Strategy that Agrees with a  Boundary Strategy.} 
Given an arbitrary boundary strategy $\sigma$ and a configuration 
$q \in \cQ$, sometimes we are interested in a type-preserving boundary
strategy that agrees with $\sigma$ for all the runs starting from $q$.
We denote such a strategy by $\tpq{\sigma}{q}$.
The following definition formalises such strategy.
\begin{definition}
  \label{definition:tpbs-agree-bs}
  For a boundary strategy $\mu \in \hSigmaMin$ of player Min 
  and $q \in \cQ$ we define 
  $\tpq{\mu}{q} \in \simpleMin$ to be a type-preserving boundary
  strategy which satisfy the following conditions:
  \begin{enumerate}
  \item $\BS{\tpq{\mu}{q}}(r) = \BS{\mu}(r)$ for every 
    $r \in \FPRERUNS(q)$, and 
  \item $\TYPE{r} = \TYPE{r'}$ implies $\BS{\tpq{\mu}{q}}(r) =
    \BS{\tpq{\mu}{q}}(r')$ for all runs $r, r' \in \FPRERUNS$.
  \end{enumerate}
  For $\chi \in \hSigmaMax$ and $q \in \cQ$ we define 
  $\tpq{\chi}{q} \in \simpleMax$ analogously.
\end{definition}

Given an arbitrary strategy $\mu \in \cSigmaMin$ of player Min, a
type-preserving boundary strategy $\chi \in \simpleMax$ of player Max,
and a configuration $q \in \cQ$ sometimes we require to specify a
type-preserving strategy ${\tpqs{\mu}{q}{\chi} \in \simpleMin}$ which
has the property that types of runs $\RUN(q, \mu, \chi)$ and 
$\RUN(q, \tpqs{\mu}{q}{\chi}, \chi)$ are the same.
We then argue that from configuration $q \in \cQ$ if player Max plays
according to $\chi \in \simpleMax$ then player Min can achieve better
average-time if she plays according to $\tpqs{\mu}{q}{\chi}$ (see
Proposition~\ref{proposition:nstep-type-preserving-is-better} and
Corollary~\ref{corollary:type-preserving-is-better}).  
The motivation for the definition of
$\tpqs{\chi}{q}{\mu}$ is similar. 
\begin{definition}
  \label{definition:tp-better-than-normal}
  For an arbitrary strategy $\mu \in \cSigmaMin$ of player Min, a
  type-preserving boundary strategy $\chi \in \simpleMax$ of 
  player Max, and a configuration $q = (s, R) \in \cQ$, we define 
  $\tpqs{\mu}{q}{\chi} \in \simpleMin$ to be a type-preserving boundary
  strategy which satisfy the following conditions:
  \begin{enumerate}
  \item $\TYPE{\RUN(q, \tpqs{\mu}{q}{\chi}, \chi)} 
    = \TYPE{\RUN(q, \mu, \chi)}$, and
  \item $\TYPE{r} = \TYPE{r'}$ implies $\BS{\tpqs{\mu}{q}{\chi}}(r) =
    \BS{\tpqs{\mu}{q}{\chi}}(r')$ for all runs $r, r' \in \FPRERUNS$.
  \end{enumerate}
  For $\chi \in \cSigmaMax$, $\mu \in \simpleMin$, and $q \in \cQ$ 
  the strategy $\tpqs{\chi}{q}{\mu} \in \simpleMax$
  is defined analogously.
\end{definition}

The following proposition and its corollary shows that starting from a
configuration $q$ player Min (Max) prefers $\tpqs{\mu}{q}{\chi}$ 
($\tpqs{\chi}{q}{\mu}$) to $\mu$ ($\chi$) against a type-preserving
strategy $\chi \in \simpleMax$ ($\mu \in \simpleMin$) of its opponent. 
\begin{proposition}
  \label{proposition:nstep-type-preserving-is-better}
  For every $\chi \in \simpleMax$, $\mu \in \cSigmaMin$ and
  $q \in \cQ$ we have that 
  \[
  \TIME(\RUN_n(q, \mu, \chi)) \geq  
  \TIME(\RUN_n(q, \tpqs{\mu}{q}{\chi}, \chi)), 
  \]
  for every $n \in \Nat$.
  Similarly, for every $\mu \in \simpleMin$, $\chi \in \cSigmaMax$ and
  $q \in \cQ$ we have that 
  \[
  \TIME(\RUN_n(q, \mu, \chi)) \leq 
  \TIME(\RUN_n(q, \mu, \tpqs{\chi}{q}{\mu})),
  \]
  for every $n \in \Nat$.
\end{proposition}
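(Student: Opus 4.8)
The plan is to prove the first inequality by a single induction on $n$ on the quantity
\[
\delta_n \;:=\; \TIME(\RUN_n(q,\mu,\chi)) - \TIME(\RUN_n(q,\tpqs{\mu}{q}{\chi},\chi)),
\]
establishing $\delta_n\ge 0$; the second inequality is the exact dual and I would settle it with one remark at the end. First I would fix $\chi\in\simpleMax$, $\mu\in\cSigmaMin$, $q\in\cQ$ and abbreviate $r=\RUN(q,\mu,\chi)$, $r'=\RUN(q,\tpqs{\mu}{q}{\chi},\chi)$ (both are runs of $\cTt$ since $\simpleMax\subseteq\cSigmaMax$ and $\tpqs{\mu}{q}{\chi}\in\simpleMin\subseteq\cSigmaMin$), writing $s_i$ (resp.\ $s'_i$) for the clock valuation of the $i$-th configuration of $r$ (resp.\ $r'$), $t_i$ (resp.\ $t'_i$) for the $i$-th delay, and $a_i$ for the $i$-th action. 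By condition~1 of Definition~\ref{definition:tp-better-than-normal} the two runs have the same type, so the actions $a_i$, the intermediate regions $R'_i$, and hence the reset sets $\varrho(a_i)$ coincide along both runs, while $s_0=s'_0$ is the valuation of $q$. Put $\delta_i=\sum_{j=1}^i(t_j-t'_j)$, so $\delta_0=0$, and for a clock $c$ let $\lambda(i,c)\le i$ be the last index $\le i$ at which $c$ is reset, or $0$ if $c$ is never reset up to $i$. Unrolling $s_i(c)=(s_{i-1}+t_i)(c)$ when $c\notin\varrho(a_i)$ and $s_i(c)=0$ when $c\in\varrho(a_i)$ along both runs, and using $s_0=s'_0$, yields the bookkeeping identity
\[
s_i(c)-s'_i(c)\;=\;\sum_{j=\lambda(i,c)+1}^{i}(t_j-t'_j)\;=\;\delta_i-\delta_{\lambda(i,c)}.
\]

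The heart of the argument is a one-step estimate. Fix $i$ and let $\rho_i,\rho'_i$ be the length-$i$ prefixes of $r,r'$; they have the same type. If $R_i\in\Rr_\mMIN$ then $r'$ takes step $i{+}1$ according to $\tpqs{\mu}{q}{\chi}$, which is type-preserving, so $\BS{\tpqs{\mu}{q}{\chi}}(\rho_i)=\BS{\tpqs{\mu}{q}{\chi}}(\rho'_i)$; if $R_i\in\Rr_\mMAX$ then both runs take step $i{+}1$ according to $\chi$ and $\BS{\chi}(\rho_i)=\BS{\chi}(\rho'_i)$. In either case write the common boundary timed action as $(b,c,a_{i+1})$ with target region $R'_{i+1}$; by Proposition~\ref{prop:boundary-strategies-timed-moves} it satisfies $b\ge s_i(c)$ and $b\ge s'_i(c)$. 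If $R_i\in\Rr_\mMAX$ then $t_{i+1}=b-s_i(c)$ and $t'_{i+1}=b-s'_i(c)$. If $R_i\in\Rr_\mMIN$ then $t'_{i+1}=b-s'_i(c)$ and $b-s_i(c)=\inf\set{t:s_i+t\in\CLOSOP(R'_{i+1})}$ (this is the delay $\tpqs{\mu}{q}{\chi}$ would prescribe from $s_i$), while $\mu\in\cSigmaMin$ forces $s_i+t_{i+1}\in\CLOSOP(R'_{i+1})$, hence $t_{i+1}\ge b-s_i(c)$. Either way $t_{i+1}-t'_{i+1}\ge s'_i(c)-s_i(c)$, so by the identity above
\[
\delta_{i+1}=\delta_i+(t_{i+1}-t'_{i+1})\;\ge\;\delta_i-\big(s_i(c)-s'_i(c)\big)\;=\;\delta_{\lambda(i,c)}.
\]
Now induct on $i$: $\delta_0=0$, and if $\delta_0,\dots,\delta_i\ge 0$ then $\delta_{i+1}\ge\delta_{\lambda(i,c)}\ge 0$ because $\lambda(i,c)\le i$. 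Hence $\delta_n\ge 0$ for every $n$, which is the first inequality.

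The only point I expect to require care is the claim that the delay a type-preserving boundary strategy prescribes to a fixed target region $R'_{i+1}$ is the \emph{same} simple expression $b-s(c)$ of the current state at $s_i$ and at $s'_i$ — equivalently, that $\inf\set{t:s+t\in\CLOSOP(R'_{i+1})}$ agrees with $b-s(c)$ uniformly over the source region $R_i$; this is exactly what Proposition~\ref{prop:boundary-strategies-timed-moves} gives, together with the fact that $\BS{\cdot}$ of a type-preserving strategy depends only on the run type (it is also consistent with Proposition~\ref{proposition:time-is-regionally-simple}), so there is no real gap, and the rest is elementary arithmetic on the $\delta_i$. Finally, the second inequality follows by the symmetric argument: $\tpqs{\chi}{q}{\mu}\in\simpleMax$ is a boundary strategy for Max, so it delays to $\sup\set{t:s+t\in\CLOSOP(R')}$, which is an \emph{upper} bound on the delay of the arbitrary strategy $\chi\in\cSigmaMax$ to the same region; every inequality above reverses, $\delta_n\le 0$, and we obtain $\TIME(\RUN_n(q,\mu,\chi))\le\TIME(\RUN_n(q,\mu,\tpqs{\chi}{q}{\mu}))$.
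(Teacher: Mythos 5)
Your proof is correct, but it takes a genuinely different route from the paper's. The paper argues by backward induction on $n$: it uses Proposition~\ref{proposition:time-is-regionally-simple} to represent the $k$-step time-to-go under a type-preserving pair as a \emph{simple} function $\Ff$ on each region, and then Proposition~\ref{proposition:atg-nondecreasing} (continuity and monotonicity of $t \mapsto t + \Ff(\SUCC(q,(t,R'',a)))$) to conclude that Min only loses by waiting beyond the boundary time $\inf\set{t : s+t \in \CLOSOP(R'')}$. You instead run a forward, step-by-step comparison of the two runs, tracking the cumulative time difference $\delta_i$ and converting clock differences into differences of partial sums via the last-reset index, $s_i(c)-s'_i(c)=\delta_i-\delta_{\lambda(i,c)}$; the recursion $\delta_{i+1}\ge\delta_{\lambda(i,c)}$ then closes by strong induction. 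Both proofs rest on the same underlying fact---after the clock $c$ governing the next boundary move was last reset, elapsed time and the value of $c$ move in lockstep, so Min's extra waiting is never recovered---but your bookkeeping identity replaces the simple-function machinery entirely. What you gain is a self-contained, elementary argument that needs neither Proposition~\ref{proposition:time-is-regionally-simple} nor Proposition~\ref{proposition:atg-nondecreasing}; what the paper's route buys is reusability, since the same simple-function lemmas also drive Proposition~\ref{prop:rcbs-are-regionally-constant} and the $\varepsilon$-close variant (Proposition~\ref{proposition:nstep-boundary-better-against-epsilon}) and are the standard tool that extends to priced automata, whereas your telescoping exploits that the payoff is exactly elapsed time. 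The one delicate point---that a single pair $(b,c)$ represents the boundary delay $\inf\set{t : s+t\in\CLOSOP(R'')}$ uniformly over all states in the closure of the source region, so the type-preserving strategy prescribes the \emph{same} simple expression at $\rho_i$ and at $\rho'_i$---you correctly flag and justify via Proposition~\ref{prop:boundary-strategies-timed-moves} together with type-preservation, so I see no gap.
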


An easy corollary of this proposition is as follows:
\begin{corollary}
  \label{corollary:type-preserving-is-better}
  For every $\chi \in \simpleMax$, $\mu \in \cSigmaMin$ and for all
  configurations $q \in \cQ$ we have that
  \[
  \AVERAGETIME_\mMIN(q, \mu, \chi)) \geq
  \AVERAGETIME_\mMIN(q, \tpqs{\mu}{q}{\chi}, \chi)).
  \]
  Similarly for every $\mu \in \simpleMin$, $\chi \in \cSigmaMax$ and
  for all configurations $q \in \cQ$ we have that
  \[
  \AVERAGETIME_\mMAX(q, \mu, \chi)) \leq
  \AVERAGETIME_\mMAX(q, \mu, \tpqs{\chi}{q}{\mu})).
  \]
\end{corollary}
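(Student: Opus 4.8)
The plan is to deduce Corollary~\ref{corollary:type-preserving-is-better} directly from Proposition~\ref{proposition:nstep-type-preserving-is-better}, which already does all the real work at the level of finite prefixes. The key observation is that both $\AVERAGETIME_\mMIN$ and $\AVERAGETIME_\mMAX$ are defined as (a $\limsup$, respectively $\liminf$, of) the normalised quantities $\frac{1}{n}\TIME(\RUN_n(\cdot,\cdot,\cdot))$, and the normalisation factor $1/n$ is the same positive number on both sides of the inequalities we want; so it suffices to pass a pointwise inequality between the sequences $\bigl(\TIME(\RUN_n(q,\mu,\chi))\bigr)_{n}$ and $\bigl(\TIME(\RUN_n(q,\tpqs{\mu}{q}{\chi},\chi))\bigr)_{n}$ through the relevant limit operation.

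Concretely, first I would fix $\chi \in \simpleMax$, $\mu \in \cSigmaMin$, and a configuration $q \in \cQ$. By Proposition~\ref{proposition:nstep-type-preserving-is-better} we have $\TIME(\RUN_n(q, \mu, \chi)) \geq \TIME(\RUN_n(q, \tpqs{\mu}{q}{\chi}, \chi))$ for every $n \in \Nat$. Dividing both sides by $n$ (for $n \geq 1$) preserves the inequality, so $\frac{1}{n}\TIME(\RUN_n(q, \mu, \chi)) \geq \frac{1}{n}\TIME(\RUN_n(q, \tpqs{\mu}{q}{\chi}, \chi))$ for all $n \geq 1$. Now I would invoke monotonicity of $\limsup$: if $a_n \geq b_n$ for all $n$ then $\limsup_n a_n \geq \limsup_n b_n$. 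Applying this with $a_n = \frac{1}{n}\TIME(\RUN_n(q,\mu,\chi))$ and $b_n = \frac{1}{n}\TIME(\RUN_n(q,\tpqs{\mu}{q}{\chi},\chi))$ yields exactly $\AVERAGETIME_\mMIN(q,\mu,\chi) \geq \AVERAGETIME_\mMIN(q,\tpqs{\mu}{q}{\chi},\chi)$, which is the first claimed inequality. For the second inequality I would argue symmetrically: fix $\mu \in \simpleMin$, $\chi \in \cSigmaMax$, $q \in \cQ$; Proposition~\ref{proposition:nstep-type-preserving-is-better} gives $\TIME(\RUN_n(q,\mu,\chi)) \leq \TIME(\RUN_n(q,\mu,\tpqs{\chi}{q}{\mu}))$ for every $n$, hence the same inequality after dividing by $n$, and then monotonicity of $\liminf$ (if $a_n \leq b_n$ for all $n$ then $\liminf_n a_n \leq \liminf_n b_n$) gives $\AVERAGETIME_\mMAX(q,\mu,\chi) \leq \AVERAGETIME_\mMAX(q,\mu,\tpqs{\chi}{q}{\mu})$.

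There is essentially no obstacle here; this really is the "easy corollary" the text advertises. The only points requiring a moment's care are purely bookkeeping: making sure one uses $\limsup$ (not $\liminf$) for the Min-payoff $\AVERAGETIME_\mMIN$ and $\liminf$ for the Max-payoff $\AVERAGETIME_\mMAX$, in accordance with the definitions in Section~\ref{section:average-time-games}; and noting that $\tpqs{\mu}{q}{\chi}$ and $\tpqs{\chi}{q}{\mu}$ are well-defined type-preserving boundary strategies by Definition~\ref{definition:tp-better-than-normal}, so that all four expressions $\AVERAGETIME_\bullet(q,\cdot,\cdot)$ in the statement are legitimate. Thus the proof is a two-line application of monotonicity of $\limsup$/$\liminf$ to the per-step inequalities of the preceding proposition.
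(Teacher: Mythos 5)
Your proof is correct and is precisely the argument the paper intends: the paper offers no explicit proof, merely labelling the statement an easy corollary of Proposition~\ref{proposition:nstep-type-preserving-is-better}, and your route---divide the pointwise inequality on $\TIME(\RUN_n(\cdot,\cdot,\cdot))$ by $n$ and apply monotonicity of $\limsup$ and $\liminf$---is exactly that intended deduction. Nothing is missing.
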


\subsubsection{Admissible Strategies $\varepsilon$-Close to  a Type-Preserving Boundary Strategy.}
Given a type-preserving boundary strategy $\sigma$ and a positive real
$\varepsilon > 0$, sometimes we are interested in admissible
strategies that behave like $\sigma$ within $\varepsilon$ precision.
The following definition formalises such strategy.
\begin{definition}
  \label{def:epsilon-close-strat}
  For $\mu \in \simpleMin$ and a real number $\varepsilon > 0$, we
  define the set of admissible strategy $\EcMin{\mu}{\varepsilon} 
  \subseteq \rSigmaMin$ as
  follows.
  For every $\mu_\varepsilon \in \EcMin{\mu}{\varepsilon}$ we have
  that for all runs $r \in \FPRERUNS$ 
  if $\BS{\mu}(r) = ((b, c, a), R')$ then 
  $\mu_\varepsilon(r) = (t, R', a)$ is such that
  \[
  s + t \in R' \text{ and } t \leq b - s(c) + \varepsilon,
  \]
  where $(s, R) = \LAST(r)$. 
  Notice that (see Equation~\ref{e:brg-strat-inf}) such a value of $t$
  always exists.  \\
  Similarly for $\chi \in \simpleMax$ and a real number $\varepsilon > 0$ we
  define the set $\EcMax{\chi}{\varepsilon} \subseteq \rSigmaMax$ as
  follows.
  For every $\chi_\varepsilon \in \EcMax{\chi}{\varepsilon}$ we have
  that for all runs $r \in \FPRERUNS$ 
  if $\BS{\chi}(r) = ((b, c, a), R')$ then 
  $\chi_\varepsilon(r) = (t, R', a)$ is such that
  \[
  s + t \in R' \text{ and } t \geq b - s(c) - \varepsilon,
  \]
  where $(s, R) = \LAST(r)$. 
\end{definition}

Given an arbitrary strategy $\mu \in \cSigmaMin$ of player Min, a
positive real $\varepsilon > 0$, a type-preserving boundary strategy
$\chi \in \simpleMax$ of player Max, an $\varepsilon$-close strategy
$\chi_\varepsilon \in \EcMax{\chi}{\varepsilon}$,
and a configuration $q \in \cQ$
sometimes we require to specify a 
type-preserving strategy ${\tpqs{\mu}{q}{\chi_\varepsilon} \in
\simpleMin}$ which has the property that types of runs 
$\RUN(q, \mu, \chi_\varepsilon)$ and 
$\RUN(q, \tpqs{\mu}{q}{\chi_\varepsilon}, \chi_\varepsilon)$ are the 
same. 
\begin{definition}
  For an arbitrary strategy $\mu \in \cSigmaMin$ of player Min,  
  a positive real $\varepsilon > 0$,  
  a type-preserving boundary strategy ${\chi \in \simpleMax}$ of player
  Max, 
  an $\varepsilon$-close strategy 
  ${\chi_\varepsilon \in \EcMax{\chi}{\varepsilon}}$, 
  and a configuration $q = (s, R) \in \cQ$, we define 
  $\tpqs{\mu}{q}{\chi_\varepsilon} \in \simpleMin$ to be a
  type-preserving boundary strategy which satisfy the following 
  conditions: 
  \begin{enumerate}
  \item $\TYPE{\RUN(q, \tpqs{\mu}{q}{\chi_\varepsilon}, \chi_\varepsilon)} 
    = \TYPE{\RUN(q, \mu, \chi_\varepsilon)}$, and
  \item $\TYPE{r} = \TYPE{r'}$ implies $\BS{\tpqs{\mu}{q}{\chi_\varepsilon}}(r) =
    \BS{\tpqs{\mu}{q}{\chi_\varepsilon}}(r')$ for all runs 
    $r, r' \in \FPRERUNS$.
  \end{enumerate}
  Combining it with Definition~\ref{def:epsilon-close-strat} we get
  that $\TYPE{\RUN(q, \tpqs{\mu}{q}{\chi_\varepsilon}, \chi)} 
  = \TYPE{\RUN(q, \mu, \chi_\varepsilon)}$.\\
  For $\chi \in \cSigmaMax$, $\chi_\varepsilon \in
  \EcMax{\chi}{\varepsilon}$,
  $\mu \in \simpleMin$, and $q \in \cQ$ 
  the strategy $\tpqs{\chi}{q}{\mu_\varepsilon} \in \simpleMax$
  is defined analogously.
\end{definition}
We need the following property of $\tpqs{\mu}{q}{\chi_\varepsilon}$ and
$\tpqs{\chi}{q}{\mu_\varepsilon}$ strategies.
\begin{proposition}
  \label{proposition:nstep-boundary-better-against-epsilon}
  For every arbitrary strategy $\mu \in \cSigmaMin$, 
  positive real $\varepsilon >0$,
  type-preserving boundary strategy $\chi \in \simpleMax$ of
  player Max, 
  $\varepsilon$-close strategy 
  $\chi_\varepsilon \in \EcMax{\chi}{\varepsilon}$ of player Max,
  and $q \in \cQ$ we have  
  \[
  \TIME(\RUN_n(q, \mu, \chi_\varepsilon)) \geq  
  \TIME(\RUN_n(q, \tpqs{\mu}{q}{\chi_\varepsilon}, \chi)) 
  - n\cdot\varepsilon,
  \] 
  for every $n \in \Nat$.
  Similarly for every arbitrary strategy $\chi \in \cSigmaMax$, 
  positive real $\varepsilon >0$,
  type-preserving boundary strategy $\mu \in \simpleMin$ of
  player Max, 
  $\varepsilon$-close strategy 
  $\mu_\varepsilon \in \EcMin{\mu}{\varepsilon}$ of player Min,
  and $q \in \cQ$ we have  
  \[
  \TIME(\RUN_n(q, \mu_\varepsilon, \chi)) \leq  
  \TIME(\RUN_n(q, \mu_\varepsilon, \tpqs{\chi}{q}{\mu_\varepsilon})) 
  + n\cdot\varepsilon,
  \] 
  for every $n \in \Nat$.  
\end{proposition}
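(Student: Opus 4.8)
The plan is to fix $q \in \cQ$, write $\mu' = \tpqs{\mu}{q}{\chi_\varepsilon}$, and prove the first inequality by a transition-by-transition comparison of the two runs $\rho = \RUN(q,\mu,\chi_\varepsilon)$ and $\widehat\rho = \RUN(q,\mu',\chi)$; the second (symmetric) inequality then follows by swapping the players and reversing every inequality. First I would record that, by the defining property of $\mu'$ together with Definition~\ref{def:epsilon-close-strat} (which forces $\chi_\varepsilon$ and $\chi$ to suggest moves with the same region component and the same action), the runs $\rho$ and $\widehat\rho$ have the same type. Hence they pass through the same sequence of regions $\seq{R_0,(R'_1,a_1),R_1,(R'_2,a_2),\dots}$, they reset the same clocks at the same transitions, and whether the $n$-th move belongs to Min or to Max is determined by the common region $R_{n-1}$. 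Writing $\rho = \seq{(s_0,R_0),(t_1,R'_1,a_1),(s_1,R_1),\dots}$ and $\widehat\rho = \seq{(s_0,R_0),(\widehat t_1,R'_1,a_1),(\widehat s_1,R_1),\dots}$, set $D_n = \TIME(\RUN_n(q,\mu',\chi)) - \TIME(\RUN_n(q,\mu,\chi_\varepsilon)) = \sum_{i=1}^{n}(\widehat t_i - t_i)$; the target inequality is exactly $D_n \le n\varepsilon$.

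The heart of the argument is a per-transition estimate. Fix $n\ge 1$ and compare the length-$(n{-}1)$ prefixes of $\rho$ and $\widehat\rho$, which have the same type. The boundary strategy acting at this transition --- $\mu'\in\simpleMin$ if $R_{n-1}\in\Rr_\mMIN$, else $\chi\in\simpleMax$ --- is type-preserving, so (Proposition~\ref{prop:boundary-strategies-timed-moves} together with the simple-function characterisation of infimum/supremum sojourn times) it associates with both prefixes the same boundary-timed-action/region pair $((b,c,a_n),R'_n)$, with $b\in\NATS{k}$ and $c\in C$ depending only on $R_{n-1}$ and $R'_n$, and $\widehat t_n = b - \widehat s_{n-1}(c)$ equals the infimum (if $R_{n-1}\in\Rr_\mMIN$) or supremum (if $R_{n-1}\in\Rr_\mMAX$) delay from $\widehat s_{n-1}$ into $\CLOSOP(R'_n)$. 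Along $\rho$, which is a run of $\cTt$ of the common type: if $R_{n-1}\in\Rr_\mMIN$ then the move of $\mu$ still lands in $\CLOSOP(R'_n)$, hence $t_n \ge b - s_{n-1}(c)$; if $R_{n-1}\in\Rr_\mMAX$ then $\chi_\varepsilon\in\EcMax{\chi}{\varepsilon}$ gives $t_n \ge b - s_{n-1}(c) - \varepsilon$. In either case
\[
\widehat t_n - t_n \;\le\; \big(s_{n-1}(c) - \widehat s_{n-1}(c)\big) \;+\; \varepsilon\cdot[\,R_{n-1}\in\Rr_\mMAX\,].
\]
Next I would convert the clock difference into the $D_\bullet$'s: letting $j=j_n<n$ be the last transition at which clock $c$ was reset (or $j=0$ if $c$ is not reset among transitions $1,\dots,n-1$), the common reset pattern gives $s_{n-1}(c) - \widehat s_{n-1}(c) = \sum_{i=j+1}^{n-1}(t_i - \widehat t_i) = D_j - D_{n-1}$, with $D_0=0$. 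Substituting yields $D_n \le D_{j_n} + \varepsilon\cdot[\,R_{n-1}\in\Rr_\mMAX\,]$.

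A strong induction on $n$ now finishes the first inequality. Let $m_n$ denote the number of Max-moves among the first $n$ transitions. Then $D_0 = 0 = \varepsilon m_0$, and since $j_n\le n-1$ and $m_{j_n}\le m_{n-1}$,
\[
D_n \;\le\; D_{j_n} + \varepsilon\cdot[\,R_{n-1}\in\Rr_\mMAX\,] \;\le\; \varepsilon m_{j_n} + \varepsilon\cdot[\,R_{n-1}\in\Rr_\mMAX\,] \;\le\; \varepsilon m_{n-1} + \varepsilon\cdot[\,R_{n-1}\in\Rr_\mMAX\,] \;=\; \varepsilon m_n \;\le\; n\varepsilon,
\]
which is exactly $\TIME(\RUN_n(q,\mu,\chi_\varepsilon)) \ge \TIME(\RUN_n(q,\tpqs{\mu}{q}{\chi_\varepsilon},\chi)) - n\varepsilon$. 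For the symmetric statement I would instead compare $\RUN(q,\mu_\varepsilon,\chi)$ with $\RUN(q,\mu_\varepsilon,\tpqs{\chi}{q}{\mu_\varepsilon})$, which again have the same type and the same reset pattern; now the boundary strategy of Max plays supremum delays, the $\varepsilon$-slack is charged to Min-moves via $\mu_\varepsilon\in\EcMin{\mu}{\varepsilon}$, all the inequalities above flip, and the corresponding difference is bounded by $\varepsilon$ times the number of Min-moves, hence by $n\varepsilon$.

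I expect the main obstacle to be the per-transition estimate, i.e.\ making rigorous that an arbitrary strategy of Min (resp.\ the $\varepsilon$-close strategy $\chi_\varepsilon$ of Max) which keeps the run on the common type must wait at least the boundary delay (resp.\ at least the boundary delay minus $\varepsilon$). This rests on correctly extracting the pinned integer $b$ and clock $c$ from the region geometry, on the fact that the infimum/supremum sojourn time into the closure of a region is a simple function of the source valuation with coefficients depending only on the two regions involved, and on carefully checking that the sequence of regions and the reset pattern are genuinely identical along the two runs being compared. Once this bookkeeping is in place, the telescoping identity $s_{n-1}(c)-\widehat s_{n-1}(c)=D_{j_n}-D_{n-1}$ and the one-line strong induction are routine.
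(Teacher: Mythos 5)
Your proof is correct, but it takes a genuinely different route from the paper's. The paper proves the claim by induction on $n$, peeling off the \emph{first} transition: it applies the inductive hypothesis to the length-$k$ suffix, invokes Proposition~\ref{proposition:time-is-regionally-simple} to write the $k$-step suffix time as a regionally simple function $\Ff$, and then uses Proposition~\ref{proposition:atg-nondecreasing} (continuity and monotonicity of $t \mapsto t + \Ff(\SUCC(q,(t,R'',a)))$ for simple $\Ff$) to argue that at a Min node the boundary delay attains the infimum with no extra loss, while at a Max node a direct computation with the form $\Ff(s_1,R_1)=b-s_1(c)$ and the bound $t_1 \geq t_1' - \varepsilon$ costs one additional $\varepsilon$. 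You instead compare the two runs transition by transition, convert the clock discrepancy $s_{n-1}(c)-\widehat s_{n-1}(c)$ into the telescoped difference $D_{j_n}-D_{n-1}$ via the common reset pattern, and close with a strong induction $D_n \le \varepsilon m_n$. Your version is more elementary --- it bypasses the simple-function/monotonicity machinery entirely, needing only that boundary delays have the form $b-s(c)$ with $(b,c)$ shared across prefixes of the same type --- and it makes the accounting of where each $\varepsilon$ is incurred explicit, yielding the marginally sharper bound $\varepsilon$ times the number of Max moves. The one point to make fully rigorous is the per-transition estimate at Min nodes: the cleanest argument is that type-preservation of $\tpqs{\mu}{q}{\chi_\varepsilon}$ forces $\BS{\tpqs{\mu}{q}{\chi_\varepsilon}}$ to return the \emph{same} pair $(b,c)$ on the two same-type prefixes, which simultaneously identifies $\inf\set{t : s_{n-1}+t \in \CLOSOP(R'_n)}$ with $b - s_{n-1}(c)$ and hence gives $t_n \ge b - s_{n-1}(c)$ for the arbitrary $\mu \in \cSigmaMin$; with that observation in place your telescoping and induction go through.
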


The following result is an easy corollary of 
Proposition~\ref{proposition:nstep-boundary-better-against-epsilon}. 
\begin{corollary}
  \label{corollary:boundary-better-against-epsilon}
  For every $\chi \in \simpleMax$, $\mu \in \cSigmaMin$, 
  $\varepsilon > 0$, $\chi_\varepsilon \in \EcMax{\chi}{\varepsilon}$,
  and $q \in \cQ$ we have that
  \[
  \AVERAGETIME_\mMAX(q, \mu, \chi_\varepsilon)) \geq  
  \AVERAGETIME_\mMAX(q, \tpqs{\mu}{q}{\chi_\varepsilon}, \chi)) - \varepsilon.
  \] 
  Similarly for every $\mu \in \simpleMin$, $\chi \in \cSigmaMax$, 
  $\varepsilon > 0$, $\mu_\varepsilon \in \EcMin{\mu}{\varepsilon}$,
  and $q \in \cQ$ we have that  
  \[
  \AVERAGETIME_\mMIN(q, \mu_\varepsilon, \chi) \leq 
  \AVERAGETIME_\mMIN(q, \mu, \tpqs{\chi}{q}{\mu_\varepsilon}) + \varepsilon.
  \] 
\end{corollary}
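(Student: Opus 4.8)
The plan is to derive Corollary~\ref{corollary:boundary-better-against-epsilon} from Proposition~\ref{proposition:nstep-boundary-better-against-epsilon} by the standard limiting argument used for average-price payoffs: divide the finite-horizon time inequality by $n$, take limits, and observe that the additive error $n\cdot\varepsilon$ becomes a bounded additive error $\varepsilon$ after normalisation. First I would recall the definitions $\AVERAGETIME_\mMAX(q, \sigma_1, \sigma_2) = \liminf_{n\to\infty}(1/n)\cdot\TIME(\RUN_n(q,\sigma_1,\sigma_2))$ and, dually, $\AVERAGETIME_\mMIN(q,\sigma_1,\sigma_2) = \limsup_{n\to\infty}(1/n)\cdot\TIME(\RUN_n(q,\sigma_1,\sigma_2))$. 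Fix an arbitrary $\chi \in \simpleMax$, $\mu \in \cSigmaMin$, $\varepsilon > 0$, $\chi_\varepsilon \in \EcMax{\chi}{\varepsilon}$, and $q \in \cQ$.

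From Proposition~\ref{proposition:nstep-boundary-better-against-epsilon} we have, for every $n \in \Nat$,
\[
\frac{1}{n}\cdot\TIME(\RUN_n(q, \mu, \chi_\varepsilon)) \geq \frac{1}{n}\cdot\TIME(\RUN_n(q, \tpqs{\mu}{q}{\chi_\varepsilon}, \chi)) - \varepsilon .
\]
Taking $\liminf_{n\to\infty}$ on both sides, and using that $\liminf$ is superadditive together with the fact that the right-hand side is a $\liminf$ minus the constant $\varepsilon$, we obtain
\[
\AVERAGETIME_\mMAX(q, \mu, \chi_\varepsilon) \;\geq\; \AVERAGETIME_\mMAX(q, \tpqs{\mu}{q}{\chi_\varepsilon}, \chi) - \varepsilon ,
\]
which is exactly the first claimed inequality. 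The second inequality is entirely symmetric: starting from the dual finite-horizon bound $\TIME(\RUN_n(q, \mu_\varepsilon, \chi)) \leq \TIME(\RUN_n(q, \mu_\varepsilon, \tpqs{\chi}{q}{\mu_\varepsilon})) + n\cdot\varepsilon$, divide by $n$, and take $\limsup_{n\to\infty}$, using subadditivity of $\limsup$ and that adding the constant $\varepsilon$ commutes with $\limsup$, to get $\AVERAGETIME_\mMIN(q, \mu_\varepsilon, \chi) \leq \AVERAGETIME_\mMIN(q, \mu, \tpqs{\chi}{q}{\mu_\varepsilon}) + \varepsilon$.

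Since essentially all the content has been pushed into Proposition~\ref{proposition:nstep-boundary-better-against-epsilon}, there is no real obstacle here; the only point requiring a word of care is the direction of the inequalities for $\liminf$/$\limsup$ when a constant is added or subtracted (here it is harmless because $\varepsilon$ is a fixed constant, so $\liminf(x_n - \varepsilon) = (\liminf x_n) - \varepsilon$ and $\limsup(x_n + \varepsilon) = (\limsup x_n) + \varepsilon$), and that no division-by-zero issue arises since we only ever take $n \to \infty$. I would present this as a two-line argument.
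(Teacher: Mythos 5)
Your proof is correct and matches the paper's intended argument exactly: the paper offers no separate proof, stating only that the corollary follows easily from Proposition~\ref{proposition:nstep-boundary-better-against-epsilon}, and the route you take (divide the finite-horizon inequality by $n$, pass to $\liminf$ resp.\ $\limsup$, and use that subtracting or adding the fixed constant $\varepsilon$ commutes with these limits) is precisely that easy derivation. The one point you gloss over with ``entirely symmetric'' is that the second inequality of the Proposition has $\mu_\varepsilon$ in both arguments on its right-hand side whereas the Corollary's right-hand side has $\mu$; this mismatch is already present in the paper's own statements (almost certainly a typo in one of them), and your silent substitution of $\mu$ for $\mu_\varepsilon$ is the natural repair, but it should be flagged rather than passed off as literal symmetry.
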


To summarise the relations between various strategies, note that the
following inclusions hold:
\begin{eqnarray*}
  \simpleMin \subseteq \hSigmaMin \subseteq \cSigmaMin \subseteq \pSigmaMin
  & \text{ and }&
  \rSigmaMin \subseteq \cSigmaMin \subseteq \pSigmaMin,
  \hspace{1em} \text{and} \hspace{1em}\\
  \simpleMax \subseteq \hSigmaMax \subseteq \cSigmaMax \subseteq \pSigmaMax
  & \text{ and } &
  \rSigmaMax \subseteq \cSigmaMax \subseteq \pSigmaMax.
\end{eqnarray*}

\section{Average-Time Games on Region Graphs}
\label{section:atg-on-region-graphs}

We define
$\AVERAGETIME_\mMIN: \pQ \times \pSigmaMin \times \pSigmaMax \to
\Rplus$ and 
$\AVERAGETIME_\mMAX: \pQ \times \pSigmaMin \times \pSigmaMax \to
\Rplus$ in the following manner:
\begin{eqnarray*}
  \AVERAGETIME_\mMIN(q, \mu, \chi) &=& \limsup_{n \to \infty} \frac{1}{n}\cdot
  \TIME(\RUN_n(q, \mu, \chi)),~\text{ and }\\
  \AVERAGETIME_\mMAX(q, \mu, \chi) &=& \liminf_{n \to \infty} \frac{1}{n}\cdot
  \TIME(\RUN_n(q, \mu, \chi)),
\end{eqnarray*}
where $\mu \in \pSigmaMin$, $\chi \in \pSigmaMax$ and $q \in \pQ$.
For average-time games on a graph $\Gg \in \set{\cTt, \hTt, \rTt}$ we
define the lower-value $\LVAL^{\Gg}(q)$, the upper-value
$\UVAL^{\Gg}(q)$ and the value $\VAL^{\Gg}(q)$ of a configuration 
$q \in \cQ$ in a straightforward manner. 

From construction it clear that the difference between an
average-time game on a timed automaton and the average-time game on
corresponding region graph is purely syntactical.
Hence if the average-time game on region graph $\rTt$ is determined then
average-time game on timed automaton $\Tt$ is determined as well. 

\begin{proposition}
  \label{proposition:region-graph-is-timed-automaton}
  An average-time game on timed automaton $\Tt$ is determined, if
  the corresponding average-time game on region graph $\rTt$ is
  determined. 
  Moreover for all $s \in S$ we have that $\VAL(s) = 
  \VAL^{\rTt}(s, [s])$.
\end{proposition}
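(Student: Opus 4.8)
The plan is to show that the average-time game on $\Tt$ and the average-time game on $\rTt$ are literally the same game, up to a relabelling of configurations and runs. The key observation is that the region graph $\rTt$ carries exactly the same semantic content as $\Tt$: a state of $\rTt$ is a pair $(s, R)$ with $s \in R$, so $R$ is redundant — it is determined by $s$ as $R = [s]$. Formally, I would first exhibit the bijection $s \mapsto (s, [s])$ between $S$ and $\rS$, and check that it induces a bijection between transitions: $(s, (t,a), s')$ is a transition of $\Tt$ iff, setting $R'' = [s+t]$, the triple $((s,[s]), (t, R'', a), (s', [s']))$ lies in $\rE$. This is immediate from the definition of $\rE$ together with the fact that $s \xrightarrow{\;}_t s'' \xrightarrow{a} s'$ forces $[s] \xrightarrow{\;}_* [s+t] \xrightarrow{a} [s']$, and conversely.

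Next I would lift this bijection to runs and strategies. A finite run of $\Tt$ from $s$ corresponds to a finite run of $\rTt$ from $(s,[s])$ by inserting the (uniquely determined) intermediate region $R''_i = [s_{i-1} + t_i]$ into each timed action; this is a bijection $\FRUNS(s) \to \FRUNS^{\rTt}((s,[s]))$, and likewise for infinite runs, and it preserves $\TIME$ of every prefix. Consequently a strategy $\mu \in \Sigma_\mMIN$ for $\Tt$ corresponds to an admissible strategy in $\rSigmaMin$ (the extra region label it must output is forced), and symmetrically for Max; this correspondence is a bijection $\Sigma_\mMIN \to \rSigmaMin$, $\Sigma_\mMAX \to \rSigmaMax$. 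Under all these correspondences, $\RUN(s, \mu, \chi)$ maps to $\RUN((s,[s]), \mu, \chi)$ and hence $\Aa_\mMIN(s, \mu, \chi) = \AVERAGETIME_\mMIN((s,[s]), \mu, \chi)$ and similarly for Max, since the average-time payoff depends only on the sequence of $\TIME$ values of prefixes.

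From these identities the conclusion is purely formal: taking infima and suprema over the matched strategy sets gives $\UVAL^{\Tt}(s) = \UVAL^{\rTt}(s,[s])$ and $\LVAL^{\Tt}(s) = \LVAL^{\rTt}(s,[s])$ for every $s \in S$. Hence if the game on $\rTt$ is determined, i.e. $\LVAL^{\rTt}(s,[s]) = \UVAL^{\rTt}(s,[s])$, then $\LVAL^{\Tt}(s) = \UVAL^{\Tt}(s)$, so the game on $\Tt$ is determined and $\VAL^{\Tt}(s) = \VAL^{\rTt}(s,[s])$.

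I do not expect any real obstacle here; the statement is essentially a bookkeeping lemma asserting that the region graph is a faithful reencoding of the timed automaton. The only point that needs a line of care is the well-definedness of the intermediate region $R''$ in a transition — that $R'' = [s+t]$ is the unique region making $((s,R),(t,R'',a),(s',R'))$ a transition of $\rTt$ — but this follows directly from the requirement $s+t \in R''$ in the definition of $\rE$, since regions partition $\Vv$. If one wants to be scrupulous, one should also note that the standing assumption that every state has an outgoing transition transfers across the bijection, so that runs can always be extended on both sides and there is no mismatch between finite and infinite runs.
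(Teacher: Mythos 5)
Your proposal is correct and follows exactly the route the paper intends: the paper gives no explicit proof, merely remarking that the difference between the game on $\Tt$ and on $\rTt$ is ``purely syntactical,'' and your argument simply fills in the routine details of that same state/transition/run/strategy correspondence. The only point worth a word more of care is the converse direction of the transition correspondence, which uses that $S$ and the sets $E(a)$ are zones (unions of regions), so region-level enabledness coincides with state-level enabledness.
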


The following is the main result of this section.
\begin{theorem}
  \label{theorem:atg-on-graphs-are-determined} 
  Let $\Tt$ be a timed automaton.
  Average-time games on the timed automaton~$\Tt$, the closed region
  graph~$\cTt$, the region graph~$\rTt$, and the boundary region graph
  $\hTt$ are determined. 
  Moreover for every~$s \in S$ in a timed automaton~$\Tt$, we have:
  \[
  \VAL^{\Tt}(s) = \VAL^{\rTt}(s, [s]) = \VAL^{\cTt}(s, [s]) =
  \VAL^{\hTt}(s, [s]). 
  \]
\end{theorem}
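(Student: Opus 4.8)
The plan is to establish the chain of equalities by proving two "sandwiching" inequalities that link the four objects, using the rich collection of strategy-transfer lemmas already developed. The backbone is the observation that $\hSigmaMin \subseteq \cSigmaMin$ and $\rSigmaMin \subseteq \cSigmaMin$ (and symmetrically for Max), so that every boundary or admissible strategy is in particular a strategy in the closed region graph; hence bounds proven in $\cTt$ transfer to $\hTt$ and $\rTt$. Since Proposition~\ref{proposition:region-graph-is-timed-automaton} already reduces $\VAL^{\Tt}(s)$ to $\VAL^{\rTt}(s,[s])$, it suffices to prove, for every $s \in S$,
\[
\UVAL^{\cTt}(s,[s]) \leq \VAL^{\hTt}(s,[s]) \leq \LVAL^{\rTt}(s,[s]) \leq \LVAL^{\cTt}(s,[s]),
\]
together with the trivial inequalities $\LVAL^{\Gg} \leq \UVAL^{\Gg}$ for each graph $\Gg$ and the monotonicity $\LVAL^{\cTt} \leq \LVAL^{\rTt}$, $\UVAL^{\rTt} \leq \UVAL^{\cTt}$ coming from strategy inclusion. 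Once these are in place, all four values are squeezed between $\LVAL^{\cTt}$ and $\UVAL^{\cTt}$, and the determinacy of the closed region graph game (proved along the way) collapses them to a single number.

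First I would show that, in the closed region graph, it is enough for each player to consider type-preserving boundary strategies. Concretely: fix any $\mu \in \cSigmaMin$; I want to find $\mu' \in \simpleMin$ with $\UVAL^{\cTt}$-value no larger. The tool is $\tpqs{\mu}{q}{\chi}$ together with Corollary~\ref{corollary:type-preserving-is-better}, which says that against a \emph{type-preserving} Max strategy $\chi$, switching Min from $\mu$ to $\tpqs{\mu}{q}{\chi}$ only decreases $\AVERAGETIME_\mMIN$. Dually, Corollary~\ref{corollary:boundary-better-against-epsilon} handles the $\varepsilon$-close admissible approximations needed to pass between $\hTt$ and $\rTt$: a type-preserving boundary strategy can be shadowed by an admissible strategy in $\EcMin{\mu}{\varepsilon}$ losing at most $\varepsilon$ in the average. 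Because the ambient finite-graph game on the reachable part of $\hTt$ (finite by Proposition~\ref{proposition:brg-is-finite}) is determined with positional optimal strategies by Theorem~\ref{theorem:apg-determined}, and positional strategies on the finite reachable subgraph lift to type-preserving boundary strategies, player Max has an optimal $\chi^* \in \simpleMax$ in $\hTt$. Then for any Min strategy $\mu$ in $\cTt$, applying $\tpqs{\mu}{q}{\chi^*}$ and regional constancy (Proposition~\ref{prop:rcbs-are-regionally-constant}) pins the value and yields $\UVAL^{\cTt}(s,[s]) \leq \VAL^{\hTt}(s,[s])$. The symmetric argument, using an optimal $\mu^* \in \simpleMin$ in $\hTt$ and the $\varepsilon$-close approximations $\mu^*_\varepsilon \in \EcMin{\mu^*}{\varepsilon}$ playing inside $\rTt$, gives $\VAL^{\hTt}(s,[s]) \leq \LVAL^{\rTt}(s,[s]) + \varepsilon$ for every $\varepsilon > 0$, hence $\leq \LVAL^{\rTt}(s,[s])$.

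The main obstacle I anticipate is the bookkeeping around the $\varepsilon$-close admissible strategies: the $n\cdot\varepsilon$ error term in Proposition~\ref{proposition:nstep-boundary-better-against-epsilon} is additive per transition, and one must divide by $n$ and take $\limsup/\liminf$ carefully so that it becomes a uniform $\varepsilon$ in the average (this is exactly what Corollary~\ref{corollary:boundary-better-against-epsilon} packages, so the danger is only in applying it in the right order relative to the $\sup_\chi$ and $\inf_\mu$). A subtler point is that when Min plays $\mu \in \cSigmaMin$ and Max responds with a boundary strategy, the run need not stay in the \emph{reachable-finite} part of $\hTt$ unless Min too is boundary; the fix is that we only ever compare a fixed arbitrary strategy of one player against an optimal type-preserving strategy of the other, and the "reduction to type-preserving" lemma restores finiteness. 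Once the four values are sandwiched and the closed region graph game is seen to be determined (its value equals both $\LVAL^{\cTt}$ and $\UVAL^{\cTt}$ by the squeeze), the stated equalities follow immediately, and Theorem~\ref{theorem:atg-are-determined} drops out as the special case recorded after Proposition~\ref{proposition:region-graph-is-timed-automaton}.
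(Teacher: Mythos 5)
Your overall architecture is exactly the paper's: determinacy of $\hTt$ from finiteness of its reachable part plus positional determinacy of finite average-price games, promotion to optimal type-preserving boundary strategies, transfer to $\cTt$ via $\tpqs{\mu}{q}{\chi}$ and Corollary~\ref{corollary:type-preserving-is-better}, transfer to $\rTt$ via the $\varepsilon$-close admissible strategies, and the syntactic identification of $\rTt$ with $\Tt$. However, the way you assemble the inequalities does not close, for two concrete reasons. First, the ``monotonicity $\LVAL^{\cTt} \leq \LVAL^{\rTt}$, $\UVAL^{\rTt} \leq \UVAL^{\cTt}$ coming from strategy inclusion'' is not valid: passing from $\cTt$ to $\rTt$ shrinks \emph{both} players' strategy sets, so the outer $\sup$ decreases while the inner $\inf$ increases (and vice versa for the upper value), and neither value is monotone. (You also assert $\LVAL^{\rTt} \leq \LVAL^{\cTt}$ in your displayed chain, the opposite of the monotonicity you claim, with no argument for either.) Since bounding $\UVAL^{\rTt}$ from above is delegated entirely to this monotonicity, determinacy of $\rTt$ is not actually established.

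Second, the player/value pairings are reversed in two places. Fixing Max's optimal $\chi^* \in \simpleMax$ and improving an arbitrary $\mu \in \cSigmaMin$ to $\tpqs{\mu}{q}{\chi^*}$ shows $\inf_{\mu}\AVERAGETIME_\mMAX(q,\mu,\chi^*) \geq \VAL^{\hTt}(q)$, i.e.\ $\LVAL^{\cTt} \geq \VAL^{\hTt}$ --- not $\UVAL^{\cTt} \leq \VAL^{\hTt}$, which instead requires Min's optimal $\mu^* \in \simpleMin$ together with $\tpqs{\chi}{q}{\mu^*}$. Likewise, a Min strategy $\mu^*_\varepsilon \in \EcMin{\mu^*}{\varepsilon}$ played in $\rTt$ witnesses $\UVAL^{\rTt} \leq \VAL^{\cTt} + \varepsilon$ (this is Lemma~\ref{lemma:opt-strat-is-epsilon-opt} combined with $\rSigmaMax \subseteq \cSigmaMax$); it cannot give $\VAL^{\hTt} \leq \LVAL^{\rTt} + \varepsilon$, which needs Max's $\chi^*_\varepsilon \in \EcMax{\chi^*}{\varepsilon}$. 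Once you correct these pairings you obtain $\UVAL^{\rTt} \leq \VAL^{\cTt} + \varepsilon$ and $\LVAL^{\rTt} \geq \VAL^{\cTt} - \varepsilon$ for every $\varepsilon>0$, which together with $\LVAL^{\rTt} \leq \UVAL^{\rTt}$ sandwiches everything without any appeal to the invalid monotonicity; this is precisely the paper's route through Theorems~\ref{theorem:optimal-xi-in-bar} and~\ref{theorem:region-graph-is-determined}.
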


This theorem follows from
Theorem~\ref{lemma:boundary-region-graph-determined},
Theorem~\ref{theorem:optimal-xi-in-bar}, 
Theorem~\ref{theorem:region-graph-is-determined}, and
Proposition~\ref{proposition:region-graph-is-timed-automaton}.

Moreover Theorem~\ref{theorem:atg-on-graphs-are-determined} and
Proposition~\ref{prop:rcbs-are-regionally-constant} let us conclude
the following lemma about the value of average-time games on timed
automata. 

\begin{lemma}
  \label{lemma:atg-regionally-constant}
  The value of every average-time game is regionally constant. 
\end{lemma}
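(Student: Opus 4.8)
The plan is to reduce regional constancy of the value to the already-established equality $\VAL^{\Tt}(s) = \VAL^{\hTt}(s,[s])$ from Theorem~\ref{theorem:atg-on-graphs-are-determined}, together with the existence of optimal type-preserving boundary strategies asserted in that theorem's supporting results (Theorem~\ref{theorem:optimal-xi-in-bar}) and the regional constancy of average time for type-preserving strategy pairs (Proposition~\ref{prop:rcbs-are-regionally-constant}). Concretely, fix a region $R \in \Rr$ and two states $s, s' \in S$ with $[s] = [s'] = R$; I must show $\VAL^{\Tt}(s) = \VAL^{\Tt}(s')$, equivalently $\VAL^{\hTt}(s,R) = \VAL^{\hTt}(s',R)$ for the configurations $(s,R), (s',R) \in \hS(R)$.

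First I would invoke the existence of optimal type-preserving boundary strategies: there is $\mu^* \in \simpleMin$ and $\chi^* \in \simpleMax$ that are optimal in the boundary region game graph $\hGg$, so that $\VAL^{\hTt}(q) = \AVERAGETIME_\mMIN(q, \mu^*, \chi^*) = \AVERAGETIME_\mMAX(q, \mu^*, \chi^*)$ for all $q \in \hS$. (If only $\varepsilon$-optimal type-preserving strategies are available, the same argument goes through with an $\varepsilon$ that is then sent to $0$ at the end; I would structure the proof to tolerate either situation.) Now apply Proposition~\ref{prop:rcbs-are-regionally-constant} to the pair $(\mu^*, \chi^*)$: since $\mu^* \in \simpleMin$ and $\chi^* \in \simpleMax$, the functions $\AVERAGETIME_\mMIN(\cdot, \mu^*, \chi^*)$ and $\AVERAGETIME_\mMAX(\cdot, \mu^*, \chi^*)$ are regionally constant on $\cQ \supseteq \hS$. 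In particular they take the same value on $(s,R)$ and $(s',R)$. Chaining the equalities gives
\[
\VAL^{\hTt}(s,R) = \AVERAGETIME_\mMIN((s,R),\mu^*,\chi^*) = \AVERAGETIME_\mMIN((s',R),\mu^*,\chi^*) = \VAL^{\hTt}(s',R),
\]
and then $\VAL^{\Tt}(s) = \VAL^{\hTt}(s,R) = \VAL^{\hTt}(s',R) = \VAL^{\Tt}(s')$ by Theorem~\ref{theorem:atg-on-graphs-are-determined}. Since $R$, $s$, $s'$ were arbitrary, $\VAL^{\Tt}$ is regionally constant, which is the claim.

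The main obstacle is pinning down exactly which optimality statement for type-preserving boundary strategies the paper actually proves downstream (Theorem~\ref{theorem:optimal-xi-in-bar} is only referenced, not stated in the excerpt): whether one gets genuinely optimal $\mu^* \in \simpleMax$ and $\chi^* \in \simpleMax$ on $\hGg$, or merely $\varepsilon$-optimal ones, and whether optimality is uniform across all starting configurations or only for a fixed one. If it is the non-uniform or $\varepsilon$-version, I would first fix the region $R$, pick (for each $\varepsilon$) a type-preserving $\varepsilon$-optimal Min strategy $\mu^*_\varepsilon$ and a best-response-style type-preserving Max strategy, use Corollary~\ref{corollary:type-preserving-is-better} to replace any non-type-preserving best response by a type-preserving one without hurting the relevant player, apply Proposition~\ref{prop:rcbs-are-regionally-constant} to conclude $\AVERAGETIME$ is constant over $\hS(R)$ for that pair, and obtain $|\VAL^{\hTt}(s,R) - \VAL^{\hTt}(s',R)| \le 2\varepsilon$; letting $\varepsilon \to 0$ finishes. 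Everything else is bookkeeping: matching up configurations $(s,R)$ with states $s$ of $\Tt$ via Proposition~\ref{proposition:region-graph-is-timed-automaton}, and noting that determinacy (Theorem~\ref{theorem:atg-on-graphs-are-determined}) is what licenses writing $\VAL$ in the first place.
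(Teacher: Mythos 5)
Your proposal is correct and is essentially the paper's own argument: the paper derives the lemma directly from Theorem~\ref{theorem:atg-on-graphs-are-determined} together with Proposition~\ref{prop:rcbs-are-regionally-constant}, relying (as you do) on the optimal type-preserving boundary strategies supplied by Theorem~\ref{theorem:optimal-xi-in-hat} and Lemma~\ref{lemma:xi-best-response-in-hat}, whose region-uniform mutual-best-response statement resolves the uniformity worry you raise. (Minor slip: you wrote $\mu^* \in \simpleMax$ where you mean $\mu^* \in \simpleMin$.)
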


An interesting implication of Lemma~\ref{lemma:atg-regionally-constant}
is that corner-point abstraction is sufficient to solve average-time
games with an arbitrary initial state.

\subsection{Determinacy of Average-Time Games on the Boundary Region
  Graph} 

Positional determinacy of average-time games on the boundary region
graph is immediate from Proposition~\ref{proposition:brg-is-finite}
and Theorem~\ref{theorem:apg-determined}.

\begin{theorem}
  \label{lemma:boundary-region-graph-determined}
  The average-time game on $\hTt$ is determined, and there are optimal 
  positional strategies in~$\hTt$, i.e., for every $q \in \cQ$, we
  have: 
  \[
  \VAL^{\hTt}(q) 
  = 
  \inf_{\mu \in \widehat{\Pi}_\mMIN} \sup_{\chi \in \hSigmaMax} 
  \AVERAGETIME_\mMIN(q, \mu, \chi)
  = 
  \sup_{\chi \in \widehat{\Pi}_\mMAX} \inf_{\mu \in \hSigmaMin} 
  \AVERAGETIME_\mMAX(q, \mu, \chi). 
  \]
\end{theorem}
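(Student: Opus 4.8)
The statement to prove is Theorem~\ref{lemma:boundary-region-graph-determined}: the average-time game on the boundary region graph $\hTt$ is determined with optimal positional strategies.

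\textbf{Proof proposal.}

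The plan is to reduce the average-time game on $\hTt$ to a classical average-price game on a finite graph and then invoke Theorem~\ref{theorem:apg-determined}. The crucial enabling fact is Proposition~\ref{proposition:brg-is-finite}: from any fixed configuration $q \in \cQ$, the set of reachable configurations in $\hTt$ is finite. So, although $\hS$ itself is infinite, restricting attention to the reachable sub-graph from a given starting configuration yields a finite labelled transition system.

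First I would fix a configuration $q \in \cQ$ and let $\hTt_q = (\hS_q, \hE_q)$ be the reachable sub-graph of $\hTt$ from $q$; by Proposition~\ref{proposition:brg-is-finite} this is a finite graph. Next I would build an average-price game $\Gamma_q = (V, E, V_\mMAX, V_\mMIN, \pi)$ on this finite graph: take $V = \hS_q$ with the partition inherited from $\hS_\mMIN, \hS_\mMAX$; put an edge $(v, v') \in E$ whenever there is a transition $(v, (t, R'', a), v') \in \hE_q$ (one must check that every state has at least one outgoing transition, which follows from the standing assumption that every $s \in S$ has an enabled action together with the boundary-graph construction); and set the price $\pi(v, (t,R'',a), v')$ of such an edge to be the elapsed time $t$ — note that in $\hTt$ the time $t$ of a boundary transition is determined by the source configuration and the target region (it equals $b - s(c)$ for the corresponding boundary timed action, by Proposition~\ref{prop:boundary-strategies-timed-moves}), so the edge price is well-defined, and since $\hS_q$ is finite there are only finitely many distinct such values; scaling by a common denominator makes them integers, which does not affect who wins or the positional-determinacy conclusion. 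The average-price payoff $\Aa_\mMIN/\Aa_\mMAX$ of a play in $\Gamma_q$ then coincides exactly with $\AVERAGETIME_\mMIN/\AVERAGETIME_\mMAX$ of the corresponding run of $\hTt$, because $\TIME(\RUN_n) = \sum_{i=1}^n t_i$ is precisely the partial sum of edge prices.

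With this correspondence in place, Theorem~\ref{theorem:apg-determined} applied to $\Gamma_q$ gives a value and optimal positional strategies for both players on $\hTt_q$. Doing this for every $q$, one then has to glue the local optimal positional strategies into a single positional strategy on all of $\hS$: since a positional strategy in $\hTt$ is just a function $\hS_\mMIN \to \Rplus \times \Rr \times A$ (resp. for Max), and the reachable sub-graphs from different starting configurations may overlap, I would either (i) observe that on the overlap the value is the same (the value of a configuration does not depend on which starting point we reached it from, because the reachable sub-graph from any configuration is determined by that configuration alone), so the optimal moves can be chosen consistently; or more simply (ii) note that it suffices to state positional determinacy per starting configuration, and the quantifier ordering in the theorem statement (``for every $q \in \cQ$'') already allows the optimal positional strategy to depend on $q$. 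Either way the displayed equality
\[
\VAL^{\hTt}(q) = \inf_{\mu \in \widehat{\Pi}_\mMIN}\sup_{\chi \in \hSigmaMax} \AVERAGETIME_\mMIN(q,\mu,\chi) = \sup_{\chi \in \widehat{\Pi}_\mMAX}\inf_{\mu \in \hSigmaMin} \AVERAGETIME_\mMAX(q,\mu,\chi)
\]
follows from the corresponding equality for $\Gamma_q$, using that strategies in $\hTt$ restricted to $\hTt_q$ are exactly strategies in $\Gamma_q$ and conversely any strategy in $\Gamma_q$ extends to one in $\hTt$ (the extension's behaviour off the reachable part is irrelevant).

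The main obstacle is the bookkeeping in the reduction: verifying that edge prices are well-defined and can be taken to be integers (using Proposition~\ref{prop:boundary-strategies-timed-moves} and finiteness of $\hS_q$), checking the standing ``no deadlock'' condition transfers to $\hTt_q$ so that $\Gamma_q$ is a legitimate average-price game graph (every vertex has an outgoing edge), and confirming that the $\limsup$/$\liminf$ definitions of $\Aa_\mMIN,\Aa_\mMAX$ match on both sides of the correspondence. None of these is deep, but they are exactly the points where the boundary region graph's special structure (finite reachability, boundary-determined times) is doing the work, so they should be stated carefully rather than waved through.
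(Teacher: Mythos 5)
Your proposal is correct and follows exactly the paper's own (one-line) argument: the paper derives this theorem immediately from Proposition~\ref{proposition:brg-is-finite} (finite reachable sub-graph) together with Theorem~\ref{theorem:apg-determined} (positional determinacy of finite average-price games), which is precisely your reduction, just with the bookkeeping spelled out. The only detail worth flagging is that the transition times $b - s(c)$ need not admit a common integer scaling when $s$ has irrational coordinates, but Liggett--Lippman's positional determinacy holds for real-valued prices, so this does not affect the conclusion.
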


In fact, in a boundary region graph, there are optimal type-preserving
boundary strategies. 
Before we show that, we need the following result.
\begin{lemma}
  \label{lemma:xi-best-response-in-hat}
  In $\hTt$, if $\mu \in \hSigmaMin$ and $\chi \in \hSigmaMax$ are
  mutual best responses from $q \in \cQ$, then 
  $\tpq{\mu}{q} \in \simpleMin$ and $\tpq{\chi}{q} \in \simpleMax$ are
  mutual best responses from every $q' \in \cQ([q])$. 
\end{lemma}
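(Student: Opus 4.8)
The plan is to leverage the regional-constancy machinery developed in Section~\ref{subsection:type-preserving} together with the fact that $\mu,\chi$ are mutual best responses. First I would fix $q = (s,R)$ and set $\mu^* = \tpq{\mu}{q}$, $\chi^* = \tpq{\chi}{q}$; by Definition~\ref{definition:tpbs-agree-bs}, these are type-preserving boundary strategies agreeing with $\mu$ and $\chi$ on all finite pre-runs starting from $q$. The immediate consequence is that $\RUN(q,\mu^*,\chi^*) = \RUN(q,\mu,\chi)$, so in particular $\AVERAGETIME_\mMIN(q,\mu^*,\chi^*) = \AVERAGETIME_\mMIN(q,\mu,\chi) = \VAL^{\hTt}(q)$ (the last equality because mutual best responses pin down the value). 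Since $\mu^*,\chi^*$ are both type-preserving, Proposition~\ref{prop:rcbs-are-regionally-constant} tells us that $\AVERAGETIME_\mMIN(\cdot,\mu^*,\chi^*)$ is regionally constant, hence equals $\VAL^{\hTt}(q)$ on all of $\pQ([q])$. By Lemma~\ref{lemma:atg-regionally-constant} (or directly, by regional constancy of the value on $\hTt$), $\VAL^{\hTt}(q') = \VAL^{\hTt}(q)$ for every $q' \in \pQ([q])$ as well.

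The substance of the argument is to show $\mu^*$ is a best response to $\chi^*$ from every $q' \in \pQ([q])$, and symmetrically. Take an arbitrary $\mu' \in \hSigmaMax$... I mean an arbitrary counter-strategy; to show $\mu^*$ is optimal against $\chi^*$ it suffices to show that for every $q' \in \pQ([q])$ and every strategy $\nu \in \hSigmaMax$ of Max, $\AVERAGETIME_\mMIN(q',\mu^*,\nu) \le \VAL^{\hTt}(q')$. Here I would use Corollary~\ref{corollary:type-preserving-is-better}: against the type-preserving strategy $\mu^*$ of Min, Max never does worse by replacing $\nu$ with $\tpqs{\nu}{q'}{\mu^*} \in \simpleMax$, i.e. $\AVERAGETIME_\mMIN(q',\mu^*,\nu) \le \AVERAGETIME_\mMIN(q',\mu^*,\tpqs{\nu}{q'}{\mu^*})$. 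Now both $\mu^*$ and $\tpqs{\nu}{q'}{\mu^*}$ are type-preserving, so by Proposition~\ref{prop:rcbs-are-regionally-constant} the right-hand side is regionally constant, hence equals its value at $q$, namely $\AVERAGETIME_\mMIN(q,\mu^*,\tpqs{\nu}{q'}{\mu^*})$. Finally, since $\mu = \mu^*$ along runs from $q$ and $\mu$ is a best response to $\chi$ from $q$, this quantity is bounded by $\sup_{\chi'} \AVERAGETIME_\mMIN(q,\mu,\chi') = \VAL^{\hTt}(q) = \VAL^{\hTt}(q')$. Chaining these gives $\AVERAGETIME_\mMIN(q',\mu^*,\nu) \le \VAL^{\hTt}(q')$, so $\VAL^{\mu^*}(q') \le \VAL^{\hTt}(q')$; combined with $\AVERAGETIME_\mMIN(q',\mu^*,\chi^*) = \VAL^{\hTt}(q')$ from the first paragraph, $\mu^*$ is a best response to $\chi^*$ from $q'$. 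The symmetric argument for Max, using the second halves of Corollary~\ref{corollary:type-preserving-is-better} and Proposition~\ref{proposition:nstep-type-preserving-is-better} and the fact that $\chi$ is a best response to $\mu$ from $q$, shows $\chi^*$ is a best response to $\mu^*$ from $q'$.

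The main obstacle I anticipate is the bookkeeping around the "swap to type-preserving" step: one must be careful that $\tpqs{\nu}{q'}{\mu^*}$ is well-defined (Definition~\ref{definition:tp-better-than-normal} requires its reference strategy in the first slot to be in $\cSigmaMax$, which holds since $\hSigmaMax \subseteq \cSigmaMax$), and that its type-preservation genuinely forces the regional constancy of $\AVERAGETIME_\mMIN(\cdot,\mu^*,\tpqs{\nu}{q'}{\mu^*})$ across the whole region $[q]$ — this is exactly the hypothesis of Proposition~\ref{prop:rcbs-are-regionally-constant}, so it goes through, but the reader needs to see that $q$ and $q'$ lie in the same region $[q]$ throughout. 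A secondary subtlety is that "best response" is quantified over all $s' \in S$ (here all $q' \in \cQ$) in the definition, so I should phrase the conclusion as: the single pair $(\mu^*,\chi^*)$ simultaneously witnesses mutual-best-response for every $q' \in \pQ([q])$, which the regional-constancy argument delivers uniformly.
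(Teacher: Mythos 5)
Your argument has the same skeleton as the paper's proof: replace the arbitrary counter-strategy by a type-preserving boundary strategy generating the same play from $q'$, use Proposition~\ref{prop:rcbs-are-regionally-constant} to transport the comparison from $q'$ to $q$, apply the best-response hypothesis at $q$, and transport back. (One cosmetic slip: the inequality $\AVERAGETIME_\mMIN(q',\mu^*,\nu)\le\AVERAGETIME_\mMIN(q',\mu^*,\chi^*)$ for all $\nu\in\hSigmaMax$ is, by the paper's definition, the statement that $\chi^*$ is a best response to $\mu^*$, not that $\mu^*$ is a best response to $\chi^*$; since you also invoke the symmetric argument, both halves are covered, but the labels are swapped.)

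The step you must repair is the detour through $\VAL^{\hTt}(q)$. From ``$\mu$ and $\chi$ are mutual best responses from $q$'' you cannot conclude $\AVERAGETIME_\mMIN(q,\mu,\chi)=\VAL^{\hTt}(q)$ or $\sup_{\chi'}\AVERAGETIME_\mMIN(q,\mu,\chi')=\VAL^{\hTt}(q)$: the two players optimise different payoffs ($\limsup$ versus $\liminf$), so a best-response pair only yields $\AVERAGETIME_\mMAX(q,\mu,\chi)\le\VAL^{\hTt}(q)\le\AVERAGETIME_\mMIN(q,\mu,\chi)$, and closing that gap needs an extra argument (the paper supplies one in Theorem~\ref{theorem:optimal-xi-in-hat} by taking the best responses finite-memory). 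Moreover, citing Lemma~\ref{lemma:atg-regionally-constant} for $\VAL^{\hTt}(q')=\VAL^{\hTt}(q)$ is circular, since that lemma is derived from Theorem~\ref{theorem:atg-on-graphs-are-determined}, which rests on the present lemma. The fix is to drop the value entirely and compare against $\AVERAGETIME_\mMIN(q,\mu,\chi)$ directly:
\[
\AVERAGETIME_\mMIN(q',\mu^*,\nu)\le\AVERAGETIME_\mMIN(q,\mu^*,\tpqs{\nu}{q'}{\mu^*})\le\sup_{\chi'\in\hSigmaMax}\AVERAGETIME_\mMIN(q,\mu,\chi')=\AVERAGETIME_\mMIN(q,\mu,\chi)=\AVERAGETIME_\mMIN(q',\mu^*,\chi^*),
\]
the last equality by agreement of $\mu^*,\chi^*$ with $\mu,\chi$ on runs from $q$ plus regional constancy. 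This is exactly the paper's chain; the paper uses $\tpq{\cchi}{q'}$ (Definition~\ref{definition:tpbs-agree-bs}) in place of your $\tpqs{\nu}{q'}{\mu^*}$, which turns your first inequality into an equality of runs. If you keep your version, note that the $\AVERAGETIME_\mMIN$ form of that first inequality needs Proposition~\ref{proposition:nstep-type-preserving-is-better} rather than Corollary~\ref{corollary:type-preserving-is-better}, whose second half is stated only for $\AVERAGETIME_\mMAX$.
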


\begin{proof}
  We argue that $\tpq{\chi}{q}$ is a best response to $\tpq{\mu}{q}$ from 
  $q' \in \cQ([q])$ in $\hTt$;
  the other case is analogous.
  For all $\cchi \in \hSigmaMax$, we have the following: 
  \begin{multline*}
    \AVERAGETIME_\mMIN(q', \tpq{\mu}{q}, \tpq{\chi}{q})
    = 
    \AVERAGETIME_\mMIN(q, \tpq{\mu}{q}, \tpq{\chi}{q})
    \geq 
    \AVERAGETIME_\mMIN(q, \tpq{\mu}{q}, \tpq{\cchi}{q'})
    = 
    \\
    \AVERAGETIME_\mMIN(q', \tpq{\mu}{q}, \tpq{\cchi}{q'})
    = 
    \AVERAGETIME_\mMIN(q', \tpq{\mu}{q}, \cchi).
  \end{multline*}
  The first equality follows from
  Proposition~\ref{prop:rcbs-are-regionally-constant};
  the inequality follows because~$\chi$ is a best response to~$\mu$
  from~$q$;
  the second equality follows from
  Proposition~\ref{prop:rcbs-are-regionally-constant} again;
  and the last equality is straightforward.
\qed
\end{proof}

\begin{theorem}
  \label{theorem:optimal-xi-in-hat}
  There are optimal type-preserving boundary strategies in $\hTt$,
  i.e., for every $q \in \cQ$, we have:
  \[
  \VAL^{\hTt}(q) 
  = 
  \inf_{\mu \in \simpleMin} \sup_{\chi \in \hSigmaMax} 
  \AVERAGETIME_\mMIN(q, \mu, \chi)
  = 
  \sup_{\chi \in \simpleMax} \inf_{\mu \in \hSigmaMin} 
  \AVERAGETIME_\mMAX(q, \mu, \chi). 
  \]
\end{theorem}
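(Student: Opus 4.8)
The statement to prove is that in the boundary region graph $\hTt$ there exist optimal type-preserving boundary strategies, witnessed by the saddle-point equalities. My plan is to leverage the positional determinacy already established in Theorem~\ref{lemma:boundary-region-graph-determined} together with Lemma~\ref{lemma:xi-best-response-in-hat}. The key observation is that positional optimal strategies are, in particular, mutual best responses from every configuration; so I first extract positional optimal strategies $\mu^* \in \widehat{\Pi}_\mMIN$ and $\chi^* \in \widehat{\Pi}_\mMAX$ from Theorem~\ref{lemma:boundary-region-graph-determined}, and I fix an arbitrary configuration $q \in \cQ$.

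\medskip
\noindent\textbf{Main steps.} First I would observe that since $\mu^*$ and $\chi^*$ are optimal, they are mutual best responses from $q$: indeed $\AVERAGETIME_\mMIN(q, \mu^*, \chi^*) = \VAL^{\hTt}(q)$, and optimality of $\mu^*$ means $\sup_{\chi}\AVERAGETIME_\mMIN(q,\mu^*,\chi) = \VAL^{\hTt}(q)$ so $\chi^*$ is a best response to $\mu^*$, and symmetrically $\mu^*$ is a best response to $\chi^*$. Second, I apply Lemma~\ref{lemma:xi-best-response-in-hat} to conclude that $\tpq{\mu^*}{q} \in \simpleMin$ and $\tpq{\chi^*}{q} \in \simpleMax$ are mutual best responses from every $q' \in \cQ([q])$. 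Third, I need to argue that mutual best responses achieve the value, i.e., $\AVERAGETIME_\mMIN(q', \tpq{\mu^*}{q}, \tpq{\chi^*}{q}) = \VAL^{\hTt}(q')$ for every such $q'$; this follows from the standard fact that at a saddle point $\sup_\chi \AVERAGETIME_\mMIN(q', \tpq{\mu^*}{q}, \chi) = \AVERAGETIME_\mMIN(q', \tpq{\mu^*}{q}, \tpq{\chi^*}{q}) = \inf_\mu \AVERAGETIME_\mMAX(q', \mu, \tpq{\chi^*}{q})$, combined with the determinacy of $\hTt$ giving $\LVAL^{\hTt}(q') = \UVAL^{\hTt}(q') = \VAL^{\hTt}(q')$. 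Fourth, since $\tpq{\mu^*}{q}$ is type-preserving and its region-$[q]$ behaviour witnesses the value on all of $\cQ([q])$, and since $q$ ranged over all configurations, I can assemble a single type-preserving boundary strategy that is optimal everywhere — or, more cleanly, I simply note that for the particular $q$ fixed in the statement, $\tpq{\mu^*}{q}$ realises $\inf_{\mu \in \simpleMin}\sup_{\chi \in \hSigmaMax}\AVERAGETIME_\mMIN(q,\mu,\chi) \le \VAL^{\hTt}(q)$, while the reverse inequality $\ge$ is immediate from $\simpleMin \subseteq \hSigmaMin \subseteq \cSigmaMin$ and Theorem~\ref{lemma:boundary-region-graph-determined}. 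The symmetric argument for Max gives the other equality.

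\medskip
\noindent\textbf{Main obstacle.} The delicate point is the fourth step: going from "mutual best responses from every $q' \in \cQ([q])$" to a genuinely optimal type-preserving strategy. The issue is that $\tpq{\mu^*}{q}$ is only guaranteed to behave well on runs whose type starts in region $[q]$; to get optimality from an arbitrary configuration in $\hTt$ one must either (a) fix $q$ ranging over a representative of each region and observe that type-preserving strategies are determined by their action on each run type, so the finitely (regionally) many choices can be amalgamated into one strategy in $\simpleMin$ that is simultaneously optimal from every region — using that a type-preserving strategy's response depends only on $\TYPE{r}$ and hence the amalgamation is well-defined — or (b) be content with proving the displayed equalities for each fixed $q$ separately, which is literally all the theorem statement demands. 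I would take route (b) for the cleanest write-up: fix $q$, produce $\tpq{\mu^*}{q}$ and $\tpq{\chi^*}{q}$ from optimal positional strategies via Lemma~\ref{lemma:xi-best-response-in-hat}, and verify the two chains of (in)equalities, citing Proposition~\ref{prop:rcbs-are-regionally-constant} only where needed to transport values across $\cQ([q])$. The only genuinely new content beyond bookkeeping is recognising that optimal positional strategies are mutual best responses, which is routine.
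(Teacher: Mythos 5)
Your proposal follows essentially the same route as the paper: obtain mutual best responses in $\simpleMin\times\simpleMax$ by feeding positional optimal strategies into Lemma~\ref{lemma:xi-best-response-in-hat}, then close a saddle-point chain of inequalities sandwiching $\VAL^{\hTt}(q)$, using $\simpleMin\subseteq\hSigmaMin$ and $\simpleMax\subseteq\hSigmaMax$ for the trivial directions. The one step you gloss over is the middle link of your ``standard saddle-point fact'': the equality
$\AVERAGETIME_\mMIN(q,\tpq{\mu^*}{q},\tpq{\chi^*}{q})=\inf_{\mu}\AVERAGETIME_\mMAX(q,\mu,\tpq{\chi^*}{q})$
silently switches from the $\limsup$-payoff to the $\liminf$-payoff along the play induced by the best-response pair, and without that switch the chain only yields the useless direction ($\AVERAGETIME_\mMAX\leq\AVERAGETIME_\mMIN$ always holds, but you need the reverse on this particular play). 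This is not a formality: for an arbitrary pair of mutual best responses nothing forces the limit to exist. The paper handles it explicitly by insisting that $\mu^*$ and $\chi^*$ be obtained from \emph{positional} strategies, so that the resulting pair has finite memory and the induced play is eventually periodic, making $\limsup=\liminf$. Your construction does start from positional optimal strategies, so the gap is fillable --- indeed for the fixed $q$ the play of $(\tpq{\mu^*}{q},\tpq{\chi^*}{q})$ coincides with that of $(\mu^*,\chi^*)$ by condition~1 of Definition~\ref{definition:tpbs-agree-bs}, and the positional pair already satisfies $\AVERAGETIME_\mMIN=\AVERAGETIME_\mMAX=\VAL^{\hTt}(q)$ by your own step one --- but you should say so rather than appeal to a ``standard fact'' that is false for general best-response pairs. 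Your worry in the fourth step about amalgamating strategies across regions is a non-issue for the reason you give in route (b): the theorem quantifies over each $q$ separately.
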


\begin{proof}
  Let $\mu^* \in \simpleMin$ and $\chi^* \in \simpleMax$ be mutual
  best responses in $\hTt$;
  existence of such strategies follows from
  Lemma~\ref{lemma:xi-best-response-in-hat}. 
  Moreover, we can assume that the strategies $\mu^*$ and $\chi^*$
  have finite memory; 
  this can be achieved by taking positional strategies 
  $\mu \in \hSigmaMin$ and $\chi \in \hSigmaMax$ in 
  Lemma~\ref{lemma:xi-best-response-in-hat}.
  We then have the following:
  \begin{multline*}
    \inf_{\mu \in \simpleMin} \sup_{\chi \in \hSigmaMax} 
    \AVERAGETIME_\mMIN(q, \mu, \chi)
    \leq 
    \sup_{\chi \in \hSigmaMax} \AVERAGETIME_\mMIN(q, \mu^*, \chi)
    = 
    \AVERAGETIME_\mMIN(q, \mu^*, \chi^*)
    = 
    \\
    \AVERAGETIME_\mMAX(q, \mu^*, \chi^*)
    = 
    \inf_{\mu \in \hSigmaMin} \AVERAGETIME_\mMAX(q, \mu, \chi^*)
    \leq 
    \sup_{\chi \in \simpleMax} \inf_{\mu \in \hSigmaMin} 
    \AVERAGETIME_\mMAX(q, \mu, \chi).
  \end{multline*}
  The first and last inequalities are straightforward as
  $\mu^* \in \simpleMin$ and $\chi^* \in \simpleMax$.
  The first equality holds because $\chi^*$ is a best response
  to~$\mu^*$ in~$\hTt$, and the third equality holds because $\mu^*$
  is a best response to~$\chi^*$ in~$\hTt$. 
  Finally, the second equality holds because strategies~$\mu^*$
  and~$\chi^*$ have finite memory.
\qed
\end{proof}

\subsection{Determinacy of Average-Time Games on the Closed Region Graph}
To be able to show the determinacy of the average-time games on the
closed region graph, we need the following intermediate result.
\begin{lemma}
  \label{lemma:xi-best-response-in-bar}
  In $\cTt$, for every strategy in $\simpleMin$ there is a best
  response in $\simpleMax$, and for every strategy in $\simpleMax$
  there is a best response in $\simpleMin$.
\end{lemma}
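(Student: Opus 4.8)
The plan is to reduce the existence of a best response in $\cTt$ against a type-preserving boundary strategy to the existence of a best response in the \emph{finite} boundary region graph $\hTt$, exploiting the fact that against a type-preserving strategy the set of reachable configurations that matter is essentially governed by finitely many run types. Fix a type-preserving boundary strategy $\mu \in \simpleMin$ of player Min; I want to produce a best response $\chi^* \in \simpleMax$ in $\cTt$, i.e.\ a $\chi^*$ with $\AVERAGETIME_\mMAX(q, \mu, \chi^*) = \sup_{\chi \in \cSigmaMax} \AVERAGETIME_\mMAX(q, \mu, \chi)$ for all $q \in \cQ$. The first step is to observe that any strategy $\chi \in \cSigmaMax$ can be replaced, without decreasing player Max's payoff, by an admissible strategy and then, via the $\tpqs{\cdot}{q}{\cdot}$ construction, by a type-preserving boundary strategy: Corollary~\ref{corollary:type-preserving-is-better} gives $\AVERAGETIME_\mMAX(q, \mu, \chi) \leq \AVERAGETIME_\mMAX(q, \mu, \tpqs{\chi}{q}{\mu})$ when $\mu \in \simpleMin$, so in computing the supremum over $\cSigmaMax$ it suffices to range over $\simpleMax$ (modulo the $\varepsilon$-approximation machinery of Definition~\ref{def:epsilon-close-strat} and Corollary~\ref{corollary:boundary-better-against-epsilon} to pass between admissible and closed strategies cleanly).

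Next I would transfer the problem into $\hTt$. Since $\mu \in \simpleMin \subseteq \hSigmaMin$, and by Proposition~\ref{proposition:brg-is-finite} the reachable part of $\hTt$ from any $q$ is finite, the game ``Min plays $\mu$, Max plays freely in $\hTt$'' is, after fixing $\mu$, a one-player average-price optimization problem on a finite graph. By Theorem~\ref{theorem:apg-determined} (or the classical Liggett--Lippman result applied to the MDP obtained by fixing Min's finite-memory strategy), player Max has an optimal positional response $\chi \in \hSigmaMax$ against $\mu$ in $\hTt$; more carefully, one applies Theorem~\ref{theorem:apg-determined} to the finite arena consisting of the product of the reachable part of $\hTt$ with the (finite) memory of $\mu$, obtaining a positional, hence finite-memory, best response. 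I then set $\chi^* = \tpq{\chi}{q}$, the type-preserving boundary strategy agreeing with $\chi$ on all runs from $q$ (Definition~\ref{definition:tpbs-agree-bs}); by Proposition~\ref{prop:rcbs-are-regionally-constant} the payoff $\AVERAGETIME_\mMAX(\cdot, \mu, \chi^*)$ is regionally constant, and by construction it equals the payoff of $\chi$ from every $q' \in \cQ([q])$, so the argument pattern of Lemma~\ref{lemma:xi-best-response-in-hat} shows $\chi^*$ is a best response from all of $\cQ([q])$ within $\hTt$.

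The remaining and most delicate step is to argue that a best response \emph{within $\hTt$} against $\mu$ is actually a best response \emph{within $\cTt$} against $\mu$ — i.e.\ that player Max gains nothing by using a genuinely ``interior'' (non-boundary) timed move. This is exactly the content one would like: since $\mu$ is type-preserving, against it the run type is determined by the starting region, Proposition~\ref{proposition:time-is-regionally-simple} says the finite-horizon times are regionally simple, and any non-boundary move of Max either stays in the same region as some boundary move (yielding a comparable or smaller contribution for Max, since Max wants \emph{larger} times and the boundary strategy of Max picks $t = \sup\set{t : s+t \in \CLOSOP(R')}$) or can be approximated arbitrarily well by an $\varepsilon$-close admissible move handled by Corollary~\ref{corollary:boundary-better-against-epsilon}. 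Quantitatively: for any $\chi' \in \cSigmaMax$ and any $\varepsilon > 0$, Corollary~\ref{corollary:boundary-better-against-epsilon} combined with the above gives $\AVERAGETIME_\mMAX(q, \mu, \chi') \leq \AVERAGETIME_\mMAX(q, \mu, \chi^*) + \varepsilon$, and since $\varepsilon$ is arbitrary the supremum over $\cSigmaMax$ is attained by $\chi^*$. The symmetric statement (best response in $\simpleMin$ against $\simpleMax$) follows by the analogous argument with the roles of Min/Max, $\inf$/$\sup$, and near/far boundaries exchanged. I expect the main obstacle to be the careful bookkeeping in this last step — keeping the quantifier order right across the $\varepsilon$-approximation and ensuring the finite-memory best response in $\hTt$ genuinely dominates \emph{all} of $\cSigmaMax$ uniformly in the starting state, not just from the fixed $q$; this is where the regional-constancy of the payoff (Proposition~\ref{prop:rcbs-are-regionally-constant}) and Proposition~\ref{proposition:brg-is-finite} do the essential work.
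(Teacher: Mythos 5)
Your opening step --- using Corollary~\ref{corollary:type-preserving-is-better} to replace an arbitrary $\chi \in \cSigmaMax$ by $\tpqs{\chi}{q}{\mu} \in \simpleMax$ without decreasing Max's payoff --- is in fact the entire engine of the paper's proof, and once you have it you are essentially done: since $\simpleMax \subseteq \cSigmaMax$, the supremum over $\cSigmaMax$ equals the supremum over $\simpleMax$, and combining this with the regional constancy of $\AVERAGETIME_\mMAX(\cdot,\mu,\chi)$ for type-preserving pairs (Proposition~\ref{prop:rcbs-are-regionally-constant}) yields the lemma. The paper's proof is exactly this: it takes a best response to the given type-preserving strategy in the \emph{full} class of closed-region strategies from some $q$, and shows by a short chain of (in)equalities that its $\tpqs{\cdot}{q}{\cdot}$-version is a best response in $\simpleMin$ (resp.\ $\simpleMax$) from every $q' \in \cQ([q])$. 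Your final step --- worrying that a best response ``within $\hTt$'' might not dominate all of $\cSigmaMax$ and invoking Corollary~\ref{corollary:boundary-better-against-epsilon} --- is therefore redundant: the $\varepsilon$-close machinery is needed only for the region graph $\rTt$, where boundary moves are inadmissible, whereas in $\cTt$ boundary strategies are legal ($\simpleMax \subseteq \hSigmaMax \subseteq \cSigmaMax$) and Corollary~\ref{corollary:type-preserving-is-better} already gives exact, not $\varepsilon$-approximate, domination.

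The genuine problem is in your middle step. To obtain an optimal element of $\simpleMax$ you fix $\mu \in \simpleMin$ and apply Theorem~\ref{theorem:apg-determined} to ``the product of the reachable part of $\hTt$ with the (finite) memory of $\mu$.'' But the lemma quantifies over \emph{every} $\mu \in \simpleMin$, and an arbitrary type-preserving boundary strategy does not have finite memory: its decisions depend on the run type, i.e.\ on the full history of regions and actions, and there are infinitely many finite run types. So the product arena is infinite and the finite-graph determinacy theorem does not apply. (The finite-memory assumption is legitimate in the proof of Theorem~\ref{theorem:optimal-xi-in-hat}, where the type-preserving strategies are manufactured from positional ones, but not here.) The paper avoids this entirely by never constructing the best response inside $\hTt$: it starts from a best response in the unrestricted class and pushes it into $\simpleMax$ via the $\tpqs{\cdot}{q}{\cdot}$ construction. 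If you drop the detour through $\hTt$ and instead run your first step together with Proposition~\ref{prop:rcbs-are-regionally-constant} on a best response taken in $\cSigmaMax$ itself (or, to be careful about attainment of the supremum, on a sequence of $\varepsilon$-best responses), you recover the paper's argument.
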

\begin{proof}
  We argue that if $\mu \in \cSigmaMin$ is best-response to $\chi
  \in \simpleMax$ from $q \in \cQ$ then the strategy 
  $\tpqs{\mu}{q}{\chi}$ is best-response to $\chi$ from every 
  $q' \in \cQ([q])$.
  For all $\mmu \in \cSigmaMin$ we have the following:
  \begin{multline*}
    \AVERAGETIME_\mMIN(q', \tpqs{\mu}{q}{\chi}, \chi) = 
    \AVERAGETIME_\mMIN(q, \tpqs{\mu}{q}{\chi}, \chi) \leq 
    \AVERAGETIME_\mMIN(q, \mu, \chi) \leq   
    \AVERAGETIME_\mMIN(q, \tpqs{\mmu}{q'}{\chi}, \chi) = \\
    \AVERAGETIME_\mMIN(q', \tpqs{\mmu}{q'}{\chi}, \chi) \leq
    \AVERAGETIME_\mMIN(q', \mu', \chi).
  \end{multline*}
  The first and the second equalities follow from
  Proposition~\ref{prop:rcbs-are-regionally-constant}; 
  the second inequality follows because~$\mu$ is a best response
  to~$\chi$ from~$q$; and the first and the third inequalities
  follow from the the
  Corollary~\ref{corollary:type-preserving-is-better}.
  It follows that in $\cTt$ for every strategy $\chi \in \simpleMax$
  there is a best response in $\simpleMin$.
  Similarly we prove that in $\cTt$ for every strategy 
  $\mu \in \simpleMin$ there is a best response in $\simpleMax$.
\qed
\end{proof}

\begin{theorem}
  \label{theorem:optimal-xi-in-bar}
  The average-time game on $\cTt$ is determined, and there are optimal
  type-preserving boundary strategies in $\cTt$, 
  i.e., for every $q \in \cQ$, we have:
  \[
  \VAL^{\cTt}(q) 
  = 
  \inf_{\mu \in \simpleMin} \sup_{\chi \in \cSigmaMax} 
  \AVERAGETIME_\mMIN(q, \mu, \chi)
  = 
  \sup_{\chi \in \simpleMax} \inf_{\mu \in \cSigmaMin} 
  \AVERAGETIME_\mMAX(q, \mu, \chi)
  = 
  \VAL^{\hTt}(q).
  \]
\end{theorem}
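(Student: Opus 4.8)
The plan is to reduce the game on the closed region graph $\cTt$ to the already-solved game on the boundary region graph $\hTt$. By Theorems~\ref{lemma:boundary-region-graph-determined} and~\ref{theorem:optimal-xi-in-hat} (concretely, apply Lemma~\ref{lemma:xi-best-response-in-hat} to a positional optimal pair in $\hTt$) I fix type-preserving boundary strategies $\mu^* \in \simpleMin$ and $\chi^* \in \simpleMax$ that are mutual best responses, hence optimal, in $\hTt$. Since both are type-preserving, the payoff against them is regionally constant (Proposition~\ref{prop:rcbs-are-regionally-constant}), so for every $q \in \cQ$,
\begin{multline*}
  \sup_{\chi \in \hSigmaMax} \AVERAGETIME_\mMIN(q, \mu^*, \chi)
  = \AVERAGETIME_\mMIN(q, \mu^*, \chi^*)
  = \AVERAGETIME_\mMAX(q, \mu^*, \chi^*) \\
  = \inf_{\mu \in \hSigmaMin} \AVERAGETIME_\mMAX(q, \mu, \chi^*)
  = \VAL^{\hTt}(q).
\end{multline*}
I will then show that $\mu^*$ and $\chi^*$ stay optimal when Min and Max range over the larger classes $\cSigmaMin \supseteq \hSigmaMin$ and $\cSigmaMax \supseteq \hSigmaMax$, with the same value.

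The crux is the claim that for every $q \in \cQ$,
\[
  \sup_{\chi \in \cSigmaMax} \AVERAGETIME_\mMIN(q, \mu^*, \chi) \;=\; \VAL^{\hTt}(q) \;=\; \inf_{\mu \in \cSigmaMin} \AVERAGETIME_\mMAX(q, \mu, \chi^*).
\]
For the first equality, ``$\geq$'' is immediate since $\chi^* \in \simpleMax \subseteq \cSigmaMax$. For ``$\leq$'', take an arbitrary $\chi \in \cSigmaMax$ and replace it by the type-preserving strategy $\tpqs{\chi}{q}{\mu^*} \in \simpleMax \subseteq \hSigmaMax$ of Definition~\ref{definition:tp-better-than-normal}: Proposition~\ref{proposition:nstep-type-preserving-is-better} gives $\TIME(\RUN_n(q, \mu^*, \chi)) \leq \TIME(\RUN_n(q, \mu^*, \tpqs{\chi}{q}{\mu^*}))$ for every $n$, so taking $\limsup$ of the averages,
\[
  \AVERAGETIME_\mMIN(q, \mu^*, \chi) \;\leq\; \AVERAGETIME_\mMIN(q, \mu^*, \tpqs{\chi}{q}{\mu^*}) \;\leq\; \sup_{\chi' \in \hSigmaMax} \AVERAGETIME_\mMIN(q, \mu^*, \chi') \;=\; \VAL^{\hTt}(q),
\]
using the optimality of $\mu^*$ in $\hTt$. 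The second equality is entirely symmetric, using the other half of Proposition~\ref{proposition:nstep-type-preserving-is-better} with $\liminf$, the repair $\tpqs{\mu}{q}{\chi^*} \in \simpleMin \subseteq \hSigmaMin$, and the optimality of $\chi^*$ in $\hTt$; this is also the content of Lemma~\ref{lemma:xi-best-response-in-bar}, which supplies best responses inside $\simpleMin$/$\simpleMax$ against type-preserving opponents in $\cTt$.

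Finally I close by sandwiching. Since $\simpleMin \subseteq \cSigmaMin$ and $\simpleMax \subseteq \cSigmaMax$, the claim yields
\begin{multline*}
  \UVAL^{\cTt}(q) \;\leq\; \inf_{\mu \in \simpleMin} \sup_{\chi \in \cSigmaMax} \AVERAGETIME_\mMIN(q, \mu, \chi) \\
  \;\leq\; \sup_{\chi \in \cSigmaMax} \AVERAGETIME_\mMIN(q, \mu^*, \chi) \;=\; \VAL^{\hTt}(q)
\end{multline*}
and dually
\begin{multline*}
  \LVAL^{\cTt}(q) \;\geq\; \sup_{\chi \in \simpleMax} \inf_{\mu \in \cSigmaMin} \AVERAGETIME_\mMAX(q, \mu, \chi) \\
  \;\geq\; \inf_{\mu \in \cSigmaMin} \AVERAGETIME_\mMAX(q, \mu, \chi^*) \;=\; \VAL^{\hTt}(q).
\end{multline*}
As $\LVAL^{\cTt}(q) \leq \UVAL^{\cTt}(q)$ always holds, all four displayed quantities equal $\VAL^{\hTt}(q)$, which simultaneously gives the determinacy of the game on $\cTt$, the identity $\VAL^{\cTt}(q) = \VAL^{\hTt}(q)$, and the fact that the value is realised by the $\inf$/$\sup$ over the type-preserving classes $\simpleMin$ and $\simpleMax$. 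I expect the one genuinely delicate point to be orientation-bookkeeping: keeping each use of Proposition~\ref{proposition:nstep-type-preserving-is-better} pointed the right way, checking that a pointwise-in-$n$ bound on $\TIME$ transfers to the correct one of $\limsup$/$\liminf$ (i.e.\ the $\mMIN$ versus $\mMAX$ variant of $\AVERAGETIME$), and confirming that the repaired strategies $\tpqs{\chi}{q}{\mu^*}$ and $\tpqs{\mu}{q}{\chi^*}$ indeed lie in $\hSigmaMax$ and $\hSigmaMin$, so that the $\hTt$-optimality of $\mu^*$ and $\chi^*$ may legitimately be invoked.
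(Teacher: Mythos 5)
Your proof is correct and follows essentially the same route as the paper: both reduce the game on $\cTt$ to the already-solved game on $\hTt$ by taking an optimal type-preserving pair $(\mu^*,\chi^*)$ from Theorem~\ref{theorem:optimal-xi-in-hat}, replacing an arbitrary opponent strategy in $\cSigmaMax$ (resp.\ $\cSigmaMin$) by its type-preserving repair via Proposition~\ref{proposition:nstep-type-preserving-is-better}/Corollary~\ref{corollary:type-preserving-is-better} (which is exactly the content of Lemma~\ref{lemma:xi-best-response-in-bar}), and then sandwiching $\LVAL^{\cTt}$ and $\UVAL^{\cTt}$ against $\VAL^{\hTt}$. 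Your version merely phrases the paper's $\inf/\sup$ chain manipulation in terms of the explicit strategies $\mu^*$ and $\chi^*$, and your orientation bookkeeping (which of $\limsup$/$\liminf$ each pointwise-in-$n$ bound feeds into) is handled correctly.
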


\begin{proof}
  We have the following:
  \begin{multline*}
    \inf_{\mu \in \simpleMin} \sup_{\chi \in \cSigmaMax} 
    \AVERAGETIME_\mMIN(q, \mu, \chi)
    =  
    \inf_{\mu \in \simpleMin} \sup_{\chi \in \simpleMax} 
    \AVERAGETIME_\mMIN(q, \mu, \chi)
    = 
    \\
    \sup_{\chi \in \simpleMax} \inf_{\mu \in \simpleMin} 
    \AVERAGETIME_\mMAX(q, \mu, \chi)
    = 
    \sup_{\chi \in \simpleMax} \inf_{\mu \in \cSigmaMin} 
    \AVERAGETIME_\mMAX(q, \mu, \chi),
  \end{multline*}
  where the first and last equalities follow from
  Lemma~\ref{lemma:xi-best-response-in-bar}, and the second equality
  follows from Theorem~\ref{theorem:optimal-xi-in-hat}. 
  
  Now we show that $\LVAL^{\cTt}(q) \geq \LVAL^{\hTt}(q)$. 
  The proof that  $\UVAL^{\cTt}(q) \leq \UVAL^{\hTt}(q)$ is similar
  and hence omitted.
  \begin{multline*}
    \LVAL^{\cTt}(q) = \sup_{\chi \in \cSigmaMax} \inf_{\mu \in \cSigmaMin}
    \AVERAGETIME_\mMAX(q, \mu, \chi)
    \geq \sup_{\chi \in \simpleMax} \inf_{\mu \in \cSigmaMin}
    \AVERAGETIME_\mMAX(q, \mu, \chi) \\
    = \sup_{\chi \in \simpleMax} \inf_{\mu \in \simpleMin}
    \AVERAGETIME_\mMAX(q, \mu, \chi) = \LVAL^{\hTt}(q).
  \end{multline*}
  The first inequality follows as $\simpleMax \subseteq \cSigmaMax$. 
  The first equality holds by definition, the second
  equality is proved in the first paragraph of this proof, and the
  third equality follows from
  Theorem~\ref{theorem:optimal-xi-in-hat}. 
  From Lemma~\ref{lemma:boundary-region-graph-determined} we know that
  $\LVAL^{\hTt}(q) = \UVAL^{\hTt}(q)$.
  It follows that the average-time game on $\cTt$ is determined, and
  there are optimal type-preserving boundary strategies in $\cTt$. 
\qed
\end{proof}

\subsection{Determinacy of Average-Time Games on the Region Graph}

\begin{lemma}
\label{lemma:opt-strat-is-epsilon-opt}
If the strategies $\mu^* \in \simpleMin$ and $\chi^* \in \simpleMax$
are optimal for respective players in $\cTt$ then 
for every $\varepsilon > 0$, we have that 
\begin{eqnarray*}
\sup_{\chi \in \cSigmaMax} \AVERAGETIME_\mMIN(q, \mu^*_\varepsilon, \chi) 
  \leq  \VAL^{\cTt}(q) + \varepsilon~\text{ and } 
\inf_{\mu \in \cSigmaMin} \AVERAGETIME_\mMAX(q, \mu, \chi^*_\varepsilon) 
\geq  \VAL^{\cTt}(q) - \varepsilon,
\end{eqnarray*}
for all $\mu^*_\varepsilon \in \EcMin{\mu^*}{\varepsilon}$ and
$\chi^*_\varepsilon \in \EcMax{\chi^*}{\varepsilon}$.
\end{lemma}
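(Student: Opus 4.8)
The plan is to transfer the optimality of $\mu^*$ from the closed region graph $\cTt$ to an $\varepsilon$-close admissible strategy $\mu^*_\varepsilon \in \EcMin{\mu^*}{\varepsilon} \subseteq \rSigmaMin \subseteq \cSigmaMin$, against an arbitrary opponent $\chi \in \cSigmaMax$. Fix such a $\chi$ and $q \in \cQ$. The key difficulty is that $\chi$ is an arbitrary closed-region-graph strategy, not a type-preserving one, so we cannot directly invoke the regional-constancy machinery; we must first replace $\chi$ by a type-preserving surrogate that does not help player Max any less than $\chi$ did, while simultaneously controlling the effect of the $\varepsilon$-perturbation in $\mu^*_\varepsilon$.

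First I would apply the dual of Proposition~\ref{proposition:nstep-boundary-better-against-epsilon}: since $\mu^* \in \simpleMin$ is a type-preserving boundary strategy and $\mu^*_\varepsilon \in \EcMin{\mu^*}{\varepsilon}$ is $\varepsilon$-close to it, for the arbitrary strategy $\chi \in \cSigmaMax$ we obtain a type-preserving boundary strategy $\tpqs{\chi}{q}{\mu^*_\varepsilon} \in \simpleMax$ satisfying, for every $n \in \Nat$,
\[
\TIME(\RUN_n(q, \mu^*_\varepsilon, \chi)) \leq
\TIME(\RUN_n(q, \mu^*_\varepsilon, \tpqs{\chi}{q}{\mu^*_\varepsilon})) + n\cdot\varepsilon.
\]
Dividing by $n$ and taking $\limsup$ gives
\[
\AVERAGETIME_\mMIN(q, \mu^*_\varepsilon, \chi) \leq
\AVERAGETIME_\mMIN(q, \mu^*_\varepsilon, \tpqs{\chi}{q}{\mu^*_\varepsilon}) + \varepsilon.
\]
Now the opponent strategy $\tpqs{\chi}{q}{\mu^*_\varepsilon}$ is type-preserving, so I can bring $\mu^*$ itself back into play: I claim $\AVERAGETIME_\mMIN(q, \mu^*_\varepsilon, \tpqs{\chi}{q}{\mu^*_\varepsilon}) \leq \AVERAGETIME_\mMIN(q, \mu^*, \tpqs{\chi}{q}{\mu^*_\varepsilon})$. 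This should follow from Corollary~\ref{corollary:boundary-better-against-epsilon} applied with roles swapped (the variant that says an $\varepsilon$-close admissible strategy of Min does at most $\varepsilon$ worse than the boundary strategy it approximates against a fixed type-preserving Max strategy) — here we need the version comparing $\mu^*_\varepsilon$ to $\mu^*$, which the corollary provides up to an additive $\varepsilon$; so more carefully the bound is $\AVERAGETIME_\mMIN(q, \mu^*_\varepsilon, \tpqs{\chi}{q}{\mu^*_\varepsilon}) \leq \AVERAGETIME_\mMIN(q, \mu^*, \tpqs{\chi}{q}{\mu^*_\varepsilon}) + \varepsilon$ if an extra slack is incurred, and one then either absorbs it by rescaling $\varepsilon$ at the outset or, better, notes that $\mu^* \in \simpleMin \subseteq \rSigmaMin$ is already admissible so $\EcMin{\mu^*}{0}$-type reasoning removes the slack. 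Finally, since $\tpqs{\chi}{q}{\mu^*_\varepsilon} \in \cSigmaMax$, optimality of $\mu^*$ in $\cTt$ (Theorem~\ref{theorem:optimal-xi-in-bar}) yields $\AVERAGETIME_\mMIN(q, \mu^*, \tpqs{\chi}{q}{\mu^*_\varepsilon}) \leq \sup_{\chi' \in \cSigmaMax}\AVERAGETIME_\mMIN(q,\mu^*,\chi') = \VAL^{\cTt}(q)$. Chaining the inequalities gives $\AVERAGETIME_\mMIN(q, \mu^*_\varepsilon, \chi) \leq \VAL^{\cTt}(q) + \varepsilon$, and taking the supremum over $\chi \in \cSigmaMax$ proves the first assertion. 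The second assertion is entirely symmetric, using the other halves of Proposition~\ref{proposition:nstep-boundary-better-against-epsilon} and Corollary~\ref{corollary:boundary-better-against-epsilon} together with optimality of $\chi^*$.

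The main obstacle I anticipate is bookkeeping the $\varepsilon$'s cleanly: the perturbation in $\mu^*_\varepsilon$ contributes one $n\varepsilon$ term when we pass to the type-preserving surrogate of $\chi$, and we must make sure no second, uncontrolled $\varepsilon$ enters when we swap $\mu^*_\varepsilon$ back to $\mu^*$. The cleanest route is to exploit that $\mu^* \in \simpleMin$ is itself admissible, so against the fixed type-preserving $\tpqs{\chi}{q}{\mu^*_\varepsilon}$ we may compare $\mu^*_\varepsilon$ directly with $\mu^*$ using Proposition~\ref{proposition:nstep-boundary-better-against-epsilon} (which yields exactly an $n\varepsilon$ gap, not more) rather than a two-sided corollary; with that, the total accumulated error across all steps is $\varepsilon$ (from the $\chi$-to-surrogate step) and the final bound is $\VAL^{\cTt}(q) + \varepsilon$ as stated, with no rescaling needed. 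One must also check that all the intermediate $\limsup$/$\liminf$ manipulations are valid, but dividing the per-$n$ time bounds by $n$ and passing to the limit is routine since the $\varepsilon$ term survives unchanged.
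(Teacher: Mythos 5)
Your overall plan is the same as the paper's (replace the arbitrary opponent $\chi$ by a type\mbox{-}preserving surrogate, swap $\mu^*_\varepsilon$ back to $\mu^*$, then invoke optimality of $\mu^*$ against $\chi^*$), but the middle step as you first state it is in the \emph{wrong direction}, and the devices you offer to repair it do not work. The claim $\AVERAGETIME_\mMIN(q, \mu^*_\varepsilon, \tpqs{\chi}{q}{\mu^*_\varepsilon}) \leq \AVERAGETIME_\mMIN(q, \mu^*, \tpqs{\chi}{q}{\mu^*_\varepsilon})$ is false in general: against a fixed type\mbox{-}preserving Max strategy the runs of $\mu^*$ and $\mu^*_\varepsilon$ have the same type, the $n$-step time is a nondecreasing function of each waiting time (Proposition~\ref{proposition:atg-nondecreasing}), and $\mu^*_\varepsilon$ always waits at least as long as the boundary strategy $\mu^*$; so Proposition~\ref{proposition:nstep-type-preserving-is-better} gives precisely the reverse inequality ($\mu^*$ is the better of the two for Min). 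The best you can get in your direction is the hedged version with an extra $+\varepsilon$, and then your two-step decomposition accumulates $2\varepsilon$. Your two escape routes both fail: $\simpleMin \not\subseteq \rSigmaMin$ --- boundary strategies wait until the closure of a region, which generally lies outside the open region, so they are not admissible (the paper's inclusion chain is $\simpleMin \subseteq \hSigmaMin \subseteq \cSigmaMin$, with $\rSigmaMin$ on a separate branch) --- and $\EcMin{\mu^*}{0}$ is empty whenever the target region is thick, since Definition~\ref{def:epsilon-close-strat} requires $s+t \in R'$ with $R'$ open. Note also that rescaling $\varepsilon$ to $\varepsilon/2$ at the outset does not prove the lemma as stated, because the statement quantifies over \emph{all} of $\EcMin{\mu^*}{\varepsilon}$, not over the smaller set $\EcMin{\mu^*}{\varepsilon/2}$ (the $2\varepsilon$ bound would still suffice for Theorem~\ref{theorem:region-graph-is-determined}, but it is strictly weaker than the lemma).

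The paper avoids the doubled error by performing both substitutions in a single application of Corollary~\ref{corollary:boundary-better-against-epsilon}, which states directly that
\[
\AVERAGETIME_\mMIN(q, \mu^*_\varepsilon, \chi) \leq \AVERAGETIME_\mMIN(q, \mu^*, \tpqs{\chi}{q}{\mu^*_\varepsilon}) + \varepsilon ,
\]
i.e., the underlying induction in Proposition~\ref{proposition:nstep-boundary-better-against-epsilon} loses at most $\varepsilon$ \emph{per step in total}, not $\varepsilon$ per substitution: at Min-controlled configurations the swap from $\mu^*_\varepsilon$ to the boundary wait costs nothing (by the infimum/monotonicity argument), and only at Max-controlled configurations is one $\varepsilon$ incurred. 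From there the paper concludes with $\AVERAGETIME_\mMIN(q, \mu^*, \tpqs{\chi}{q}{\mu^*_\varepsilon}) \leq \AVERAGETIME_\mMIN(q, \mu^*, \chi^*) = \VAL^{\cTt}(q)$, exactly as in your final step. So the missing ingredient in your write-up is the one-shot corollary; with it the chain closes with a single $\varepsilon$ and the lemma follows as stated.
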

\begin{proof}
  Let $\mu^* \in \simpleMin$ and $\chi^* \in \simpleMax$
  are optimal for respective players in $\cTt$.
  For all $\chi \in \cSigmaMax$, $\varepsilon > 0$, 
  and $\mu^*_\varepsilon \in \EcMin{\mu^*}{\varepsilon}$,
  we have the following:
  \begin{multline*}
    \AVERAGETIME_\mMIN(q, \mu^*_\varepsilon, \chi)
    \leq 
    \AVERAGETIME_\mMIN(q, \mu^*, \tpqs{\chi}{q}{\mu^*_\varepsilon}) + \varepsilon
    \leq 
    \AVERAGETIME_\mMIN(q, \mu^*, \chi^*) + \varepsilon
    = 
    \VAL^{\cTt}(q) + \varepsilon.
  \end{multline*}
  The first inequality is by
  Corollary~\ref{corollary:boundary-better-against-epsilon}. 
  The second inequality holds because $\chi^*$ is an optimal strategy
  and the equality is due to the fact that $\mu^*$ and $\chi^*$ are
  optimal.
\qed 
\end{proof}

\begin{theorem}
  \label{theorem:region-graph-is-determined}
  The average-time game on $\rTt$ is determined, and for every $q \in
  \cQ$, we have $\VAL^{\rTt}(q) = \VAL^{\cTt}(q)$. 
\end{theorem}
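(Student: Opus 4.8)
The goal is to sandwich $\VAL^{\rTt}(q)$ between $\VAL^{\cTt}(q)$ from both sides. Since $\rSigmaMin \subseteq \cSigmaMin$ and $\rSigmaMax \subseteq \cSigmaMax$ (from the inclusion summary at the end of Section~5), the ``easy'' directions are
\[
\UVAL^{\rTt}(q) = \inf_{\mu \in \rSigmaMin} \sup_{\chi \in \rSigmaMax} \AVERAGETIME_\mMIN(q, \mu, \chi) \geq \inf_{\mu \in \cSigmaMin} \sup_{\chi \in \cSigmaMax} \AVERAGETIME_\mMIN(q, \mu, \chi) = \UVAL^{\cTt}(q) = \VAL^{\cTt}(q)
\]
--- wait, one has to be careful about the direction of the inequality when both $\inf$ and $\sup$ ranges shrink. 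Restricting $\Sigma_\mMAX$ to the smaller $\rSigmaMax$ decreases the inner $\sup$, but restricting $\Sigma_\mMIN$ to $\rSigmaMin$ increases the outer $\inf$, so the two effects fight and no free inequality drops out. So the correct strategy is \emph{not} a monotonicity argument but an $\varepsilon$-approximation argument, exactly in the spirit of Lemma~\ref{lemma:opt-strat-is-epsilon-opt}.

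Concretely, first I would take optimal type-preserving boundary strategies $\mu^* \in \simpleMin$ and $\chi^* \in \simpleMax$ in $\cTt$, which exist by Theorem~\ref{theorem:optimal-xi-in-bar}. For a given $\varepsilon > 0$ pick any $\varepsilon$-close admissible strategy $\mu^*_\varepsilon \in \EcMin{\mu^*}{\varepsilon} \subseteq \rSigmaMin$; by Lemma~\ref{lemma:opt-strat-is-epsilon-opt} we have $\sup_{\chi \in \cSigmaMax} \AVERAGETIME_\mMIN(q, \mu^*_\varepsilon, \chi) \leq \VAL^{\cTt}(q) + \varepsilon$, and a fortiori the same bound holds when $\chi$ ranges only over the smaller set $\rSigmaMax$. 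Hence $\UVAL^{\rTt}(q) \leq \VAL^{\cTt}(q) + \varepsilon$ for every $\varepsilon > 0$, giving $\UVAL^{\rTt}(q) \leq \VAL^{\cTt}(q)$. Symmetrically, using $\chi^*_\varepsilon \in \EcMax{\chi^*}{\varepsilon} \subseteq \rSigmaMax$ and the second half of Lemma~\ref{lemma:opt-strat-is-epsilon-opt}, we get $\LVAL^{\rTt}(q) \geq \VAL^{\cTt}(q) - \varepsilon$ for every $\varepsilon > 0$, hence $\LVAL^{\rTt}(q) \geq \VAL^{\cTt}(q)$.

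Combining these with the always-true inequality $\LVAL^{\rTt}(q) \leq \UVAL^{\rTt}(q)$ yields
\[
\VAL^{\cTt}(q) \leq \LVAL^{\rTt}(q) \leq \UVAL^{\rTt}(q) \leq \VAL^{\cTt}(q),
\]
so all three coincide: the game on $\rTt$ is determined and $\VAL^{\rTt}(q) = \VAL^{\cTt}(q)$. Together with Proposition~\ref{proposition:region-graph-is-timed-automaton} this also ties back to the timed automaton.

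The step I expect to be the main obstacle --- or at least the one requiring the most care --- is making sure the quantifier manipulation in the $\varepsilon$-argument is airtight: specifically, that restricting the opponent's strategy space from $\cSigmaMax$ down to $\rSigmaMax$ only \emph{helps} the bound we want (it does, since we are taking a $\sup$ over a smaller set and bounding it above), and dually for the $\LVAL$ side. One must also confirm that $\EcMin{\mu^*}{\varepsilon}$ and $\EcMax{\chi^*}{\varepsilon}$ are nonempty --- which is noted in Definition~\ref{def:epsilon-close-strat} (``such a value of $t$ always exists'') --- and that their members are genuinely admissible strategies in $\rTt$, i.e.\ that the suggested timed move lands in the open region $R''$ as required by the definition of admissible strategies; this is exactly the content of the condition $s + t \in R'$ in Definition~\ref{def:epsilon-close-strat}. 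Everything else is bookkeeping with the $\liminf/\limsup$ definitions of $\AVERAGETIME$.
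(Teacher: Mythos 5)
Your proposal is correct and follows essentially the same route as the paper: it fixes optimal type-preserving boundary strategies in $\cTt$, passes to their $\varepsilon$-close admissible counterparts in $\rTt$, invokes Lemma~\ref{lemma:opt-strat-is-epsilon-opt} together with the inclusion $\rSigmaMax \subseteq \cSigmaMax$ (resp.\ $\rSigmaMin \subseteq \cSigmaMin$) to bound $\UVAL^{\rTt}(q)$ above and $\LVAL^{\rTt}(q)$ below by $\VAL^{\cTt}(q) \pm \varepsilon$, and lets $\varepsilon \to 0$. The preliminary remarks on why naive monotonicity fails and on the nonemptiness of $\EcMin{\mu^*}{\varepsilon}$ are sound but do not change the argument, which matches the paper's proof step for step.
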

\begin{proof}
Let $\mu^* \in \simpleMin$ be an optimal strategy of player Min in $\cTt$.
Let us fix an $\varepsilon > 0$ and $\mu^*_\varepsilon \in
\EcMin{\mu^*}{\varepsilon}$.  
\begin{multline*}
  \UVAL^{\rTt}(q) =  \inf_{\mu \in \rSigmaMin} \sup_{\chi \in \rSigmaMax} 
  \AVERAGETIME_\mMIN(q, \mu, \chi)
  \leq  \sup_{\chi \in \rSigmaMax} \AVERAGETIME_\mMIN(q, \mu^*_\varepsilon, \chi)
  \leq \\
  \sup_{\chi \in \cSigmaMax} \AVERAGETIME_\mMIN(q, \mu^*_\varepsilon, \chi)
  \leq \VAL^{\cTt}(q) + \varepsilon.
\end{multline*}
The second inequality follows because 
${\mu^*_\varepsilon \in \rSigmaMin}$ and the third inequality follows
 as ${\rSigmaMax \subseteq \cSigmaMax}$.
The last inequality follows from
Lemma~\ref{lemma:opt-strat-is-epsilon-opt} because $\mu^* \in
\simpleMin$ is an optimal strategy in $\cTt$. 
Similarly we show that for every $\varepsilon > 0$ we have that 
$\LVAL^{\rTt}(q) \geq \VAL^{\cTt}(q) - \varepsilon$.
Hence it follows that $\VAL^{\rTt}(q)$ exists and its value 
is equal to $\VAL^{\cTt}(q)$. 
\qed
\end{proof}

\section{Complexity}
\label{section:complexity}

The main decision problem for average-time game is as follows:
given an average-time game  
$\Gamma=(\Tt, L_\mMIN, L_\mMAX)$, a state $s \in S$, and a number
$B \in \Rplus$, decide whether $\VAL(s) \leq B$.

\begin{theorem}
  \label{theorem:average-time-games-are-expc}
  Average-time games are EXPTIME-complete on timed automata with at
  least two clocks. 
\end{theorem}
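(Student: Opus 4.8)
The plan is to prove EXPTIME-completeness in two parts: membership in EXPTIME and EXPTIME-hardness.

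\medskip

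\textbf{Membership in EXPTIME.} By Theorem~\ref{theorem:atg-on-graphs-are-determined}, for every state $s \in S$ we have $\VAL^{\Tt}(s) = \VAL^{\hTt}(s,[s])$, so it suffices to solve the average-time game on the boundary region graph $\hTt$. By Proposition~\ref{proposition:brg-is-finite}, the reachable part of $\hTt$ from any configuration is finite; more importantly, I would argue that the whole relevant arena has size exponential in the size of $\Tt$: the regions $\Rr$ are exponential in the number of clocks and the encoding of the constants (classical region-automaton bound), and the states $(s,R) \in \hS$ reachable from a corner configuration are exactly the corner-point abstraction $\CorA(\Tt)$ (by the proposition following Proposition~\ref{proposition:brg-is-finite}), whose vertices are pairs of a region with one of its finitely many (at most exponentially many) corners, and whose edges carry natural-number delays bounded by $k$. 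Since Lemma~\ref{lemma:atg-regionally-constant} tells us the value is regionally constant, computing $\VAL^{\Tt}(s)$ for an arbitrary starting state reduces to computing the value of the average-price game on this exponential-size finite graph (with integer prices given by the delays). By Theorem~\ref{theorem:apg-determined} together with the fact that average-price games on finite graphs are solvable in pseudo-polynomial time (in particular in time polynomial in the graph size and the largest price, hence polynomial in the exponential-size input here), the whole procedure runs in deterministic exponential time. I would spell out the size bounds carefully, as that is the only real content of the upper bound.

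\medskip

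\textbf{EXPTIME-hardness.} Here I would reduce from an EXPTIME-complete problem already known to reduce to timed games with two clocks, namely the countdown-game / reachability-time-game lower bound of~\cite{JT07,BHPR07}, which establishes EXPTIME-hardness of reachability-time games on timed automata with two clocks. The task is to turn a reachability-time game into an average-time game preserving a threshold question. The standard trick: take the reachability-time game, and from the target location add a gadget that loops back to the initial configuration (resetting clocks appropriately) so that the play becomes infinite and the long-run average time per transition is governed by the accumulated reachability time divided by the number of transitions in one ``lap'' around the gadget. By padding the gadget with a controllable number of zero-time (or unit-time) transitions one can normalise so that $\VAL$ of the average-time game is below a computed bound $B$ iff the reachability time was below the original threshold. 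One must be careful that neither player can profit by deviating from the intended lap structure — typically one forces progress by clock constraints that make any attempt to stall or escape the gadget either impossible or strictly worse for the deviating player.

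\medskip

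\textbf{Main obstacle.} The delicate part is the hardness reduction: converting a finite-horizon (reachability-time) objective into a genuinely infinite-horizon (average-time) objective without letting either player exploit the averaging. In particular one has to ensure that Max cannot ``dilute'' a bad finite prefix by forcing arbitrarily long cheap suffixes, and that Min cannot escape a costly region forever; getting the loop gadget and its clock guards exactly right so that the average value faithfully tracks the reachability-time value is where the technical care is needed. The membership direction, by contrast, is essentially bookkeeping on top of Theorem~\ref{theorem:atg-on-graphs-are-determined}, Lemma~\ref{lemma:atg-regionally-constant}, and Theorem~\ref{theorem:apg-determined}.
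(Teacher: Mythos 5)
Your membership argument is sound but takes a different route from the paper: you build the (exponential-size) corner-point abstraction explicitly, invoke Lemma~\ref{lemma:atg-regionally-constant} to handle non-corner initial states, and run a pseudo-polynomial mean-payoff algorithm on the resulting finite graph. The paper instead observes that each configuration of the boundary region graph fits in polynomial space and each move is simulable in polynomial time, so the value is computable by an alternating PSPACE procedure, giving EXPTIME via $\text{APSPACE} = \text{EXPTIME}$. Both are valid; yours costs you the bookkeeping on region/corner counts and on the price bound, the paper's costs essentially nothing. This half is fine.

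The hardness half has a genuine gap. The paper does \emph{not} go through reachability-time games: it reduces directly from countdown games \cite{JLS07}, building a two-clock automaton in which one clock $b$ (never reset during the simulation) accumulates the total elapsed time, the transition to a sink location $\ast$ is guarded by $b = B_0$, and $\ast$ carries a self-loop of duration exactly $W$; the countdown structure (a fixed budget $B_0$ that must be hit exactly, every move consuming at least $W$ time) is precisely what makes the long-run average come out to a clean, a priori computable constant. Your proposed loop-back reduction from reachability-time games leaves unresolved exactly the points on which such a reduction tends to founder. First, the quantity being averaged is time \emph{per transition}, and the number of transitions Min uses to reach the target is not controlled by the reachability-time value, so ``accumulated reachability time divided by the number of transitions in one lap'' is not a function of the reachability-time value alone; your proposed fix (padding with extra transitions) would have to make the lap length independent of the path taken, which you do not explain how to do. Second, the standard device for turning a total-price objective into a mean-payoff threshold --- putting price $-B$ on the loop-back edge so that a lap has nonpositive total iff the reachability price is at most $B$ --- is unavailable here, because in an average-\emph{time} game every transition weight is a nonnegative duration. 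Third, you must assign a definite (and correctly ordered) average-time value to plays in which the target is never reached, whereas in the reachability-time game such plays simply have value $\infty$; saying that one ``forces progress by clock constraints'' names the problem rather than solving it. These are not presentational issues: as stated, the reduction does not yield a threshold-preserving map, and the paper's choice of countdown games as the source problem is precisely what circumvents all three obstacles.
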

\begin{proof}
  From Theorem~\ref{theorem:atg-on-graphs-are-determined}
  we know that in order to solve an average-time game starting from an
  initial state of a timed automaton, it is sufficient to solve the
  average-time game on the set of states of the boundary region graph of
  the automaton that are reachable from the initial state.
  Observe that every region, and hence also every configuration of
  the game, can be represented in space polynomial in the size of the
  encoding of the timed automaton and of the encoding of the initial
  state, and that every move of the game can be simulated in polynomial
  time.  
  Therefore, the value of the game can be computed by a straightforward
  alternating PSPACE algorithm, and hence the problem is in EXPTIME
  because APSPACE $=$ EXPTIME.
  
  In order to prove EXPTIME-hardness of solving average-time games on
  timed automata with two clocks, we reduce the EXPTIME-complete
  problem of solving countdown games~\cite{JLS07} to it. 
  Let $G = (N, M, \pi, n_0, B_0)$ be a countdown game, where $N$ is a 
  finite set of nodes, $M \subseteq N \times N$ is a set of moves, 
  $\pi : M \to \Npos$ assigns a~positive integer number to every move,
  and $(n_0, B_0) \in N \times \Npos$ is the initial configuration. 
  
  W.l.o.g we assume that there is an integer $W$ such that 
  $\pi(n_1, n_2) \geq W$ for every move $(n_1, n_2) \in M$.
  $(n, B) \in N \times \Npos$, first player~1 chooses a number 
  $p \in \Npos$, such that $p \leq B$ and $\pi(n, n') = p$ for some
  move $(n, n') \in M$, and then player~2 chooses a move 
  $(n, n'') \in M$, such that $\pi(n, n'') = p$;
  the new configuration is then $(n'', B - p)$. 
  Player~1 wins a play of the game when a configuration $(n, 0)$ is
  reached, and he loses (i.e., player~2 wins) when a configuration
  $(n, B)$ is reached in which player~1 is stuck, i.e., for all moves
  $(n, n') \in M$, we have $\pi(n, n') > B$.
  
  We define the timed automaton 
  $\Tt_G = (L, C, S, A, E, \delta, \xi, F)$ by setting 
  $C = \eset{b, c}$; $S = L \times (\REALS{B_0})^2$; 
  $A = \eset{*} \cup P \cup M$, where $P = \pi(M)$, the image of the
  function $\pi : M \to \Npos$;  
  \begin{eqnarray*}
    L & = & \eset{*} \cup N \cup 
    \Set{(n, p) \: : \: \text{ there is } (n, n') \in M,
      \text{s.t. } \pi(n, n') = p};
    \\
    E(a) & = & 
    \begin{cases}
      \set{(n, \nu) \: : \: n \in N \text{ and } \nu(b) = B_0}
      ~\text{if $a = *$},
      \\
      \set{(*, \nu) \: : \: \nu(c) = W}
      ~\text{if $a = *$},
      \\
      \Set{(n, \nu) \: : \: 
        \exists (n, n') \in M, \text{s.t. } \pi(n, n') = p 
        \text{ and } \nu(c) = 0}
      ~\text{if $a = p \in P$},
      \\
      \Set{\big((n, p), \nu\big) \: : \:
        \pi(n, n') = p \text{ and } \nu(c) = p} 
      ~\text{if $a = (n, n') \in M$},
    \end{cases}
    \\
    \delta(\ell, a) & = & 
    \begin{cases}
      * & \text{if $\ell = n \in N$ and $a = *$},
      \\
      * & \text{if $\ell = *$ and $a = *$},
      \\
      (n, p) & \text{if $\ell = n \in N$ and $a = p \in P$}, 
      \\
      n' & \text{if $\ell = (n, p) \in N \times P$ and 
        $a = (n, n') \in M$}; 
    \end{cases}
  \end{eqnarray*}
  $\xi(a) = \eset{c}$, for every $a \in A \setminus \eset{*}$ and
  $\xi(*) = \eset{b, c}$.
  Note that the timed automaton $\Tt_G$ has only two clocks and that
  the clock $b$ is reset only in the special location $*$.
  
  Finally, we define the average-time game on timed game automaton
  $\Gamma_G = (\Tt_G, L_1, L_2)$ by setting $L_1 = N$ and 
  $L_2 = L \setminus L_1$.
  It is routine to verify that value of the average-time game at the 
  state $(n_0, (0, 0)) \in S$ is $W$ in the average-time game
  on~$\Gamma_G$ if and only if player~1 has a winning strategy 
  (from the initial configuration $(n_0, B_0)$)
  in the countdown game~$G$.  
  \qed
\end{proof}
\vskip8pt
\noindent {\bf Acknowledgments.}
This work was partially supported by the EPSRC grants EP/E022030/1 and
EP/F001096/1. 

\bibliographystyle{plain} 
\bibliography{papers}

\newpage
\appendix
\section{Proof of Proposition~\ref{proposition:time-is-regionally-simple}}
In order to prove this proposition, we need the following result.

\begin{proposition}[\cite{JT07,Tri09}]
  \label{proposition:atg-t-alpha-simple}
  Let $\alpha \in \aA$ and regions $R, R', R'' \in \Rr$ be such that 
  $R \xrightarrow{R''}_\alpha R'$.
  If $F : \pQ(R') \to \Real$ is simple
  then $F^\oplus_{(\alpha, R'')}: \pQ(R) \to \Real$,
  defined as $(s, R) \mapsto t(s, \alpha) + F(\SUCC(q, (\alpha, R'')))$,
  is simple.   
\end{proposition}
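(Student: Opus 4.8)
The plan is to unfold the definition of $F^\oplus_{(\alpha, R'')}$ and then run a short case analysis on the shape of the simple function $F$. Write $\alpha = (b, c, a)$, fix an arbitrary $q = (s, R) \in \pQ(R)$ with $s = (\ell, \nu)$, and abbreviate $t = t(s, \alpha)$. First I would extract two consequences of the hypothesis $R \xrightarrow{R''}_\alpha R'$. (i) The relation entails that $s(c) \le b$ for every $s \in R$ (this is what makes the delay towards the boundary $c = b$ meaningful), so $t = b - s(c)$; in particular $q \mapsto t$ is already simple on $\pQ(R)$. (ii) Admissibility of the move also guarantees that $\nu + t$ is a genuine $k$-bounded clock valuation, so $\SUCC(q, (\alpha, R'')) = (s^\dagger, R') \in \pQ(R')$ is well defined, and unfolding the definitions gives $s^\dagger = \SUCC\big(s + t,\, a\big) = \big(\delta(\ell, a),\, \RESET(\nu + t, \varrho(a))\big)$; write $\nu^\dagger = \RESET(\nu + t, \varrho(a))$ for its clock valuation. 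Hence $F(\SUCC(q, (\alpha, R''))) = F(s^\dagger, R')$ makes sense.

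Now distinguish the two forms a simple function on $\pQ(R')$ can take. If $F \equiv e$ for some $e \in \Int$, then $F(\SUCC(q, (\alpha, R''))) = e$ and $F^\oplus_{(\alpha, R'')}(q) = (b - s(c)) + e = (b + e) - s(c)$, which is simple. Otherwise $F(s', R') = e - s'(c')$ for some $e \in \Int$ and $c' \in C$, and I would split on whether $a$ resets $c'$. If $c' \in \varrho(a)$, then $\nu^\dagger(c') = 0$, so $F(\SUCC(q, (\alpha, R''))) = e$ is again constant and $F^\oplus_{(\alpha, R'')}(q) = (b + e) - s(c)$ is simple. If $c' \notin \varrho(a)$, then $\nu^\dagger(c') = (\nu + t)(c') = s(c') + b - s(c)$, hence $F(\SUCC(q, (\alpha, R''))) = e - s(c') - b + s(c)$ and
\[
F^\oplus_{(\alpha, R'')}(q) = \big(b - s(c)\big) + \big(e - s(c') - b + s(c)\big) = e - s(c'),
\]
which is simple as well. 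Since these cases are exhaustive, $F^\oplus_{(\alpha, R'')}$ is simple on $\pQ(R)$.

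There is no serious obstacle here; the argument is essentially bookkeeping. The one step that needs genuine care is the cancellation in the last case: the $-s(c)$ contributed by the delay $t = b - s(c)$ is exactly annihilated by the $+s(c)$ produced when the unreset clock $c'$ is propagated through that delay. This is precisely why the proposition concerns the \emph{sum} $t(s,\alpha) + F(\SUCC(q, (\alpha, R'')))$ and not the two summands separately --- each summand alone is only regionally simple. I would also point out that the hypothesis $R \xrightarrow{R''}_\alpha R'$ is genuinely used: it rules out the degenerate branch $t(s, \alpha) = 0$ of the definition of $t(\cdot,\cdot)$ and keeps $\nu + t$ inside $\Vv$, whereas the three alternatives in the definition of $R \xrightarrow{R''}_\alpha R'$ play no role beyond providing the transition $R'' \xrightarrow{a} R'$, since the delay and the successor formula coincide in all three.
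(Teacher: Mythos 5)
Your proof is correct: the case split on the two forms of a simple function, combined with the sub-split on whether $a$ resets the clock $c'$, and the cancellation $\big(b - s(c)\big) + \big(e - s(c') - b + s(c)\big) = e - s(c')$ is exactly the computation needed, and your observation that the sum is simple even though $F(\SUCC(q,(\alpha,R'')))$ alone need not be is the right thing to highlight. Note that the paper itself does not prove this proposition --- it is imported from \cite{JT07,Tri09} --- so there is no in-paper argument to compare against; your direct verification is the standard one.
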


\begin{proof}[Proof of Proposition~\ref{proposition:time-is-regionally-simple}]
  Let $\mu \in \simpleMin$ and $\chi \in \simpleMax$.
  We prove this lemma by induction on the value of $n$. 
  The base case for $n = 0$ is trivial. 
  Assume that for every $\mu \in \simpleMin$ and 
  $\chi \in \simpleMax$ the function 
  $\TIME(\RUN_{k}(\cdot, \mu, \chi)): \cQ \to \Rplus$ is regionally  
  simple. 
  To prove this proposition we now need to show that for 
  $\mu \in \simpleMin$ and $\chi \in \simpleMax$ the function 
  $\TIME(\RUN_{k+1}(\cdot, \mu, \chi))$ is
  regionally simple.
  
  Let the strategies $\mu' \in \simpleMin$ and $\chi' \in
  \simpleMax$ be such that for every $q \in \cQ$ the run
  $\RUN_{k}(\SUCC(q, \mu, \chi), \mu', \chi')$ be the length $k$
  suffix of $\RUN_{k+1}(q, \mu, \chi)$.
  From inductive hypothesis we have that 
  $\RUN_{k}(\cdot, \mu', \chi')$ is regionally simple.
  Assume that $R \in \Rr_\mMIN$ and let 
  $\BS{\mu}(\seq{q}) = (\alpha, R'')$ for
  every $q \in \cQ(R)$.
  The treatment for the case where $R \in \Rr_\mMAX$ is similar.
  Now for every $q = (s, R) \in \cQ(R)$ we have that 
  $\TIME(\RUN_{k+1}(q, \mu, \chi)) =
  t(s, \alpha) + \RUN_{k}(\SUCC(q, (\alpha, R'')), \mu', \chi')$, 
  which from Proposition~\ref{proposition:atg-t-alpha-simple} is a
  simple function.
  \qed
\end{proof}

\section{Proof of Proposition~\ref{prop:rcbs-are-regionally-constant}}
\begin{proof}
  Let $\mu \in \simpleMin$, $\chi \in \simpleMax$ 
  and $q = (s, R), q' = (s', R) \in \cQ(R)$. 
  We have
  \begin{multline*}
    \AVERAGETIME_\mMIN(q, \mu, \chi) - \AVERAGETIME_\mMIN(q', \mu,
    \chi) \\
    = \liminf_{n \to \infty} (1/n) \cdot 
    \TIME(\RUN_n(q, \mu,\chi)) - \liminf_{n \to \infty}(1/n)\cdot
    \TIME(\RUN_n(q', \mu, \chi)) \\ 
    = \liminf_{n \to \infty} (1/n) \cdot 
    \left(b - s(c) - b + s'(c)\right) 
    = \liminf_{n \to \infty} (1/n) \cdot \left(s'(c) - s(c)\right) = 0.
  \end{multline*}
  The first equality is by definition, the second follows from
  Proposition~\ref{proposition:time-is-regionally-simple}, and the
  last two equalities are trivial.
  In a similar manner we show that $\AVERAGETIME_\mMAX(q, \mu, \chi) =
  \AVERAGETIME_\mMAX(q', \mu, \chi)$. 
\qed
\end{proof}

\section{Proof of Proposition~\ref{proposition:nstep-type-preserving-is-better}}
In order to prove this
Proposition~\ref{proposition:nstep-type-preserving-is-better} and
Proposition~\ref{proposition:nstep-boundary-better-against-epsilon}, 
we need the following result.
\begin{proposition}[\cite{JT07,Tri09}]
  \label{proposition:atg-nondecreasing}
  Let $a \in A$ and regions $R, R', R'' \in \Rr$ be such that 
  $R \xrightarrow{}_{*} R'' \xrightarrow{a} R'$.
  If $F : \pQ(R') \to \Real$ is simple then for every 
  $q = (s, R) \in \pQ(R)$, function 
  $F^\oplus_{(q, R'', a)}: I \to \Real$, defined as 
  $t \mapsto  t + F(\SUCC(q, (t, R'', a)))$, is continuous and
  nondecreasing, where    
  $I = \set{t \in \Rplus \: : \: (s + t) \in \CLOSOP(R'')}$.  
\end{proposition}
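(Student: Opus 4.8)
The plan is to unwind the definition of a \emph{simple} function and handle its two possible shapes, splitting the second shape further according to whether the clock that names $F$ is reset by the action~$a$. Before that, I would record the elementary geometric fact that makes the statement well posed: since time elapse preserves the location and $R \xrightarrow{}_{*} R''$, the regions $R$ and $R''$ share a location, so for $q = ((\ell,\nu),R)$ the set $I = \set{t \in \Rplus : s + t \in \CLOSOP(R'')}$ is exactly the set of $t \geq 0$ with $\nu + t$ in the closed convex clock zone underlying $R''$. As $t$ grows, $\nu + t$ traces a ray with every coordinate increasing at unit rate, and its intersection with a closed convex (and, by $k$-boundedness, bounded) zone is a closed interval; hence $I$ is a closed interval, and ``continuous on $I$'' is the correct assertion.

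Now fix $q = (s,R) = ((\ell,\nu),R) \in \pQ(R)$, let $R'' \xrightarrow{a} R'$, and let $s'_t$ denote the configuration component of $\SUCC(q,(t,R'',a))$, i.e.\ $s'_t = \SUCC(s+t,a) = (\delta(\ell,a),\, \RESET(\nu+t,\varrho(a)))$, so that $F^\oplus_{(q,R'',a)}(t) = t + F(s'_t, R')$. If $F(\cdot,R') \equiv e$ with $e \in \Int$, then $F^\oplus_{(q,R'',a)}(t) = t + e$, which is continuous and strictly increasing on $I$. If instead $F(s',R') = e - s'(c)$ for some $c \in C$ and $e \in \Int$, there are two sub-cases. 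If $c \in \varrho(a)$, the reset forces $(s'_t)(c) = 0$, so again $F^\oplus_{(q,R'',a)}(t) = t + e$. If $c \notin \varrho(a)$, the action leaves $c$ untouched, so $(s'_t)(c) = (\nu+t)(c) = s(c) + t$, and therefore $F^\oplus_{(q,R'',a)}(t) = t + (e - s(c) - t) = e - s(c)$, which is constant in $t$.

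In every case $F^\oplus_{(q,R'',a)}$ restricted to $I$ is affine in $t$ with slope $0$ or $1$, hence continuous and nondecreasing, which is the claim. I do not expect a genuine obstacle here; the only points needing care are (i) confirming that $I$ is a closed interval, so that the continuity assertion is meaningful, and (ii) bookkeeping the reset correctly --- namely that the value of clock $c$ in $\SUCC(s+t,a)$ equals $0$ precisely when $c \in \varrho(a)$ and equals $s(c)+t$ otherwise. The same elementary case analysis is what underlies the companion Proposition~\ref{proposition:atg-t-alpha-simple}, where the elapsed time is instead pinned down by a boundary timed action.
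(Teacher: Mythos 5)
Your case analysis is correct: unwinding the definition of a simple function and splitting on whether the naming clock $c$ is in $\varrho(a)$ shows $F^\oplus_{(q,R'',a)}$ is affine with slope $1$ (constant $F$, or $c$ reset) or slope $0$ ($c$ not reset), hence continuous and nondecreasing on the interval $I$. Note that the paper itself gives no proof of this proposition --- it is imported from \cite{JT07,Tri09} --- but your argument is the standard one underlying that result, and your side remarks (that $I$ is a closed interval because the diagonal ray meets the closed convex bounded zone $\CLOSOP(R'')$ in an interval, and the reset bookkeeping) are exactly the points that need checking.
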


\begin{proof}[Proof of
  Proposition~\ref{proposition:nstep-type-preserving-is-better}] 
  The proof is by induction on $n$.
  The base case, when $n = 0$, is trivial. 
  In the rest of the proof we show that for 
  $\chi \in \simpleMax$, $\mu \in \cSigmaMin$, and
  a configuration $q = (s, R) \in \cQ$, we have that 
  $\TIME(\RUN_{k+1}(q, \mu, \chi)) \geq 
  \TIME(\RUN_{k+1}(q, \tpqs{\mu}{q}{\chi}, \chi))$ 
  assuming that the proposition holds for $n = k$.
  The proof for the case where $q \in \cQ_\mMAX$ is trivial. 
  In the rest of the proof we assume that 
  $q \in \cQ_\mMIN$.
  
  Let us fix $\chi \in \simpleMax$ and 
  $\mu \in \cSigmaMin$.
  Let $\RUN_{k+1}(q, \mu, \chi)$ and 
  $\RUN_{k+1}(q, \tpqs{\mu}{q}{\chi}, \chi)$  be 
  $\seq{q_0, \tau_1, q_1, \ldots, q_{k+1}}$ 
  and $\seq{q_0', \tau_1', q_1', \ldots, q_{k+1}'}$, respectively,
  where $q_0 = q_0' = q$. 
  Notice that by definition the run types of both runs are the same. 
  Hence for every index $i \leq k+1$
  we have $q_i = (s_i, R_i)$ and $q_i' = (s_i', R_i)$, and  
  for every index $i \leq k+1$ we have 
  $\tau_i = (t_i, R_i', a_i)$ and $\tau_i' = (t_i', R_i', a_i)$.
  
  Let $\cchi \in \simpleMax$ and $\mmu \in \cSigmaMin$ be such 
  that $\RUN_{k}(q_1, \mmu, \cchi)$ 
  be length $k$ suffix of $\RUN_{k+1}(q, \mu, \chi)$.
  Notice that we assume that $\cchi$ is type-preserving.
  It is easy to see that 
  \[
  \TIME(\RUN_{k+1}(q, \mu, \chi)) 
  = t_1 + \TIME(\RUN_k(q_1, \mmu, \cchi)).
  \] 
  From inductive hypothesis, we get that 
  \begin{equation}
    \TIME(\RUN_{k+1}(q, \mu, \chi)) 
    \geq 
    t_1 + \TIME(\RUN_k(q_1, \tpqs{\mmu}{q_1}{\cchi},
    \cchi)).~\label{atg-ashu1} 
  \end{equation}
  Since the strategies $\tpqs{\mmu}{q_1}{\cchi} \in \simpleMin$ and
  $\cchi \in \simpleMax$ are type-preserving, from 
  Proposition~\ref{proposition:time-is-regionally-simple} we get that 
  $\TIME(\RUN_k(\cdot, \tpqs{\mmu}{q_1}{\cchi}, \cchi))$ is
  regionally simple. 
  Let us denote the restriction of this function on domain 
  $\cQ(R_1)$ by $\Ff: \cQ(R_1) \to \Real$. 
  Let us define the partial function  
  $\Ff^\oplus_{(q, R_1', a)} : \Rplus \rightharpoondown \Real$ as
  $t \mapsto  t + \Ff(\SUCC(q, (t, R'', a)))$, 
  for all $t \in \Rplus$, such that $(s+t) \in \CLOSOP(R_1')$.
  The following inequality follows from (\ref{atg-ashu1}): 
  \[
  \TIME(\RUN_{k+1}(q, \mu, \chi)) 
  \geq t_1 + \Ff(q_1) \geq 
  \inf_{t}  \Set{\Ff^\oplus_{(q, R_1', a)}(t) 
    \::\: s + t \in \CLOSOP(R_1')}. 
  \]
  Since $\tpqs{\mu}{q}{\chi}$ is a type-preserving boundary strategy
  of player Min, from equation (\ref{e:brg-strat-inf}), we know that
  $t_1' = \inf \set{ t \::\: s + t \in \CLOSOP(R_1')}$.
  Moreover from Proposition~\ref{proposition:atg-nondecreasing} we
  have that $\Ff^\oplus_{(q, R_1', a)}$ is continuous and
  nondecreasing on the domain 
  ${\set{t \in \Rplus \: : \: (s + t) \in \CLOSOP(R'')}}$.
  Hence $\Ff^\oplus_{(q, R_1', a)}(t_1') 
  = \inf_{t}  \Set{\Ff^\oplus_{(q, R_1', a)}(t) 
    \::\: s + t \in \CLOSOP(R_1')}$. 
  Combining these facts, we get the following inequalities: 
  \[
  \TIME(\RUN_{k+1}(q, \mu, \chi)) 
  \geq  \Ff^\oplus_{(q, R_1', a)}(t_1')
  =  t_1' + \TIME(\RUN_k(q_1', \tpqs{\mmu}{q_1}{\cchi}, \cchi)) 
  \]
  
  Since  
  $\RUN_k(q_1', \tpqs{\mmu}{q_1}{\cchi}, \cchi)$
  is length $k$ suffix of  
  $\RUN_{k+1}(q, \tpqs{\mu}{q}{\chi}, \chi)$, we get the desired
  inequality. 
\qed
\end{proof}

\section{Proof of
  Proposition~\ref{proposition:nstep-boundary-better-against-epsilon}}
\begin{proof}
  The proof is by induction on $n$.
  The base case, when $n = 0$, is trivial. 
  In the rest of the proof we show that for 
  $\chi \in \simpleMax$, $\mu \in \cSigmaMin$, $\varepsilon > 0$, 
  $\chi_\varepsilon \in \EcMax{\chi}{\varepsilon}$, 
  and a configuration $q = (s, R) \in \cQ$, we have that 
  $\TIME(\RUN_{k+1}(q, \mu, \chi_\varepsilon)) \geq 
  \TIME(\RUN_{k+1}(q, \tpqs{\mu}{q}{\chi_\varepsilon}, \chi)) -k\cdot
  \varepsilon$,   
  assuming that the proposition holds for $n = k$.
  
  Let us fix $\chi \in \simpleMax$,  
  $\mu \in \cSigmaMin$, $\varepsilon > 0$, and 
  $\chi_\varepsilon \in \EcMax{\chi}{\varepsilon}$.
  Let $\RUN_{k+1}(q, \mu, \chi_\varepsilon)$ and 
  $\RUN_{k+1}(q, \tpqs{\mu}{q}{\chi_\varepsilon}, \chi)$  be 
  $\seq{q_0, \tau_1, q_1, \ldots, q_{k+1}}$ 
  and $\seq{q_0', \tau_1', q_1', \ldots, q_{k+1}'}$, respectively,
  where $q_0 = q_0' = q$. 
  Notice that by definition the run types of both runs are the same. 
  Hence for every index $i \leq k+1$
  we have $q_i = (s_i, R_i)$ and $q_i' = (s_i', R_i)$, and  
  for every index $i \leq k+1$ we have 
  $\tau_i = (t_i, R_i', a_i)$ and $\tau_i' = (t_i', R_i', a_i)$.
  
  Let $\cchi \in \simpleMax$ and $\mmu \in \cSigmaMin$ be such 
  that $\RUN_{k}(q_1, \mmu, \cchi_\varepsilon)$ 
  be length $k$ suffix of  
  $\RUN_{k+1}(q, \mu, \chi_\varepsilon)$.
  Notice that we assume that $\cchi$ is type-preserving.
  It is easy to see that 
  \[
  \TIME(\RUN_{k+1}(q, \mu, \chi_\varepsilon)) 
  = t_1 + \TIME(\RUN_k(q_1, \mmu, \cchi_\varepsilon)).
  \] 
  From inductive hypothesis, we get that 
  \begin{equation}
    \TIME(\RUN_{k+1}(q, \mu, \chi_\varepsilon)) 
    \geq 
    t_1 + \TIME(\RUN_k(q_1, \tpqs{\mmu}{q_1}{\cchi_\varepsilon},
    \cchi)) -k \cdot \varepsilon.~\label{atg-ashu-proof2} 
  \end{equation}
  Since the strategies $\tpqs{\mmu}{q_1}{\cchi_\varepsilon} \in
  \simpleMin$ and 
  $\cchi \in \simpleMax$ are type-preserving boundary strategies, from
  Proposition~\ref{proposition:time-is-regionally-simple} we get that 
  $\TIME(\RUN_k(\cdot, \tpqs{\mmu}{q_1}{\cchi_\varepsilon}, \cchi))$ is
  regionally simple. 
  Let us denote the restriction of this function on domain 
  $\cQ(R_1)$ by $\Ff: \cQ(R_1) \to \Real$. 
  Let us define the partial function  
  $\Ff^\oplus_{(q, R_1', a)} : \Rplus \rightharpoondown \Real$ as
  $t \mapsto  t + \Ff(\SUCC(q, (t, R'', a)))$, 
  for all $t \in \Rplus$, such that $(s+t) \in \CLOSOP(R_1')$.
  The following inequality follows from (\ref{atg-ashu-proof2}): 
  \[
  \TIME(\RUN_{k+1}(q, \mu, \chi_\varepsilon)) 
  \geq t_1 + \Ff(q_1) - k \cdot \varepsilon \geq 
  \inf_{t}  \Set{\Ff^\oplus_{(q, R_1', a)}(t) 
    \::\: s + t \in \CLOSOP(R_1')} - k\cdot \varepsilon. 
  \]
  We need to consider two cases:~$q \in \cQ_\mMIN$ and $q \in
  \cQ_\mMAX$. 
  \begin{itemize}
  \item 
    Assume that $q \in \cQ_\mMIN$.
    Since $\tpqs{\mu}{q}{\chi_\varepsilon}$ is a type-preserving
    boundary strategy 
    of player Min, from equation (\ref{e:brg-strat-inf}), we know that
    $t_1' = \inf \set{ t \::\: {s + t \in \CLOSOP(R_1')}}$.
    Moreover from Proposition~\ref{proposition:atg-nondecreasing} we
    have that $\Ff^\oplus_{(q, R_1', a)}$ is continuous and
    nondecreasing on the domain 
    $\set{t \in \Rplus \: : \: {(s + t) \in \CLOSOP(R'')}}$.
    Hence $\Ff^\oplus_{(q, R_1', a)}(t_1') 
    = \inf_{t}  \Set{\Ff^\oplus_{(q, R_1', a)}(t) 
      \::\: s + t \in \CLOSOP(R_1')}$. 
    Combining these facts, we get the following inequalities: 
    \[
    \TIME(\RUN_{k+1}(q, \mu, \chi_\varepsilon)) 
    \geq  t_1' + \TIME(\RUN_k(q_1',
    \tpqs{\mmu}{q_1}{\cchi_\varepsilon}, \cchi)) - k \cdot \varepsilon. 
    \]
    Since  
    $\RUN_k(q_1', \tpqs{\mmu}{q_1}{\cchi_\varepsilon}, \cchi)$
    is length $k$ suffix of  
    $\RUN_{k+1}(q, \tpqs{\mu}{q}{\chi_\varepsilon}, \chi)$, we get the
    following inequality:
    \begin{eqnarray*}
      \TIME(\RUN_{k+1}(q, \mu, \chi_\varepsilon)) 
      &\geq& \TIME(\RUN_{k+1}(q, \tpqs{\mu}{q_1}{\chi_\varepsilon}, \chi))
      - k \cdot \varepsilon\\
      &\geq & \TIME(\RUN_{k+1}(q, \tpqs{\mu}{q_1}{\chi_\varepsilon}, \chi))
      - (k+1) \cdot \varepsilon,
    \end{eqnarray*}
    as required.
  \item 
    Assume that $q \in \cQ_\mMAX$.
    So far we have shown that 
    \begin{equation}
      \TIME(\RUN_{k+1}(q, \mu, \chi_\varepsilon)) 
      \geq t_1 + \Ff(q_1) - k \cdot \varepsilon.
      \label{eqn:ashu-another-interesting-eqn}
    \end{equation}
    Since $\Ff$ is a simple function let
    $\Ff((s_1, R_1)) =  b - s_1(c)$ for all 
    $(s_1, R_1) \in \cQ(R_1)$. 
    For all $t \in \Rplus$ such that $s + t \in R_1'$ we have the
    following observation.
    \begin{equation}
    t + \Ff((\SUCC(s, (t, a_1))) = 
    \begin{cases}
      t + b &~\text{ if $c \in \xi(a_1)$ }\\
      b - s(c)&~\text{ otherwise.}
    \end{cases}
    \label{eqn:ashu-atg-simple-eqn}
    \end{equation}
    By Definition~\ref{def:epsilon-close-strat} we know that $t_1 \geq
    t_1' - \varepsilon$. 
    Combining this with (\ref{eqn:ashu-atg-simple-eqn}) we get that 
    \[
    t_1 + \Ff(q_1) \geq t_1' + \Ff(q_1') - \varepsilon.
    \]
    We can then rewrite (\ref{eqn:ashu-another-interesting-eqn}) as
    the following:
    \[
    \TIME(\RUN_{k+1}(q, \mu, \chi_\varepsilon)) 
    \geq t_1' + \Ff(q_1') - (k+1)\cdot \varepsilon.
    \]
    The term $\Ff(q_1')$ represents the sum of the times of
    $\RUN_k(q_1', \tpqs{\mmu}{q_1}{\cchi_\varepsilon}, \cchi)$.
    Since  
    $\RUN_k(q_1', \tpqs{\mmu}{q_1}{\cchi_\varepsilon}, \cchi)$
    is length $k$ suffix of
    $\RUN_{k+1}(q, \tpqs{\mu}{q}{\chi_\varepsilon}, \chi)$, we get the
     inequality
    \[
    \TIME(\RUN_{k+1}(q, \mu, \chi_\varepsilon)) 
    \geq \TIME(\RUN_{k+1}(q, \tpqs{\mu}{q}{\chi_\varepsilon}, \chi)) 
    - (k+1)\cdot \varepsilon,
    \]
    as required.
  \end{itemize}
\qed
\end{proof}

\end{document}